					\newcommand{\sTD}[1]{\overline{\underline{#1}}}
\newcommand{\sT}[1]{\overline{#1}}
\newcommand{\sD}[1]{\underline{#1}}
\newcommand{\sTDL}{\,\overbar{\sD{\Lambda}}}
\newcommand{\sTL}{{\,\vphantom{\sTDL}\overbar{\Lambda}}}
\newcommand{\sDL}{{\,\vphantom{\sTDL}\sD{\Lambda}}}
\newcommand{\sLs}{ { {\,\vphantom{\sTDL}\Lambda} }^{0} }
\newcommand{\sL}{{\vphantom{\sTDL}\Lambda}}
\newcommand{\sTd}{\sT{d}}
\newcommand{\sDd}{\sD{d}}
\newcommand{\omegahat}{\hat{\omega}}
\newcommand{\sTrig}{
	\begin{tikzpicture}[scale=.145, remember picture]	\begin{scope}[overlay]
		\draw[semithick] (0,.4) circle [radius=1];
		\clip (0,.4) circle [radius=1];
		\draw[thick] (0,-1.45) -- (0,1.45);
	\end{scope}
	\end{tikzpicture}
}
\newcommand{\sCercle}{
	\begin{tikzpicture}[scale=.145, remember picture]	\begin{scope}[overlay]
		\draw[semithick] (0,.4) circle [radius=1];
		\clip (0,.4) circle [radius=1];
		\draw[thick] (-1,.4) -- (1,.4);
	\end{scope}
	\end{tikzpicture}
}
\newcommand{\sDouble}{
	\begin{tikzpicture}[scale=.145, remember picture]	\begin{scope}[overlay]
		\draw[semithick] (0,.4) circle [radius=1];
		\clip (0,.4) circle [radius=1];
		\draw[thick] (0,-1.45) -- (0,1.45);
		\draw[thick] (-1,.4) -- (1,.4);
	\end{scope}
	\end{tikzpicture}
}
\newcommand{\sFill}[1]{
		\begin{tikzpicture}[scale=.145, remember picture]	\begin{scope}[overlay]
		\draw[clip] (0,.4) circle [radius=1];
	\end{scope}
	\begin{scope}[overlay]
		\draw[clip] (0,.4) circle [radius=1];
		\node[scale=.6] at (0,.4) {${#1}$};
	\end{scope}
	\end{tikzpicture}
}
\newcommand{\yT}{\none[\sTrig]}
\newcommand{\yC}{\none[\sCercle]}
\newcommand{\yTC}{\none[\sDouble]}
\newcommand{\yF}[1]{\none[{\sFill{#1}}]}
\newcommand{\superYsmall}[1]{\, \scalebox{.5}{\begin{ytableau}#1 \end{ytableau} }}
\newcommand{\superY}[1]{\begin{array}{c} \scalebox{1}{\begin{ytableau}#1 \end{ytableau} } \end{array}}
\newcommand{\superYbig}[1]{\begin{array}{c} \scalebox{2}{\begin{ytableau}#1 \end{ytableau} } \end{array}}
\newcommand{\SPar}[1]{\text{SPar}(#1)}
\newcommand{\N}[1]{$\mathcal{N}=#1$}
\newcommand{\overbar}[1]{\mkern 1.5mu\overline{\mkern-1.5mu#1\mkern-1.5mu}\mkern 1.5mu}
\newcommand{\anote}[2][Sans Nom]{\begin{note}[#1] \co{#2} \end{note}}
\newtheorem{theorem}{Theorem}[section]
\newtheorem{corollary}{Corollary}[section]
\newtheorem{proposition}{Proposition}[section]
\newtheorem{lemma}[theorem]{Lemma}
\theoremstyle{definition}
\newtheorem{example}{Example}[section]
\newtheorem{definition}{Definition}[section]
\theoremstyle{remark}
\newtheorem{remark}{Remark}[section]
\newtheorem{note}{Note}[section]
\def\co#1{{\color{red}#1}}
\def\cy#1{{\color{cyan}#1}}
\def\LAV#1{{\color{violet}#1}}
\let\ll\llbracket
\let\rr\rrbracket
\let\la\lambda
\let\La\Lambda
\let\a\alpha
\let\ti\tilde
\def\O{{\mathcal O}}
\def\Nm{{\mathcal N}}
\def\D{{\mathcal D}}
\let\u\underline
\let\o\overline
\let\w\wedge
\let\nb\nonumber
\let\d\partial
\def\sTt{\sT{\tau}}
\def\sDt{\sD{\tau}}
\let\Om\Omega
\let\om\omega
\let\ta\theta
\def\beq{\begin{equation}}
\def\eeq{\end{equation}}
\let\ep\epsilon
\let\l\left
\let\r\right
	\title{$\mathcal{N}\geq  2$ symmetric superpolynomials}
\author{ L. Alarie-V\'ezina\thanks{D\'epartement de physique, de g\'enie physique et
d'optique, Universit\'e Laval,  Qu\'ebec, Canada,  G1V 0A6; ludovic.alarie-vezina.1@ulaval.ca},
 L. Lapointe\thanks{Instituto de Matem\'atica y F\'{\i}sica, Universidad de
Talca, 2 norte 685, Talca, Chile; lapointe@inst-mat.utalca.cl }
and  P. Mathieu \thanks{D\'epartement de physique, de g\'enie physique et
d'optique, Universit\'e Laval,  Qu\'ebec, Canada,  G1V 0A6; pmathieu@phy.ulaval.ca}
			}
\begin{document}	

	\maketitle
\begin{abstract}
The theory of symmetric functions has been extended to the case where each variable is paired with an anticommuting one. The resulting expressions, dubbed superpolynomials, provide the natural  ${\mathcal N}=1$ supersymmetric version of the classical bases of symmetric functions.  Here we consider the case where more than one independent anticommuting variable is attached to each ordinary variable.  First, the ${\mathcal N}=2$ super-version of the monomial, elementary, homogeneous symmetric functions, as well as the power sums, are constructed systematically (using an exterior-differential formalism for the multiplicative bases), these functions being now indexed by a novel type of superpartitions.
Moreover, the scalar product of power sums turns out to have a natural ${\mathcal N}=2$ generalization which preserves the duality between the monomial and homogeneous bases. All these results are then generalized to an arbitrary value of ${\mathcal N}$. Finally, for ${\mathcal N}=2$, the scalar product and the homogeneous functions are shown to have a one-parameter deformation, a result that prepares the ground for the yet-to-be-defined ${\mathcal N}=2$ Jack superpolynomials.
\end{abstract}

		\graphicspath{./figures}
	
\section{Introduction}
\subsection{The theory of ${\mathcal N}=1$ superpolynomials}
The classical theory of symmetric polynomials
turns out to have a rich and nontrivial extension when additional Grassmannian variables are involved. More precisely, in addition to be functions of  the usual variables $x_1,\cdots ,x_N$, these generalized polynomials (thereby dubbed superpolynomials) also depend upon anticommuting variables $\theta_1,\cdots,\theta_N$, where $\theta_i\theta_j=-\theta_j\theta_i$, (so that $\theta_i^2=0$). The ``symmetric'' requirement amounts to enforce the symmetry under the simultaneous interchange of pairs $(x_i,\theta_i)$ and $(x_j,\theta_j)$. Two simple examples of symmetric superpolynomials for $N=2$ are
\begin{equation} \label{exa}\theta_1x_1^2x_2+\theta_2x_2^2x_1\qquad  {\rm and} \qquad \theta_1\theta_2(x_1^2x_2-x_1x_2^2).\end{equation}
(The symbol indicating the number of variables of each type, $N$, should not be confused with that giving the number of supersymmetries, $\mathcal N$.)
Note that symmetric superpolynomials that do not depend upon anticommuting variables are ordinary symmetric polynomials (equivalently, these are ${\mathcal N}=0$ superpolynomials.\\

The classical bases of symmetric polynomials, namely, the monomial symmetric functions, the power sums, as well as the elementary and homogeneous symmetric functions,  all have a natural extension to the super-case \cite{classicalN1}. Even more remarkably, the Schur, Jack and Macdonald polynomials can also be generalized to superspace.
An amazing feature of this extension is that most of the properties of these symmetric polynomials  are  preserved or extended in a natural way, in spite of the complications
 brought in by the presence of anticommuting variables.\\
 
Let us substantiate the last statement (thereby fixing the targets for the program initiated here).  Like their ${\mathcal N}=0$ counterpart, the Jack superpolynomials are defined by two conditions: triangularity (in the monomial basis) and orthogonality. The latter can be defined either from
 the ``combinatorial'' scalar product (defined in terms of the power sums) \cite{DLMadv} or the ``physical'' one (defined by 
a Selberg-type integral) \cite{DLMsJack}.  The qualifier ``physical'' refers to 
the framework of an alternative characterization of the Jack superpolynomials as the eigenfunctions of a supersymmetric quantum-mechanical $N$-body problem. Among other noteworthy features of the Jack polynomials that are lifted to the super-level, we single out the remarkable clustering property  \cite{FJMM} at negative fractional values of the free parameter $\alpha$ (for the so-called admissible partitions) \cite{DLM0,DG}  and the connection with singular vectors of the (super)Virasoro algebra  \cite{MY,DLMsvir,ADM}.\\

Most of these results 
can be generalized to the Macdonald superpolynomials \cite{BF1,BF2}. These are specified by two parameters $q$ and $t$ (where the Jack limit is $q=t^\a,\ t\to 1$). 
 Setting $q=t$ yields a one-parameter family of Schur-like superpolynomials.
 It happens that these display remarkable
  combinatorial properties for two special cases: $t=0$ and $t\to \infty$. 
The Pieri rules for these two versions of the Schur superpolynomials have been obtained, revealing a  rich underlying combinatorial structure \cite{BM,JL}.
 Now, a truly remarkable observation is that the (plethystic deformation of the) Macdonald superpolynomials appear to have a positive decomposition in the Schur basis (the $t=0$ version), 
 which is recognized as  a supersymmetric extension of the original Macdonald conjecture \cite{BF1} (which holds when the degree in the Grassmannian variables is large enough \cite{BLM}).

\subsection{The ${\mathcal N}=2$ superpolynomials: a first encounter}
In the present work, we open  a new section to this program by further extending the classical symmetric polynomials to be  functions of two independent sets of anticommuting variables $\theta_1,\cdots \theta_N$ and $\phi_1,\cdots,\phi_N$. Here, we not only impose that the variables in each set anticommute among themselves:
\begin{equation} \theta_i\theta_j=-\theta_j\theta_i\,,\qquad \phi_i\phi_j=-\phi_j\phi_i,\end{equation} but we also require the anticommutativity of two of the distinct types    of variables:
\begin{equation}\label{tap}\theta_i\phi_j=-\phi_j\theta_i.\end{equation}
(Note that the commuting variables commute with the anticommuting ones: $x_i\ta_j=\ta_j x_i$ and $x_i\phi_j=\phi_jx_i$.)
Moreover, we generalize the symmetry requirement to be the invariance under the interchange of two triplets $(x_i,\phi_i,\theta_i)\leftrightarrow (x_j,\phi_j,\theta_j)$.  We call the resulting objects  ${\mathcal N}=2$ symmetric superpolynomials (and refer to those superpolynomials with no $\phi_i$ variables as ${\mathcal N}=1$ superpolynomials).\cite{Note}\\

Let us return for a moment to the ${\mathcal N}=1$ symmetric superpolynomials. A simple basis in which they can be expanded is the one generated by the product of power sums. Obviously, to the usual power sum $p_k$ we need to adjoin a fermionic companion  denoted $\tilde p_k$, where
\begin{equation} p_k=\sum_{i=1}^N x_i^k\qquad\text{and}\qquad \tilde p_k=\sum_{i=1}^N\theta_i x_i^k.\end{equation}
The two examples displayed in eq. \eqref{exa} can be rewritten as $\tilde p_2p_1-\tilde p_3$ and $\tilde p_2\tilde p_1$ respectively.\\

How does this generalize to the ${\mathcal N}=2$ case?  One certainly needs to add the $\phi$-version of $\tilde p_k$:
\begin{equation} \tilde p'_k=\sum_{i=1}^N \phi_ix_i^k.\end{equation}
(Note that a different notation will be used in the main text for the power sums.)
Having added two sets of anticommuting variables naturally suggests that adding two independent fermionic power sums would be sufficient. However, consider the following simple example:
\begin{equation} \label{exb}\phi_1\theta_1 x_1+\phi_2\theta_2 x_2.\end{equation}
This certainly satisfies our symmetry requirement. But it is straightforward to see that it cannot be expressed  in terms of $\tilde p_1$ and $ \tilde p'_1$. Therefore, a fourth type of power sums needs to be introduced, namely:
\begin{equation} \hat p_k=\sum_{i=1}^N \phi_i\theta_ix_i^k.\end{equation}												Observe that the  example \eqref{exb} is simply $\hat p_1$ for $N=2$. \\

Four varieties of power sums, two  fermionic and two bosonic, are thus necessary for the formulation of this multiplicative basis. This is also the case for the other multiplicative bases (elementary and homogeneous).\\
			
	Symmetric ${\mathcal N}=2$ superpolynomials are
		labelled by ${\mathcal N}=2$ superpartitions.			The occurrence of four types of power sums suggests  that the superpartitions are composed of four partitions. The splitting of these four types into two bosonic and two fermionic ones further entails that two of these partitions -- the one associated with the product of the $\tilde p_k$ and the other with that of the  $\tilde p'_k$ -- will have a restriction, namely that their parts should be distinct.\\

The primary objective of this work is to present a detailed analysis of the ${\mathcal N}=2$ supersymmetric version of the classical bases $m_\lambda, e_\lambda, h_\lambda$ and $p_\lambda$ (using the standard notation in which $\lambda$ is a partition, cf \cite{Macdonald1998}), including their generating functions and the relationships between them.  Moreover, we determine the proper modification of the scalar product in the power-sum basis, with respect to which the ${\mathcal N}=2$ generalized versions of $m_\lambda$ and $h_\lambda$ are dual bases. \\

Somewhat surprisingly, the $\Nm=2$ classical bases can be generalized straightforwardly to a generic value of $\Nm$.  Here again, we can fix the scalar product and still preserve the duality between monomial and homogeneous polynomials.\\


Back to the $\Nm=2$ context. The
scalar product is found to have a natural deformation by a free parameter, denoted $\a$. We also observe that the $\alpha$-deformation of the homogeneous functions can be defined in  a natural way, by still preserving their duality with the monomials. 

\subsection{Motivation}
The extension of the classical theory of symmetric functions to superpolynomials with an arbitrary number of anticommuting partner variables is of interest on its own.
However, the ultimate objective of this work is to pave the way for the construction of the $\Nm=2$ Jack superpolynomials.
We expect those to be defined
by directly extending the  $\Nm=0,1$ definition to the $\Nm=2$ case, namely, in terms of two conditions: triangularity in the monomial basis and orthogonality.
The scalar product with respect to which we expect the yet-to-be-defined ${\mathcal N}=2$ Jack superpolynomials to be orthogonal is precisely the $\alpha$-deformation of the power sums scalar product just alluded to.
Moreover, the three companions of the deformed $h_n$ are candidates for the particular ${\mathcal N}=2$ Jack superpolynomials that are indexed by a  superpartition represented by a single row diagram.
However, the generic construction of the ${\mathcal N}=2$ Jack superpolynomials is still a work in progress due to some difficulties that are discussed at the end of this article.

\subsection{Organization of the article}
	The superpartitions are introduced in section 2.1, together with their diagrammatic representation.  The monomial super functions, introduced in section 2.4, are readily obtained from the structure of the superpartition and the symmetrization process.\\

In section 3.1 we introduce our main technical tool for the demonstration of many results pertaining to the multiplicative bases, a trick  lifted from
\cite{classicalN1} but barely exploited there. The idea is to interpret a product of $p$ Grassmannian variables $\theta_i$
times a polynomial in the $x_j$, as a $p$-form and to introduce a differential operator $\underline{d}$ that transforms $p$-forms into $p+1$-forms using the formal identification $\underline{d}x_i\longleftrightarrow \theta_i$. A similar operator, denoted as $\overline{d}$, acts on polynomials involving  
products of $\phi_i$'s, this time with $\overline{d}x_i\longleftrightarrow \phi_i$. \\

Let $f_k$ stand for either $p_k, e_k$ or $h_k$. Then the three companions
 of $f_k$ are generated from it by acting with $\underline{d}, \overline{d}$ and $\overline{d}\, \underline{d}$.  Moreover, the $f_k$ generating function, when acted upon by a deformation of the product $\overline{d}\, \underline{d}$, denoted ${\mathcal O}$ (namely, this is the operator $(1+\overline{{\mathcal D}}) (1+\underline{{\mathcal D}})$ defined in \eqref{def:bigdiffO}), produces the generating function for the ${\mathcal N}=2$ $f$-type multiplicative basis.  This is the subject of sections 3.2-3.3. Next,     simple recursion relations for the four components of the three multiplicative bases are derived from the relationships between their generating functions  in \S 3.4. Finally, in \S 3.5, the ${\mathcal N}=2$ reproducing kernel is written down, which yields the super-version of the power sums scalar product.  The validity of the presented kernel is then confirmed by the verification of the dual nature of the monomial and homogeneous  ${\mathcal N}=2$ versions.\\

In section 4, we present the extension of the previous results for an arbitrary number of supersymmetries.\\

The concluding section contains some preliminary remarks in relation with the still-undefined $\Nm=2$ Jack superpolynomials. In particular, we 
introduce the ${\mathcal N}=2$ version of the 
$g_\lambda^{(\alpha)}$ symmetric function (with $g_\lambda^{(1)}=h_\lambda$). 
These generalized   $g_\lambda^{(\alpha)}$'s are shown to be orthogonal to the super-monomial functions with respect to the one-parameter extension of the ${\mathcal N}=2$  scalar product.\\

{Three} appendices complete this article. The proof of  a simple identity is the subject of the first one. {In Appendix B, we present an algorithm describing the decomposition of the product of two monomials in the monomial basis.  This construction is then used in Appendix C to demonstrate  the completeness of the three multiplicative bases, ensuring thus their ``basis'' qualifier.}

 	\section{Superpartitions and monomial superfunctions}

	\subsection{Superpartitions.} 
				We recall that an $\mathcal{N}=1$ superpartition $\Lambda$ 		is composed of two partitions, $\Lambda = (\Lambda^a; \Lambda^s)$, where $\Lambda^a$ has distinct parts and
 $\Lambda^s$ is an ordinary partition \cite{classicalN1}. The superpartition $\Lambda$ is said to belong to the $(n|m)$ sector if the number of parts of $\Lambda^a$, called the \emph{fermionic degree}, is $m$ and the degrees of $\Lambda^a$ and $\Lambda^s$ add up to $n$.  In other words, a superpartition belonging to the  $(n|m)$ sector is such that 
		\begin{align}
		&	\Lambda =(\La^a;\La^s)= (\Lambda_1, \cdots, \Lambda_m;\; \Lambda_{m+1}, \cdots, \Lambda_N),\quad {\rm with} \quad \sum_{i=1}^N \Lambda_i = |\Lambda^a|+|\Lambda^s|= n,
		\nonumber\\
		&	\text{and where~~}\; \Lambda_1 > \Lambda_2 > \cdots >\Lambda_m \geq 0  \quad {\rm and} \quad
			 \Lambda_{m+1} \geq \Lambda_{m+2} \geq \cdots \geq \Lambda_N.		\end{align}
	The upper index $a$ and $s$ denote respectively antisymmetric and symmetric components.  Note that these adjectives
refer to the structure of the superpolynomials they label and not to the partitions \emph{per se}.
For the same reason, the parts $\Lambda_1, \cdots, \Lambda_m$ are often called the  \emph{fermionic}  parts.	\\	
		In the $\mathcal{N}=2$ case, as mentioned in the Introduction, we need two extra partitions to keep track of the
interplay between the two different types of fermionic variables and the commuting variables.  Hence, the following definition.

				\begin{definition}
			An $\mathcal{N}=2$ superpartition $\Lambda$ 			 is an ordered sequence of nonnegative integers, in which the integers can also be marked by an overline, an underline, or an overline and an underline at the same time.
			 			  Parts that are only overlined (resp. only underlined) need to be distinct and can be equal to zero, while parts that are either unmarked or both  overlined and underlined (refered to as \emph{bilined} in the remainder of the article) have no restriction and can also be equal to zero.
			  			   A generic element of the superpartition $\Lambda$ is denoted $\Lambda_i$ whether it is marked or not and its numerical value is denoted by $|\Lambda_i|$. The parts appear in decreasing order, and for parts of equal value, the order (denoted $\succ$) is the following: bilined, overlined, underlined, and not marked, i.e.,
			   \beq \Lambda = (\Lambda_1, \Lambda_2, \dots, \Lambda_{\ell}), \qquad|\La_i|\geq |\La_{i+1}|\qquad \text{and}\qquad\sTD{a}\succ \sT{a}\succ \sD{a}\succ a\quad  \text{for}\; a=|\La_i|.\label{orderP} \eeq
			   
			We also introduce four ordered sets (which we will refer to as the constituent partitions) that collect the different types of parts. Those sets are denoted in the obvious manner $\sTD{\Lambda}, \sT{\Lambda}, \sD{\Lambda}, \Lambda^0$ so that for instance parts that are bilined are collected in the set $\sTD{\Lambda}$,
			  while unmarked parts are collected in the set $\Lambda^0$. (Note that in  \cite{classicalN1}, $\sTD{\Lambda}$ refers to the fermionic degree, a notation that we shall no longer use.)  Rephrased compactly, these conditions are: 			 \begin{align}
&	\Lambda = (\Lambda_1, \Lambda_2, \dots, \Lambda_{\ell}), \quad
			 	\text{where }\; |\La_i| \geq |\Lambda_{i+1}|, \nonumber \\
			 	&
				\text{and } 
			 	\forall\; \Lambda_i, \Lambda_j \in \sTDL\; \text{or~} \Lambda_i, \Lambda_j \in \sLs\text{ with } i>j,\, \text{we have~}|\La_i| \geq |\Lambda_j|\geq 0, \\
			 	&\phantom{\text{and }}
				\forall\; \Lambda_i, \Lambda_j \in \sTL \; \text{or~} \Lambda_i, \Lambda_j \in \sDL\;\;\text{ with } i>j,\, \text{we have~} |\La_i| >|\La_j| \geq 0.
			 			 			 \end{align}

		The total number of parts in a superpartition (not counting the zero parts of $\Lambda^0$ but keeping track of those of $\sTD{\La}$) defines its length, written $\ell(\Lambda)$. Likewise we denote by $\ell(\sTDL)$, $\ell(\sTL)$, $\ell(\sDL)$ and $\ell(\sLs)$ the number of parts in each of the constituent partitions  (not counting again the zero parts of $\Lambda^0$). Let $\Lambda$ be a generic superpartition, we define
		\begin{align}
			\sT{m}_\Lambda 		&:= \ell(\sT{\Lambda}) + \ell(\sTD{\Lambda}), \qquad
			\sD{m}_\Lambda 		:= \ell(\sD{\Lambda}) + \ell(\sTD{\Lambda}), \qquad
			|\Lambda| 			:= \sum_{i=1}^{\ell(\Lambda)}|\La_i|. 
		\end{align}
		These three integers allow us to define the so-called sectors. Given a superpartition $\Lambda$ with 
		\begin{align}
			\sT{m}_\Lambda 		&= \sT{m},\quad
			\sD{m}_\Lambda 		= \sD{m}, \quad
			|\Lambda| 			= n, 
		\end{align}
		we say that $\Lambda$ is a superpartition in the sector $(n|\sT{m}, \sD{m})$ and write $\Lambda \vdash (n|\sT{m}, \sD{m})$. The set composed of all superpartitions in the sector $(n| \sT{m}, \sD{m})$ is denoted $\text{SPar}(n|\sT{m}, \sD{m})$, and the set of all \N{2} superpartitions is simply denoted $\text{SPar}$. 
 		\end{definition}

		\begin{remark}
		When either $\sT{m}$ or $\sD{m}$ is 0 we recover $\mathcal{N}=1$ superpartitions and when both are zero we recover standard partitions, that is 
		\begin{align}
			\text{SPar}(n|\sT{m},0)=\text{SPar}(n|\sT{m}),\qquad
			\text{SPar}(n|0,\sD{m})=\text{SPar}(n|\sD{m}), \qquad
			\text{SPar}(n|0,0)=\text{Par}(n). 
		 \end{align}
		\end{remark}
		
		\begin{example} \label{ex:superpartitions}
			Let $\Lambda$ be a superpartition,
				\begin{align}
					\Lambda =(\sTD{4}, \sT{4}, \sD{3}, 3, \sTD{2},  \sT{2},\sD{2}, 1, \sTD{0}, \sTD{0}, \sT{0}). 				\end{align}
			Then, from the previous definitions, we have 
			\begin{gather}
				|\Lambda| 		= 4 + 4 + 3 +3 +2 +2 +2 +1 +0 +0 +0= 21\nonumber \\
				\ell(\Lambda) 	= 11,\; \ell(\sTDL) = 4,\; \ell(\sTL)=3,\; \ell(\sDL)= 2,\;
				\sT{m}			= 7,\; 
				\sD{m}			= 6,\nonumber\\
						\Lambda \in \SPar{21|7,6}.
			\end{gather}
		\end{example}

\subsection{Diagrammatic representation}
		The superpartition notation involving marks suggests the following decorated Ferrers diagram representation. As usual, every part is represented as a row of boxes corresponding to its value but if the part is marked, we add the proper decoration at the end of the row. Overlined parts are decorated with a circle containing a vertical line (v-circle), underlined parts are decorated with a circle containing a horizontal line (h-circle) and bilined parts are decorated with a circle containing both a horizontal and a vertical line  (vh-circle). We add the convention that when given the choice, the v-circle should appear on top of the h-circle. This is probably much clearer with the following example.\\

		Taking again the  superpartition  used in Example \ref{ex:superpartitions}, we have :
		\begin{align}
						\Lambda =(\sTD{4}, \sT{4}, \sD{3}, 3, \sTD{2},  \sT{2},\sD{2}, 1, \sTD{0}, \sTD{0}, \sT{0}) \quad \longleftrightarrow \quad 
						\scalebox{1.5}{$\superY{
												\,&\,&\,&\,&\yTC\\
												\,&\,&\,&\,&\yT\\
												\,&\,&\,&\yC\\
												\,&\,&\,\\
												\,&\,&\yTC\\
												\,&\,&\yT\\											
												\,&\,&\yC\\
												\,\\
												\yTC\\
												\yTC\\
												\yT
												}$}
		\end{align}
		This example seems a bit ill-selected in order to promote the diagrammatic representation, but it does exhibit all the special features of the $\mathcal{N}=2$ superpartitions. The elegance of the notation is somewhat more manifest in the following example.

		\begin{example} \label{ex:spar2-1-1}
			The set of superpartitions in the sector $(2|1,1)$ is given by
			\begin{align}
				\SPar{2|1,1} = 	
				&\left\lbrace
				(\sTD{2}), (\sT{2},\sD{0}),(\sD{2},\sT{0}), (2,\sTD{0}), (2,\sT{0}, \sD{0}), (\sTD{1}, 1), \right. \\
				 &\; \left. (\sT{1},\sD{1}), (\sT{1},1,\sD{0}), (\sD{1}, 1, \sT{0}), (1,1,\sTD{0}), (1,1,\sT{0}, \sD{0})
				\right\rbrace.
			\end{align}
			Their diagrammatic representation, in the same order, is
			\begin{gather*}
				\SPar{2|1,1} \longleftrightarrow \left\lbrace 
					\superY{
								\,&\,&\yTC
					},
					\superY{
								\,&\,&\yT\\
								\yC
					},
					\superY{
								\,&\,&\yC\\
								\yT
					},
					\superY{
								\,&\,\\
								\yTC
					},
					\superY{
								\,&\,\\
								\yT\\
								\yC
					},
					\superY{
								\,&\yTC\\
								\,
					},\right. \\
					\left.
					\superY{
								\,&\yT\\
								\,&\yC
					},
					\superY{
								\,&\yT\\
								\,\\
								\yC
					},
					\superY{
								\,&\yC\\
								\,\\
								\yT
					},
					\superY{
								\,\\
								\,\\
								\yTC
					},
					\superY{
								\,\\
								\,\\
								\yT\\
								\yC
					}
					\right\rbrace
			\end{gather*}
		\end{example}

		The restrictions on the constituent partitions of the superpartition lead to the following generating function, where $p(n,  \sT{m},\sD{m})$ is the number of superpartitions in the sector $(n|\sT{m},\sD{m})$,

		\begin{align}
			\sum_{n, \sT{m},\sD{m}} p(n,  \sT{m},\sD{m})q^n \xi^{\sT{m}}\gamma^{\sD{m}}  &= \prod_{n=0}^\infty \frac{(1+\xi q^n)(1+\gamma q^n)}{(1-\xi \gamma q^n)(1-q^{n+1})}\\ 
			&=(1+\gamma+\xi+2 \xi \gamma+\xi {\gamma}^{2}+{\xi}^{2}\gamma+2 {\xi}^{2}{\gamma}^{2}+\cdots)q^0 \nonumber \\
			&\quad + (1+2 \gamma+2 \xi+{\gamma}^{2}+5 \xi \gamma+{\xi}^{2}+4 \xi {\gamma}^{2}+\cdots)q^1 \nonumber \\
			&\quad + (2+4 \gamma+4 \xi+2 {\gamma}^{2}+11 \xi \gamma+2 {\xi}^{2}+10 \xi {\gamma}^{2} + \cdots)q^2 +\cdots		\end{align}
		For instance, the coefficient of $\xi \gamma q^2$  in the last expression,  namely $11$, is the cardinality of $\SPar{2|1,1}$. The generating function highlights the fact that there is an infinite number of superpartitions of fixed degree $n$ which is a feature that does not appear in the \N{1} superpartition case. This very point is linked to the fact that $\sTDL$ can contain an arbitrary number of $0$'s.\\

		 \begin{remark}
		 	It was noted in \cite{classicalN1} that \N{1} superpartitions are equivalent to standard MacMahon diagrams (see for instance \cite{Pak2006}, section 2.1.3), which are also equivalent to overpartitions, as introduced in \cite{Corteel2004}. The only difference between \N{1} superpartitions and overpartitions is that $\bar a$ is denoted
$\overline{a+1}$ in the latter case. There is a similar relation between \N{2} superpartitions and the overpartition pairs introduced in \cite{Lovejoy2006}.  
The latter are composed of two ordinary partitions and two partitions with distinct parts. This is also the composition of  \N{2} superpartitions except that again  marked letters are shifted by one unit in overpartition pairs.
								 \end{remark}

	\subsection{Symmetric superpolynomials}

												The coordinates in \N{2} superspace take the form $(x_1, \dots, x_N; \phi_1, \dots, \phi_N; \theta_1, \dots, \theta_N)$, where Greek letters denote Grassmann numbers. 
		We now define $\mathscr{P}(\mathbb{A})$, the ring of polynomials in $x, \phi, \theta$ with coefficients in some base field $\mathbb{A}$. This construction has a tri-grading, that is
		\begin{align}
			\mathscr{P} = \bigoplus_{n, \sT{m},\sD{m}} \mathscr{P}_{(n|, \sT{m},\sD{m})},
		\end{align}
		where $\mathscr{P}_{(n|\sT{m},\sD{m})}$ is the finite dimensional module composed of all polynomials of degree $n$ in the variables $x$, 
degree $\sT{m}$ in the variables $\phi$ and degree $\sD{m}$ in the variables $\theta$.  \\

		We now define the \N{2} symmetric superpolynomials. Given the polynomial ring in superspace $\mathscr{P}$, the module of symmetric \N{2} superpolynomials in $N$ variables with coefficients in $\mathbb{A}$, denoted $\mathscr{P}^{S_N}(\mathbb{A})$, is the subset of $\mathscr{P}(\mathbb{A})$ made out of polynomials that are invariant under the \emph{simultaneous} exchange of the variables $x_i, \phi_i, \theta_i$. Put more simply, let $f$ be an arbitrary superpolynomial in $\mathscr{P}$, then $f$ is symmetric if and only if $f(x_k, x_l, \phi_k, \phi_l, \theta_k, \theta_l) = f(x_l, x_k, \phi_l, \phi_k, \theta_l, \theta_k)$ for all $k,l$, where for simplicity we only show the dependency in the variables indexed by $k$ and $l$ in the polynomial. \\ 

		More formally, let us introduce the operators $K_{ij}, \sT{\kappa}_{ij}$ and $ \sD{\kappa}_{ij}$ that generate the action of the symmetric group $S_N$ over the different variables : 
												\begin{align}
			K_{kl} : x_k \longleftrightarrow x_l, \quad
			\sT{\kappa}_{kl} : \phi_k \longleftrightarrow \phi_l,\quad
			\sD{\kappa}_{kl} : \theta_k \longleftrightarrow \theta_l.
		\end{align}
		We also define the operator $\mathcal{K}_{ij}$ that generate the diagonal action of $S_N$, that is
		\begin{align}
			\mathcal{K}_{kl} f(x_k, x_l, \theta_k, \theta_l, \phi_k, \phi_l) = K_{kl} \,\sT{\kappa}_{kl} \,\sD{\kappa}_{kl} \,f(x_k, x_l, \phi_k, \phi_l, \theta_k, \theta_l) = f(x_l, x_k, \phi_l, \phi_k, \theta_l, \theta_k). 
		\end{align}
A polynomial $f(x,\phi,\theta) \in \mathscr{P}$ then belongs to $\mathscr{P}^{S_N}$ if and only if 
		\begin{align}
			\mathcal{K}_{ij}f(x,\phi,\theta) =f(x,\phi,\theta)\quad \forall \quad i,j=1, \dots, N.
		\end{align}
 				\begin{example} Let 
			\begin{align} \label{eq:exampleSymFunction}
				f(x,\phi,\theta) = \phi_1 \phi_2 \theta_3 x_1^2 + \phi_1 \phi_3 \theta_2 x_1^2 + \phi_2 \phi_3 \theta_1 x_2^2 
				- \phi_1 \phi_2 \theta_3 x_2^2 - \phi_1 \phi_3 \theta_2 x_3^2  - \phi_2 \phi_3 \theta_1 x_3^2,
			\end{align}
			then, $f(x,\phi,\theta)$ is a \N{2} symmetric superpolynomial of degree $(2|1,2)$ since it is invariant 
under any exchange of the indices.  Since $N=3$, we have that $f(x,\phi,\theta) \in \mathscr{P}^{S_3}(\mathbb{Z})$.
		\end{example}
As in the case of symmetric functions, we generically consider that polynomials have an infinite number of variables for formal calculations, and we shall denote by $\mathscr{P}^{S_\infty}(\mathbb{A})$ the ring of symmetric functions over $\mathbb{A}$ in \N{2} superspace.

	\subsection{Monomial basis} 
	
		In view of defining the monomial basis, we introduce the following compact notation:
	\begin{align}		[\phi;\theta]_{\sL} = \phi_1^{\sT{\epsilon}_1} \theta_1^{\,\sD{\epsilon}_1}\cdots  \phi_\ell^{\sT{\epsilon}_\ell} \theta_\ell^{\,\sD{\epsilon}_\ell},
			\end{align}
where
\begin{align}\sT{\epsilon}_i=\begin{cases}1\;\;\text{if}\; \Lambda_i\in \sTD{\Lambda}\,\text{or}\, \sT{\Lambda}\\0\;\;\text{otherwise}\end{cases}
\qquad{\text{and}}\qquad
\sD{\epsilon}_i=\begin{cases}1\;\;\text{if}\; \Lambda_i\in \sTD{\Lambda}\,\text{or}\,\sD{\Lambda}\\0\;\;\text{otherwise.}\end{cases}
\end{align}
						For example we have 
	\begin{gather}
		\Lambda = (\sT{2}, \sD{2}, \sTD{1}, 1 , \sD{0}) \quad \implies\quad
		[\phi;\theta]_{\sL} =		\phi_1\theta_2\phi_3\theta_3\theta_5. 
	\end{gather}
Observe that
\beq \sum_i\sT{\ep}_i=\sT{m}\qquad{\text{and}}\qquad  \sum_i\sD{\ep}_i=\sD{m}\eeq	
so that $\sT{m}$ and $\sD{m}$ are the number of \emph{fermions} of each kind (i.e., the number of $\phi_i$ or $\ta_i$ factors respectively).  Similarly, we will write
\beq x^\La=x_1^{|\La_1|}
x_2^{|\La_2|}
\cdots x_\ell^{|\La_\ell|}.
\label{xla}\eeq These notations allow us to introduce the monomial functions in  a straightforward way.\\
				
		The monomial symmetric functions in superspace, denoted by $m_\Lambda = m_\Lambda(z,\theta,\phi)$, are the superanalogs of the monomial symmetric functions. 
		\begin{definition} \label{def:monomials}
			To every $\Lambda \in \text{SPar}$, we associate the monomial symmetric function
			\begin{align} \label{def. monomial}
				m_\Lambda = \sideset{}{'}\sum_{\omega \,\in\, S_N} \mathcal{K}_\omega  [\phi;\theta]_{\sL}x^\sL,
			\end{align}
			where $\mathcal{K}_\omega$ permutes the indices and the prime indicates that the summation is restricted to distinct permutations of $ [\phi;\theta]_{\sL}x^\sL$. 
		\end{definition}

		\begin{example} Here we give some examples of the monomial functions, the first of which is the superpolynomial $f(x,\theta,\phi)$ of \eqref{eq:exampleSymFunction}. 
			\begin{align}
								m_{(\sT{2}, \sT{0}, \sD{0})}&=	
					\phi_1\phi_2\theta_3(x_1^2 - x_2^2) + 
					\theta_1\phi_2\phi_3(x_2^2-x_3^2) +
					\phi_1\theta_2\phi_3(x_3^2 - x_1^2),\\
								m_{(\sTD{3},\sT{2},2,\sTD{1},\sD{1},\sT{0}, \sD{0})}&= 
					\phi_1\theta_1 \phi_2 \phi_4\theta_4 \theta_5 \phi_6 \theta_7 {x_1^3x_2^2x_3^2x_4x_5}+ \text{distinct permutations} ,\\
								m_{(\sTD{2}, \sT{1}, 1, 1)}&= 
					\phi_1 \theta_1 \phi_2 x_1^2 x_2 x_3 x_4 + \text{distinct permutations}.
			\end{align}
Note that we took the number of variables to be equal to the length of the indexing superpartition.
		\end{example}

																												\begin{proposition}
The monomials $m_\La$, for $\La\in\text{SPar}$, form a basis for
$\mathscr{P}^{S_\infty}(\mathbb{Z})$.
\end{proposition}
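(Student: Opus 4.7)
The plan is to identify each $m_\Lambda$ as the orbit sum of a canonical monomial of $\mathscr{P}$ under the diagonal $S_\infty$-action $\mathcal{K}_\omega$, and then apply the standard orbit decomposition of symmetric polynomials. Once the Grassmannian factors of a monomial of $\mathscr{P}$ have been put in the canonical ordered form of $[\phi;\theta]_{\sL}$, the monomial is determined by the data of a triple $(a_i,\sT{\ep}_i,\sD{\ep}_i) \in \mathbb{Z}_{\geq 0} \times \{0,1\}^2$ attached to each index $i$: the exponent of $x_i$ together with the occurrence bits of $\phi_i$ and $\theta_i$. Two monomials lie in the same $\mathcal{K}_\omega$-orbit (up to sign) if and only if the multisets of their triples coincide, and sorting such a multiset in decreasing order of $a_i$ with the tiebreaking $\sTD{a} \succ \sT{a} \succ \sD{a} \succ a$ of \eqref{orderP} produces a unique $\Lambda \in \text{SPar}$. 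Conversely every superpartition arises this way, and Definition \ref{def:monomials} identifies $m_\Lambda$ with the orbit sum of the canonical representative $[\phi;\theta]_{\sL} x^{\sL}$.

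I would next verify that $m_\Lambda \neq 0$ for every $\Lambda \in \text{SPar}$. The only possible source of cancellation comes from the stabilizer of $[\phi;\theta]_{\sL} x^{\sL}$ in $S_\infty$, which is generated by transpositions $\mathcal{K}_{ij}$ of positions carrying identical triples. For two bilined positions with triple $(a,1,1)$, the operator $\mathcal{K}_{ij} = K_{ij}\,\sT{\kappa}_{ij}\,\sD{\kappa}_{ij}$ swaps a $\phi$-pair and a $\theta$-pair, producing a sign $(-1)\cdot(-1)=+1$; for two unmarked positions $(a,0,0)$, no fermions are swapped and the sign is trivially $+1$. Both therefore preserve the signed canonical monomial. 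In contrast, two purely overlined positions $(a,1,0)$ or two purely underlined positions $(a,0,1)$ carry only a single fermion pair, so the corresponding transposition yields a sign $-1$ and would force a pairwise cancellation in the orbit sum; the distinctness requirement imposed on $\sTL$ and $\sDL$ in the definition of $\text{SPar}$ excludes exactly these repeats. Hence $m_\Lambda \neq 0$ whenever $\Lambda \in \text{SPar}$.

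Linear independence is then immediate: distinct superpartitions index disjoint orbits in the standard monomial basis of $\mathscr{P}$, so no nontrivial $\mathbb{Z}$-combination $\sum_\Lambda c_\Lambda m_\Lambda$ can vanish. For spanning, I would expand any $f \in \mathscr{P}^{S_\infty}(\mathbb{Z})$ in the standard monomial basis of $\mathscr{P}$; the invariance $\mathcal{K}_\omega f = f$ for all $\omega$ forces the coefficients of $f$ on the monomials of each orbit to agree up to the fermionic signs incurred in reordering, so that the restriction of $f$ to any one orbit is an integer multiple of the corresponding $m_\Lambda$, and summing over orbits expresses $f$ as a $\mathbb{Z}$-linear combination of the $m_\Lambda$. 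The main delicate point throughout is the sign bookkeeping alluded to in Step 2, which determines precisely when an orbit sum vanishes and confirms that the $\text{SPar}$ conditions are exactly the non-vanishing conditions; the rest reduces to routine orbit combinatorics.
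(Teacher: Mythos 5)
Your argument is correct and is precisely the standard orbit-decomposition proof that the paper invokes implicitly when it omits the proof as ``a straightforward extension of Theorem 16 and Corollary 17 in \cite{classicalN1}'': the triples $(a_i,\sT{\ep}_i,\sD{\ep}_i)$, the stabilizer sign analysis showing that the distinctness conditions on $\sTL$ and $\sDL$ are exactly the non-vanishing conditions, and the orbit-wise spanning argument are the expected content of that extension. Nothing further is needed.
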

\noindent The proof is omitted since it is a straightforward extension of Theorem 16 and Corollary 17 in \cite{classicalN1}. 	\section{Multiplicative bases and their generating functions}

	\subsection{Superpolynomials vs differential forms}

We now borrow from the analysis of the ${\mathcal N}=1$ case \cite{classicalN1} a convenient differential-form formalism. The notion of $({k},{k'})$-form is defined as
\begin{equation}\label{form}f(x)=\sum_{\substack{1\leq i_1,\ldots,i_{{k}}\leq N\\1\leq j_1,\ldots,j_{k'}\leq N}}
f_{ i_1,\ldots,i_{k},j_1,\ldots,j_{k'}}(x)\sT{d}x^{i_1}\wedge\cdots\wedge \sT{d}x^{i_{k}}\wedge \sD{d}x^{j_1}\wedge\cdots\wedge \sD{d}x^{j_{k'}}\,  ,
\end{equation} where the exterior    product $\wedge$ is antisymmetric
and $\sT{d}x^i$ and $\sD{d}x^i$ are considered to be different. 
Therefore, not only 
\beq \sT{d}x^i\wedge\sT{d}x^j=-\sT{d}x^j\wedge\sT{d}x^i\qquad\text{and}\qquad \sD{d}x^i\wedge\sD{d}x^j=-\sD{d}x^j\wedge\sD{d}x^i,\eeq
but also
\beq \sT{d}x^i\wedge\sD{d}x^j=-\sD{d}x^j\wedge\sT{d}x^i.
\eeq\\

The relation with superpolynomials in the $(\sT{m},\sD{m})=(k,k')$ fermionic sector is immediate with the corespondence
\beq\label{COR1}
\sT{d}x_i \longleftrightarrow \phi_i  \qquad\text{and}\qquad 
\sD{d}x_i \longleftrightarrow \theta_i.
\eeq

This relationship also captures the essence of the symmetry requirement imposed on the superpolynomials, namely the invariance with respect to the simultaneous interchange of the two triplets $(x_i,\phi_i,\ta_i)$ and $(x_j,\phi_j,\ta_j)$. Indeed, in this differential-form representation, there is a clear  reason for which $\phi_i$ and $\ta_i$ are entangled with $x_i$ in the symmetry transformations: it is because they are both constructed from $x_i$.
\\

Let us also introduce two distinct exterior
differentiations on forms, $\sT{d}$ and $\sD{d}$:
																						\begin{align}
		 	\sT{d} f(x) &= \sum_{i}	\sT{d} x_{i}\w \partial_{i} f(x)\, \longleftrightarrow \sum_{i}\phi_i	\partial_{i} f(x) \\
		 	\sD{d} f(x) &= \sum_{i}	 \sD{d} x_{i}\w\partial_{i} f(x)\,\longleftrightarrow \sum_{i}	\theta_i\partial_{i} f(x).
		 \end{align}
		From $\sT{d}$ and $\sD{d}$, we construct the following two operators:
\beq \label{eq:bigdiff}
			\sT{\mathcal{D}}  = \sT{d}t \wedge \sT{d} \qquad\text {and}\qquad
			\sD{\mathcal{D}}  = \sD{d}t \wedge \sD{d}.
		\eeq
	where $t$ is an auxiliary parameter (typically a degree-counting variable in a generating function).
$\sT{\mathcal{D}} $ transforms thus a $(k,k')$-form into a $(k+2,k')$-form. Out of these, we construct the master operator:
		\begin{align}
			\mathcal{O} := (1+\sT{\mathcal{D}})(1+\sD{\mathcal{D}}) \label{def:bigdiffO}.
		\end{align}
		This operator will be our main tool for calculations involving generating functions, in whose context we will use the correspondences  
		\beq\label{COR2}
		t\,\sT{d}t \longleftrightarrow\,\sTt  \qquad\text{and}\qquad t\,\sD{d}t\longleftrightarrow\sDt
		\eeq where both $\sTt$ and $\sDt$ are anticommuting variables.\\

		To capture the correspondences \eqref{COR1} and \eqref{COR2}, we will often use the following notation:
\beq\label{NOTA} \ll F(x_i,\sT{d}x_i,\sD{d}x_i,\sT{d}t,\sD{d}t)\rr:=\l[F(x_i,\sT{d}x_i,\sD{d}x_i,\sT{d}t,\sD{d}t)\r]_{\substack{\sT{d}x_i\to\phi_i,\,\sD{d}x_i\to\ta_i\\t\sT{d}t\to\sTt,\,t\sD{d}t\to\sDt}}\eeq
with the understanding that after these substitutions, the wedge products are eliminated.
For instance:
\beq \ll \sT{d}x_2\wedge\sD{d}x_5\wedge t\sD{d}t\rr=\phi_2\,\theta_5\,\sDt.
\eeq\\

\noindent {\bf Key observation:}  The key point of this differential-form formalism is   the following: if $f(x)$ is a symmetric polynomial of degree $n$, by construction, $\ll\sT{d}\,f\rr, \ll\sD{d}\,f\rr$ and $\ll\sT{d}\,\sD{d}\,f\rr$ are guaranteed to be, 
respectively, symmetric superpolynomials of degree $(n-1|1,0), (n-1|0,1)$ and $(n-2|1,1)$. 
{In the case of a multiplicative basis of the ring of symmetric functions, it turns out that applying $\sT{d}, \sD{d}$ and $\sT{d}\, \sD{d}$ on the generators provide natural generators for a $\mathcal N=2$ multiplicative basis. 
Moreover, if $G(x,t)$ is the generating function of the generators of that multiplicative basis, $\ll \O G(x,t)\rr$, being constructed out of the exterior derivatives acting on $G(x,t)$, will be the generating function for its ${\mathcal N}=2$ extension.}

	\subsection{Structure of   the multiplicative bases}

	 We now consider  the extension of the three  multiplicative bases $p_\la, h_\la$ and $e_\la$. Denote a generic basis for ${\mathcal N}={2}$ by $f_\Lambda = f_\Lambda(z,\theta,\phi)$.
 The structure of the multiplicative basis is
 \begin{align}\label{multiA}
 f_\Lambda = \tilde f_{\Lambda_1}\cdots\tilde f_{\Lambda_\ell},\end{align}
 where
\beq\label{multiB}
\tilde f_{\Lambda_i}=\begin{cases}
\sTD{f}_n\;\text{if}\; \Lambda_i\in\sTD{\Lambda}\\
\sT{f}_n\;\text{if}\; \Lambda_i\in\sT{\Lambda}\\\sD{f}_{n}\;\text{if}\; \Lambda_i\in\sD{\Lambda}\\ {f}_{n}\,\;\text{if}\; \Lambda_i\in {\La_0}
\end{cases}\qquad\text{for}\;n={|\Lambda_i|}.
\eeq
Here $f_n$ is the classical (with no anticommuting variables) symmetric function under consideration (either $p_n,e_n$ or $h_n$ -- see below) while 
$\sT{f}_n, \sD{f}_n$ and $\sTD{f}_n$ are respectively
 \beq 
 \sT{f}_n\propto \llbracket\sT{d}f_{n{+1}}\rrbracket,
 \quad  \sD{f}_n\propto \ll\sD{d}f_{n{+1}}\rr,
 \quad  \sTD{f}_n\propto\ll\sT{d} \,\sD{d}f_{n{+2}}\rr,
 \eeq
 with proportionality constants depending on the choice of basis
and where
 we use the notation defined in \eqref{NOTA}.  The expressions for the multiplicative-basis generators are collected in Table \ref{tab1}. That these three multiplicative bases  are indeed genuine bases is demonstrated in Appendix C.\\

\begin{remark} Note that \eqref{multiA} could be written compactly as
$f_\Lambda = \prod_{i=1}^\ell \tilde f_{\Lambda_i}$.
But we stress that the order of multiplication matters (a different order may induce a sign difference).
We take the convention that products are always done with increasing value of the index from left to right, which means for instance that the following is true for every $k$:
\beq
				\prod_{i=1}^{\ell} \tilde f_{\Lambda_i}= \prod_{i=1}^{k} \tilde f_{\Lambda_i} \prod_{j = k + 1}^{\ell}\tilde f_{\Lambda_j}.\eeq
\end{remark}

\begin{table}[h]
   \caption{\label{tab1}The multiplicative bases (with $|\La_i|=n$)}\setlength\extrarowheight{4pt}
\center{\begin{tabular}{|c|l||c|c|c|}
  \hline\hline
$f_\La$&$\tilde f_{\La_i}$&$(\sT{d},\sD{d})$ expression& $m_\La$ expression\\
  \hline
$p_\La$&$p_n$&$p_n$& $m_{(n)}$\\ &$\sT{p}_n$&$\ll \frac{1}{(n+1)}\sTd \,p_{n+1}\rr$& $ m_{(\sT{n})}$\\ &$\sD{p}_n$&$\ll \frac{1}{(n+1)}\sDd \,p_{n+1}\rr$& $ m_{(\sD{n})}$\\ &$\sTD{p}_n$&$\ll \frac{1}{(n+2)(n+1)}\sTd\, \sDd  \,p_{n+2}\rr$& $ m_{(\sTD{n})}$\\ &&&\\  \hline
$h_\La$&$h_n$&$h_n$& $\sum_{\lambda \vdash n } m_\lambda, $\\ &$\sT{h}_n$&$\ll \sTd \,h_{n+1}\rr$& $ \sum_{\Lambda \vdash (n|0,1)} (|\sT{\Lambda}_1|+1) m_\Lambda$\\
&$\sD{h}_n$&$\ll \sDd \,h_{n+1}\rr$& $\sum_{\Lambda \vdash (n|1,0)} (|\sD{\Lambda}_1|+1) m_\Lambda$\\
&$\sTD{h}_n$&$\ll\sTd\, \sDd  \,h_{n+2}\rr$
& $  \sum_{\Lambda \vdash (n|1,1)} \left( c_1\delta_{\sTL,\varnothing} + c_2\delta_{\sTDL,\varnothing} \right)m_\Lambda$
\\&&&\\  \hline
$e_\La$&$e_n$&$e_n$& $m_{(1^n)} $\\ &$\sT{e}_n$&$\ll \sTd \,e_{n+1}\rr$& $ m_{(1^n, \sT{0})}$\\
&$\sD{e}_n$&$\ll \sDd \,e_{n+1}\rr$&$m_{(1^n, \sD{0})}$\\
&$\sTD{e}_n$&$\ll\sTd\, \sDd  \,e_{n+2}\rr$& $ {m_{(1^n, \sT{0},\sD{0})}}$\\
&&&\\ \hline\hline
\end{tabular}}
{\footnotesize \begin{align*}
	\text{where}\quad	c_1=(|\sTD{\Lambda}_1|+2)(|\sTD{\Lambda}_1|+1)\qquad 
			\text{and} \qquad	c_2 = \left\{ 
		\begin{array}{ll} 
			\phantom{-}(|\sTL_1|+1)(|\sDL_1|+1)	& 	\text{if } |\sTL_1| \geq |\sDL_1| \\
			-(|\sTL_1|+1)(|\sDL_1|+1)	&	\text{otherwise} 
		\end{array} 
		\right. 
\end{align*}}
\end{table}

	\begin{example}
			Let $f(x,\theta,\phi)$ be the symmetric superpolynomial \eqref{eq:exampleSymFunction}. We then have 
			\begin{align}
				f(x,\theta,\phi) &= 
																																																								p_{\, \superYsmall{
				\,& \,& \yT\\
				\yT\\
				\yC}}
				+p_{\, \superYsmall{
				\,& \,& \yTC\\
				\yT}}
				-p_{\, \superYsmall{
				\,& \,& \yT\\
				\yTC}} 			\nonumber\\
				&= 
				-\frac{8}{3}h_{\, \superYsmall{\,& \yT\\
				\,\\
				\yT\\
				\yC}}
				-\frac{1}{3}h_{\, \superYsmall{\,& \yT\\
				\,& \yC\\
				\yT}}
				-\frac{2}{3}h_{\, \superYsmall{\,\\
				\,\\
				\yTC\\
				\yT}}
				-\frac{1}{3}h_{\, \superYsmall{\,& \yTC\\
				\,\\
				\yT}}
				+h_{\, \superYsmall{\,& \yT\\
				\,\\
				\yTC}}
				+\frac{7}{3}h_{\, \superYsmall{\,& \,& \yT\\
				\yT\\
				\yC}}
				+\frac{2}{3}h_{\, \superYsmall{\,& \,\\
				\yTC\\
				\yT}}
				+\frac{1}{3}h_{\, \superYsmall{\,& \,& \yTC\\
				\yT}}
				-h_{\, \superYsmall{\,& \,& \yT\\
				\yTC}}
							\nonumber\\& = 
				\phantom{-}\frac{2}{3}e_{\, \superYsmall{\,& \yT\\
				\,\\
				\yT\\
				\yC}}
				+\frac{1}{3}e_{\, \superYsmall{\,& \yT\\
				\,& \yC\\
				\yT}}
				+\frac{2}{3}e_{\, \superYsmall{\,\\
				\,\\
				\yTC\\
				\yT}}
				+\frac{1}{3}e_{\, \superYsmall{\,& \yTC\\
				\,\\
				\yT}}
				-e_{\, \superYsmall{\,& \yT\\
				\,\\
				\yTC}}
				-\frac{1}{3}e_{\, \superYsmall{\,& \,& \yT\\
				\yT\\
				\yC}}
				-\frac{2}{3}e_{\, \superYsmall{\,& \,\\
				\yTC\\
				\yT}}
				-\frac{1}{3}e_{\, \superYsmall{\,& \,& \yTC\\
				\yT}}
				+e_{\, \superYsmall{\,& \,& \yT\\
				\yTC}}.
			\end{align}
		\end{example}

\begin{remark}The particular normalization used for the power-sum generators is such that the monomial expansion is monic (here $n\geq 0$ except for $p_n$ where $n\geq1$):			\begin{align}
	p_n	&= \sum_{k} x_k^n = m_{(n)},				& 	\sT{p}_n &= \sum_{k} \phi_k x_k^n = m_{(\sT{n})}, \\
				\sD{p}_n	&= \sum_{k} \theta_k x_k^n = m_{(\sD{n})},			&\sTD{p}_n	&= \sum_{k} \phi_k\theta_k x_k^n = m_{(\sTD{n})}.
			\end{align}
\end{remark}
		\begin{example}
			If $\Lambda = (\sT{2}, \sTD{1}, \sTD{1}, \sD{0})$ then the power sum associated to $\Lambda$ is 
			\begin{align}
				p_{ \superYsmall{
				\,&\,&\yT \\
				\,&\yTC \\
				\,&\yTC \\
				\yC
				}}
				= (\phi_1 x_1^2 + \phi_2 x_2^2 + \cdots)(\phi_1 \theta_1 x_1 + \phi_2\theta_2 x_2 + \cdots)^2 (\theta_1 + \theta_2 + \cdots).
			\end{align}
		\end{example}
		
	\begin{remark}
		The prefactors of the monomial expansion of    
		$\sT{h}_n, \sD{h}_n, \sTD{h}_n $ are related 
to the action of the exterior differentiations on the monomial basis. For instance, the action of $\sT{d}$ on  a monomial with no fermionic part is:
		\begin{align}
			\sTd m_\lambda  = \sideset{}{'}\sum_{\lambda_k \in \lambda} \lambda_k \,m_{(\lambda_1, \cdots, \sT{\lambda_k - 1}, \cdots)}
		\end{align}
		where the prime on the summation indicates that we sum only {on parts of $\lambda$ associated to rows ending with a removable corner (i.e., a box whose removal still yields the diagram of  a partition).} On the diagram, this operation amounts to replacing a box by a v-circle in every distinct way and adding a factor given by the horizontal position of the v-circle. As an illustration of the above formula, consider
		\begin{align}
			\sTd m_{\superYsmall{
				\,&\,&\, \\
				\,&\,&\, \\
				\,&\, \\
				\,
				}}
				\longleftrightarrow 
				3m_{\superYsmall{
				\,&\,&\, \\
				\,&\,&\yT \\
				\,&\, \\
				\,
				}}
				+
				2m_{\superYsmall{
				\,&\,&\, \\
				\,&\,&\, \\
				\,&\yT \\
				\,
				}}
				+
				m_{\superYsmall{
				\,&\,&\, \\
				\,&\,&\, \\
				\,&\, \\
				\yT
				}}.
		\end{align}
		\end{remark}
	
		\begin{example} Let $\Lambda = (\sT{2}, \sTD{1}, 1)$, then $h_\Lambda$ is
			\begin{align}
				h_{\superYsmall{
				\, &\, &\yT\\
				\,&\yC &\yT\\
				\,
				}}
				=
				\left( 	3 m_{ \superYsmall{\,&\,&\yT}}+ 
						m_{\superYsmall{\,&\,\\ \yT}}+
						2 m_{\superYsmall{\,&\yT\\ \,}}+
						m_{\superYsmall{\,\\ \, \\ \yT}}
				\right)
				\left( 	3 \cdot 2 m_{ \superYsmall{\,&\yC&\yT}} +
						2 m_{\superYsmall{\,& \yT \\ \yC}} - 2 m_{\superYsmall{\,& \yC \\ \yT}}+
						2 m_{\superYsmall{\,\\ \yC & \yT}} +  m_{\superYsmall{\,\\ \yC \\ \yT}}
				\right)
				\left( m_{\superYsmall{\,}} \right).
			\end{align} 
			Note that we wrote the vh-circle as two circles in order to stress the origin of the numerical factors.
		\end{example}

	\subsection{Generating functions}

		Using the notation \eqref{NOTA}, we have the following identity:\beq\label{IGF}\ll(1+\overline\D)(1+\underline \D)F(tx)\rr=F(tx+\sTt\phi+\sDt\ta).\eeq
		The proof is elementary and is relegated to Appendix \ref{apendix:ProofIdent}.
		The following proposition is an immediate consequence of {the key observation (cf. the end of Section 3.1) and the previous identity.}
						\begin{proposition} If we denote by $G(t)=\sum_{n\geq 0}t^nf_n$ the generating function for the multiplicative basis $f_\la$, the generating function for the ${\mathcal N}=2$ families of generators is 
			\beq
	{G}(t,\sTt,\sDt) =\ll (1+\overline\D)(1+\underline \D) G(t)\rr {=G(tx+\sTt\phi+\sDt\ta)}.\eeq	
		\end{proposition}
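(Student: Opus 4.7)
The statement comprises two equalities. The second one, $\ll (1+\sT{\D})(1+\sD{\D}) G(t)\rr = G(tx+\sTt\phi+\sDt\ta)$, is exactly identity \eqref{IGF} (proved in Appendix A) applied with $F=G$: for each of the three classical bases $p,h,e$, the generating function $G(t)$ can be regarded as the evaluation $F(tx_1,tx_2,\ldots)$ of a symmetric function $F$ on the shifted variables $y_i = tx_i$ (namely $-\sum_i\log(1-y_i)$ in the power-sum case, $\prod_i(1-y_i)^{-1}$ for the homogeneous, and $\prod_i(1+y_i)$ for the elementary). Thus this equality is immediate.

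For the first equality, I would expand
\begin{equation*}
(1+\sT{\D})(1+\sD{\D})\,G(t) = G(t) + \sT{\D}\,G(t) + \sD{\D}\,G(t) + \sT{\D}\,\sD{\D}\,G(t),
\end{equation*}
and identify each summand, after applying $\ll\cdot\rr$, with one of the four pieces of $G(t,\sTt,\sDt)$. The first summand is just $\sum_n t^n f_n$, the purely bosonic part. For the second, using $\sT{\D}=\sTd t\wedge \sTd$, I obtain $\sT{\D}\,G(t)=\sum_n t^n\,\sTd t\wedge\sTd f_n$, and the substitution $t\sTd t\to\sTt$ absorbs one factor of $t$ to yield $\ll\sT{\D}\,G(t)\rr=\sTt\sum_n t^{n-1}\ll\sTd f_n\rr$. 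Reindexing with $m=n-1$ and invoking the Table~\ref{tab1} relations $\sT{f}_m\propto \ll\sTd f_{m+1}\rr$ converts this into $\sTt\sum_m t^m\,\sT{f}_m$. The third summand is handled symmetrically, and the mixed term yields $\sTt\sDt\sum_m t^m\,\sTD{f}_m$ via the same mechanism applied to $\sTd\sDd f_{m+2}$. Assembling the four pieces reproduces precisely the generating function $G(t,\sTt,\sDt)$ for the four $\mathcal N=2$ families.

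The main obstacle I anticipate is the bookkeeping of normalization prefactors. For $h$ and $e$ the companion generators in Table~\ref{tab1} carry no extra prefactor and the matching is transparent. For $p$, however, one has $\sT{p}_n=\ll\frac{1}{n+1}\sTd p_{n+1}\rr$, so the derivation above works only if one takes the logarithmic form $G_p(t)=\sum_{n\geq 1}t^n p_n/n$: the $1/n$ in the generating function then combines with the factor $n$ brought down by $\sTd p_n = n\sum_i\sTd x_i\,x_i^{n-1}$ to reproduce exactly the required $1/(n+1)$ prefactor after reindexing, and similarly for the underlined and bilined companions. Once this accounting is done, both equalities follow directly from identity \eqref{IGF} and the key observation of Section~3.1, and no further obstacle remains.
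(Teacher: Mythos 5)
Your core strategy coincides with the paper's: the proposition is recorded there as an immediate consequence of identity \eqref{IGF} (proved in Appendix A) together with the key observation at the end of Section~3.1, and your term-by-term expansion of $(1+\sT{\mathcal{D}})(1+\sD{\mathcal{D}})G(t)$ matched against the Table~\ref{tab1} definitions is just that argument made explicit.

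There is, however, a concrete error in your treatment of the power sums, appearing both in your first paragraph and in your closing one. The hypothesis fixes $G(t)=\sum_{n\geq 0}t^nf_n$, so in the power-sum case the relevant function is $F(y)=\sum_i y_i/(1-y_i)$, \emph{not} $-\sum_i\log(1-y_i)$; the latter generates $p_n/n$, not $p_n$, and substituting it would both violate the hypothesis and change the asserted right-hand side, which for $p$ must read $\sum_i (tx_i+\sTt\phi_i+\sDt\theta_i)/(1-(tx_i+\sTt\phi_i+\sDt\theta_i))$ as in \eqref{eq:powersumgenfct}. Your claim that the derivation ``works only if one takes the logarithmic form'' is therefore false, and no such switch is needed: with $P(t)=\sum_n t^np_n$, the coefficient of $t^n\sTt$ in $\ll\O P(t)\rr$ is $(n+1)\sT{p}_n$ and that of $t^n\sTt\sDt$ is $-(n+1)(n+2)\sTD{p}_n$, which are nonzero multiples of the companion generators and hence still yield a generating function for the $\mathcal{N}=2$ family --- exactly the prefactors displayed in \eqref{eq:powersumgenfct}. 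The proposition does not assert that the expansion is monic in the companions; the monic normalization $\sT{p}_n=\ll\tfrac{1}{n+1}\sTd p_{n+1}\rr$ of Table~\ref{tab1} is a separate convention (chosen so that $\sT{p}_n=m_{(\sT{n})}$), and the constants it introduces are simply absorbed into the coefficients of the generating function rather than into a redefinition of $G(t)$.
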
		
				 In particular, with the convention that $p_0=0$ and $h_0=e_0=1$, the three generating functions are
		\begin{align} \label{eq:powersumgenfct}
			{P}(t,\sTt,\sDt) &= \sum_i \frac{t x_i+\sTt\phi_i+\sDt\theta_i}{1-(t x_i+\sTt\phi_i+\sDt\theta_i)} \nonumber \\
			&=\sum_{n\geq0} t^n(p_n +(n+1)\sTt\,\sT{p}_n + (n+1)\sDt\,\sD{p}_n -(n+1)(n+2)\sTt\,\sDt\,\sTD{p}_n),
		\\
				{H}(t,\sTt, \sDt) &= \prod_i \frac{1}{1-(tx_i+\sTt\phi_i +\sDt\theta_i)} = \sum_n t^n( h_n +\sTt\,\sT{h}_n +\sDt\,\sD{h}_n - \sTt\,\sDt\,\sTD{h}_n), \label{eq:GenComplete}\\
			{E}(t,\sTt, \sDt)	&= \prod_i (1+tx_i+\sTt\phi_i+\sDt\theta_i) = \sum_n t^n (e_n +\sTt\, \sT{e}_n +\sDt\, \sD{e}_n - \sTt\,\sDt \,\sTD{e}_n). \label{eq:genelementary}
			\end{align}
{We stress that in the above expressions, the coefficients of $t^n$ are all bosonic (i.e., commuting) since $\sTt$ and $\sDt$ are, we recall, Grassmannian variables.} 
Observe that the signs in the terms involving $\sTt\,\sDt$ come from the anticommutation of $\overline \D$ and $\sDt$. 

	\subsection{Relations between bases}		The expressions \eqref{eq:powersumgenfct}, \eqref{eq:GenComplete} and \eqref{eq:genelementary} are related as follows:
		\begin{align}
				H(t,\sTt,\sDt)	 E(-t,-\sTt,-\sDt) &= 1,\label{HE}\\
			{H}(t,\sTt,\sDt) {P}(t,\sTt,\sDt) &= (t\partial_t + \sDt\partial_{\sDt} + \sTt\partial_{\sTt}){H}(t,\sTt, \sDt),\label{eq:HPequal} \\
				E(t,\sTt,\sDt)	 P(-t,-\sTt,-\sDt) &= -(t\partial_t + \sDt \partial_{\sDt} +\sTt \partial_{\sTt})	 E(t,\sTt,\sDt) .\label{eq:EPequal}
		\end{align} 
		From these relations, we derive a number of identities displayed in Table \ref{tab2}.\\

		\begin{table}[h]
		   \caption{\label{tab2}Relations between the multiplicative bases}
		\setlength\extrarowheight{6pt}
		\center{\begin{tabular}{|c|l||c|}
		  \hline\hline
		Related bases&Identities and recursion relations\\
		  \hline
		A: $\;e_n$ vs $h_n$&$\sum_{r=0}^n (-1)^r e_r\, h_{n-r} = 0$\\
		&$\sum_{r=0}^{n}(-1)^r(e_r\, \sT{h}_{n-r} - \sT{e}_r\, h_{n-r}) = 0$\\
		&$\sum_{r=0}^{n} (-1)^r(\sTD{e}_r\, h_{n-r} - \sT{h}_{n-r}\, \sD{e}_r- \sT{e}_r \, \sD{h}_{n-r}+e_r \, \sTD{h}_{n-r})=0$\\ &\\\hline\hline
		B: $\;h_n$ vs $p_n$&$nh_n = \sum_{r=1}^n p_r\, h_{n-r}$\\
		&$(n+1)\sT{h}_n = \sum_{r=0}^n(p_r \, \sT{h}_{n-r} + (r+1)\sT{p}_r \, h_{n-r})$\\
		&$(n+2)\sTD{h}_n = \sum_{r=0}^n( p_r\, \sTD{h}_{n-r} + (r+1)(\sT{p}_r\sD{h}_{n-r} + \sT{h}_{n-r}\sD{p}_r)+(r+2)(r+1)\sTD{p}_r\, h_{n-r})$\\ &\\\hline\hline
		C: $\;e_n$ vs $p_n$&$ne_n = \sum_{r=1}^n (-1)^{r+1}p_re_{n-r}$\\
		&$(n+1)\sT{e}_n = \sum_{r=0}^n(-1)^{r+1}(p_r \, \sT{e}_{n-r}-(r+1)\sT{p}_r \, e_{n-r})$\\
		&$(n+2)\sTD{e}_n = \sum_{r=0}^n(-1)^{r+1}(p_r\, \sTD{e}_{n-r}  - (r+1)(\sT{p}_r\, \sD{e}_{n-r} + \sT{e}_{n-r}\, \sD{p}_r)+ (r+2)(r+1)\sTD{p}_r e_{n-r})$\\ &\\\hline\hline
		\end{tabular}}
		\end{table}

		The first identity  in each group is the classical one \cite{Macdonald1998}. The middle one (and its equivalent form obtained by replacing $\o{f}_n\to\u{f}_n$) has been derived in \cite{classicalN1}. The third one is novel.  Note that  the second and third relations in each group can also be recovered by operating on the first one with $\sT{d}$ and $\sT{d}\, \sD{d}$ respectively, using the differential-form correspondences given in Table \ref{tab1}.\\

		 Let us introduce the homomorphism $\hat{\omega}:\mathscr{P}^{S_\infty} \longrightarrow \mathscr{P}^{S_\infty}$  defined in the following way on the generators of the elementary basis: 
		 \begin{align}
		 	\hat{\omega} : e_n \longrightarrow h_n,\quad \sT{e}_n \longrightarrow \sT{h}_n,\quad \sD{e}_n \longrightarrow \sD{h}_n,\quad \sTD{e}_n \longrightarrow \sTD{h}_n .\label{eq:involution}
		 \end{align}
		 \begin{proposition}
		 	The homomorphism $\hat{\omega}$ is an involution ($\hat{\omega}^2=1$), which implies that
		 	\begin{align} 
		 		\hat{\omega} : h_n \longrightarrow e_n,\quad \sT{h}_n \longrightarrow \sT{e}_n,\quad \sD{h}_n \longrightarrow \sD{e}_n,\quad \sTD{h}_n \longrightarrow \sTD{e}_n.
		 	\end{align}
		 \end{proposition}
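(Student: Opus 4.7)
The plan is to exploit the generating-function identity \eqref{HE}, which provides a compact encoding of all three relations of group A in Table~\ref{tab2}. Since $\hat\omega$ is a ring homomorphism defined on the elementary generators by the assignments in \eqref{eq:involution}, and since these generators appear precisely as the coefficients of $E(t,\sTt,\sDt)$, applying $\hat\omega$ coefficient-by-coefficient to this formal series (leaving $t,\sTt,\sDt$ untouched) yields
\[
\hat\omega\bigl(E(t,\sTt,\sDt)\bigr) \;=\; H(t,\sTt,\sDt).
\]

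Next, I would apply $\hat\omega$ coefficient-by-coefficient to \eqref{HE} to obtain
\[
\hat\omega\bigl(H(t,\sTt,\sDt)\bigr)\cdot H(-t,-\sTt,-\sDt) \;=\; 1,
\]
so that $\hat\omega(H(t,\sTt,\sDt))$ is the two-sided inverse of $H(-t,-\sTt,-\sDt)$ in the formal power-series ring over $\mathscr{P}^{S_\infty}$. But substituting $(t,\sTt,\sDt)\mapsto(-t,-\sTt,-\sDt)$ directly in \eqref{HE} shows that this inverse equals $E(t,\sTt,\sDt)$. Therefore $\hat\omega(H(t,\sTt,\sDt))=E(t,\sTt,\sDt)$; reading off the coefficients of $t^n$, $t^n\sTt$, $t^n\sDt$ and $t^n\sTt\sDt$ in turn gives $\hat\omega(h_n)=e_n$, $\hat\omega(\sT{h}_n)=\sT{e}_n$, $\hat\omega(\sD{h}_n)=\sD{e}_n$ and $\hat\omega(\sTD{h}_n)=\sTD{e}_n$, which combined with \eqref{eq:involution} proves $\hat\omega^2=\mathrm{id}$ on the elementary generators, and hence on the whole ring.

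The main point to verify carefully is that $\hat\omega$ respects the $\mathbb{Z}_2$-parity of each generator (bosonic for $e_n$ and $\sTD{e}_n$, fermionic for $\sT{e}_n$ and $\sD{e}_n$), so that it is compatible with the anticommutativity of $\sTt$ and $\sDt$ used implicitly in the generating-series identities; and that formal series inversion commutes with $\hat\omega$, which is automatic because $H(-t,-\sTt,-\sDt)$ has constant term $1$. As a fallback, a purely recursive variant of the argument would apply $\hat\omega$ to each of the three identities of group A separately and induct on $n$, using the observation that each of those identities is invariant under the simultaneous swap $e_k\leftrightarrow h_k$, $\sT{e}_k\leftrightarrow\sT{h}_k$, $\sD{e}_k\leftrightarrow\sD{h}_k$, $\sTD{e}_k\leftrightarrow\sTD{h}_k$ (up to the reindexing $r\mapsto n-r$); this shows that $\hat\omega(h_n), \hat\omega(\sT{h}_n), \hat\omega(\sD{h}_n), \hat\omega(\sTD{h}_n)$ satisfy the same recursions that uniquely determine $e_n, \sT{e}_n, \sD{e}_n, \sTD{e}_n$, starting from the base case $n=0$.
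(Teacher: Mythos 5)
Your proposal is correct, and your main argument takes a genuinely different route from the paper's. The paper proves the statement by induction on $n$: it applies $\hat{\omega}$ to the identities A-1, A-2, A-3 of Table~\ref{tab2} (one identity per generator type, with only the $\sTD{h}_n$ case worked out in detail, the others being referred to \cite{Macdonald1998} and \cite{classicalN1}), reindexes $r\mapsto n-r$, and compares the resulting relation with the original one to extract $\hat{\omega}(\sTD{h}_n)=\sTD{e}_n$ from the induction hypothesis. Your ``fallback'' paragraph is essentially this proof. Your primary argument instead works at the level of the generating functions: since $\hat{\omega}(E(t,\sTt,\sDt))=H(t,\sTt,\sDt)$ by definition, applying $\hat{\omega}$ to \eqref{HE} exhibits $\hat{\omega}(H(t,\sTt,\sDt))$ as the inverse of $H(-t,-\sTt,-\sDt)$, which by \eqref{HE} again (after the sign substitution) and uniqueness of inverses of formal power series with constant term $1$ must be $E(t,\sTt,\sDt)$; this treats all four generator types in one stroke and avoids the induction entirely. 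The two points you flag are exactly the ones that need care: $\hat{\omega}$ extends to a homomorphism of the ring of formal series in $t,\sTt,\sDt$ precisely because it preserves the $\mathbb{Z}_2$-parity of the generators (so it is compatible with the sign conventions hidden in the products $\sTt\,\sT{h}_n$, $\sTt\sDt\,\sTD{h}_n$), and the coefficients of $t^n$ in $H$ and $E$ are bosonic, so the series live in a commutative ring where the inverse is two-sided and unique. What your approach buys is economy and uniformity (no case-by-case induction, no appeal to external references for the $h_n$ and $\sT{h}_n$ cases); what the paper's approach buys is that it stays entirely within the finite identities of Table~\ref{tab2} and makes the role of each recursion explicit.
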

		 \begin{proof}{ The proofs of these four relations are similar. Since 
		the cases $h_n$ and $\sT{h}_n$ (and hence also $\sD{h}_n$) are worked out in \cite{Macdonald1998} and \cite{classicalN1} respectively, we only focus on $\sTD{h}_n$.} We proceed by induction. The case $n=0$ follows easily from the action of $\hat \omega$ on A-3 (the third relation in Block A  of Table \ref{tab2}) with $r=n=0$.  We then suppose
that $\hat \omega(\sTD{h}_r)= \sTD{e}_r$ for all $r<n$.  
Acting with $\hat{\omega}$ 
		 on A-3 gives
		 \begin{align}
		 	0=&\sum_{r=0}^{n} (-1)^r\bigr(\hat{\omega}(\sTD{e}_r) \hat{\omega}(h_{n-r}) 
		 		- \hat{\omega}(\sT{h}_{n-r}) \hat{\omega}(\sD{e}_r)
		 		-\hat{\omega}(\sT{e}_r) \hat{\omega}(\sD{h}_{n-r})
		 		+\hat{\omega}(e_r) \hat{\omega}(\sTD{h}_{n-r})\bigl) \nonumber\\
		 		=&\sum_{r=0}^{n} (-1)^r\bigl( \sTD{h}_r \hat{\omega}(h_{n-r}) - 
		 		\hat{\omega}(\sT{h}_{n-r}) \sD{h}_r-\sT{h}_r \hat{\omega}(\sD{h}_{n-r}) +h_r\hat{\omega}(\sTD{h}_{n-r}) \bigr)\nonumber\\
		 		=&(-1)^n\sum_{r=0}^{n} (-1)^r\bigl( \sTD{h}_{n-r} \hat{\omega}(h_{r}) - 
		 		\hat{\omega}(\sT{h}_{r}) \sD{h}_{n-r}-\sT{h}_{n-r} \hat{\omega}(\sD{h}_{r}) +h_{n-r}\hat{\omega}(\sTD{h}_{r}) \bigr)\nonumber \\
		 		=&(-1)^n\sum_{r=0}^{n} (-1)^r(e_r \,\sTD{h}_{n-r} -\sT{e}_r \,\sD{h}_{n-r} - \sT{h}_{n-r}\, \sD{e}_r +\hat \omega(\sTD{h}_r)\, h_{n-r}).
		 \end{align}
Comparing with A-3 and using the induction hypothesis, we immediately have that
$\hat{\omega}(\sTD{h}_n) = \sTD{e}_n$.
{(The relations for $h_n$ and $\sT{h}_n$ similarly rely on A-1 and A-2 respectively.)}		 \end{proof}

		 Given this, we are now in position to obtain the action of the homomorphism on the power sum basis. 
		 \begin{proposition}
		 	Let $\hat{\omega}$ be the involution defined in \eqref{eq:involution}.  We then have 
		 	\begin{align}
		 		\omegahat :\quad &p_n \longrightarrow (-1)^{n-1}p_n, \!\!\!\!\!\!&&\sT{p}_n \longrightarrow (-1)^{n}\sT{p}_n, \\
&		 		 \sD{p}_n \longrightarrow (-1)^{n}\sD{p}_n, \!\!\!\!\!\! &&\sTD{p}_n \longrightarrow (-1)^{n-1}\sTD{p}_n,
		 	\end{align}
		 	so that the resulting action on $p_\Lambda$ is 
		 	\begin{align}
		 		\omegahat(p_\Lambda) = \omega_\Lambda p_\Lambda, \quad \omega_\Lambda \equiv (-1)^{|\Lambda| - \ell(\sTDL) - \ell(\Lambda_0)}.
		 	\end{align}
		 \end{proposition}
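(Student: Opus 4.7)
The plan is to derive the action of $\hat\omega$ on the power sum generators via generating functions, leveraging the identities \eqref{eq:HPequal} and \eqref{eq:EPequal} together with the already-established action $\hat\omega(H)=E$ (i.e.\ the preceding proposition applied coefficient-wise). Since $\hat\omega$ is a ring homomorphism acting on the coefficient ring $\mathscr{P}^{S_\infty}$, it commutes with the formal variables $t,\sTt,\sDt$ and with the operators $t\partial_t, \sTt\partial_{\sTt}, \sDt\partial_{\sDt}$.

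First, I apply $\hat\omega$ coefficient-wise to \eqref{eq:HPequal} to obtain
\begin{equation}
E(t,\sTt,\sDt)\,\hat\omega\!\left(P(t,\sTt,\sDt)\right) = (t\partial_t+\sTt\partial_{\sTt}+\sDt\partial_{\sDt})\,E(t,\sTt,\sDt).
\end{equation}
Comparing with \eqref{eq:EPequal} and cancelling the common factor $E(t,\sTt,\sDt)$ (which is an invertible formal series by \eqref{HE}) yields the compact identity
\begin{equation}
\hat\omega\!\left(P(t,\sTt,\sDt)\right) = -P(-t,-\sTt,-\sDt).
\end{equation}

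Next, I expand both sides using \eqref{eq:powersumgenfct}. A careful (but routine) bookkeeping of the signs coming from $(-t)^n$, $(-\sTt)$, $(-\sDt)$, and the double sign $(-\sTt)(-\sDt)=\sTt\sDt$ allows me to match coefficients of $t^n$, $t^n\sTt$, $t^n\sDt$, and $t^n\sTt\sDt$ separately. This directly produces the four asserted signs: $(-1)^{n-1}$ for $p_n$ and $\sTD{p}_n$, and $(-1)^n$ for $\sT{p}_n$ and $\sD{p}_n$. The only subtlety is tracking that the term $-\sTt\sDt\sTD{p}_n$ in $P$ becomes, after the substitution $(\sTt,\sDt)\to(-\sTt,-\sDt)$ and an overall minus sign, a coefficient whose comparison flips the sign of $\sTD{p}_n$ in a way that produces $(-1)^{n-1}$ rather than $(-1)^n$.

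Finally, since $\hat\omega$ is a homomorphism and each $\hat\omega(\tilde{p}_{\Lambda_i})$ is a scalar multiple of $\tilde{p}_{\Lambda_i}$ (no reordering of factors needed), the action on $p_\Lambda=\prod_i \tilde{p}_{\Lambda_i}$ is the product of the individual signs. Summing the exponents, each $\Lambda_i\in\sTDL$ contributes $|\Lambda_i|-1$, each $\Lambda_i\in\sTL\cup\sDL$ contributes $|\Lambda_i|$, and each unmarked $\Lambda_i\in\Lambda^0$ contributes $|\Lambda_i|-1$. The total exponent is $|\Lambda|-\ell(\sTDL)-\ell(\Lambda^0)$, giving precisely $\omega_\Lambda$.

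The main obstacle I anticipate is purely notational: keeping the Grassmannian signs straight in the expansion of $-P(-t,-\sTt,-\sDt)$, especially for the $\sTt\sDt\sTD{p}_n$ term where two sign flips from the Grassmann variables combine with the overall minus. An alternative that avoids generating functions altogether would be an induction on $n$ using the recursion pairs (B-1,C-1), (B-2,C-2), (B-3,C-3) from Table \ref{tab2}: apply $\hat\omega$ to the B-relations, use the already-known action on the $h$-basis to rewrite everything in terms of the $e$-basis, and compare with the C-relations to extract $\hat\omega(p_n), \hat\omega(\sT{p}_n), \hat\omega(\sD{p}_n), \hat\omega(\sTD{p}_n)$ one at a time. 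This works but duplicates the work already encoded in the generating function identities, so I would favor the generating function route.
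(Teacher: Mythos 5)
Your proof is correct, and it reaches the paper's conclusion by a slightly different packaging of the same underlying input. The paper's own proof is a one-liner: apply $\omegahat$ to the coefficient-level recursions B-2 and B-3 of Table \ref{tab2} and compare with C-2 and C-3 (together with B-1/C-1 for $p_n$), extracting $\omegahat(p_n)$, $\omegahat(\sT{p}_n)$, $\omegahat(\sTD{p}_n)$ sequentially; this is precisely the ``alternative'' you sketch at the end. Your primary route instead works one level up, at the generating functions themselves: applying $\omegahat$ to \eqref{eq:HPequal}, using $\omegahat(H)=E$ from the preceding proposition, comparing with \eqref{eq:EPequal}, and cancelling the invertible series $E$ to get the compact identity $\omegahat\bigl(P(t,\sTt,\sDt)\bigr)=-P(-t,-\sTt,-\sDt)$, from which all four signs drop out simultaneously by matching coefficients of $t^n$, $t^n\sTt$, $t^n\sDt$ and $t^n\sTt\,\sDt$. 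Since B-2, B-3, C-2, C-3 are themselves the coefficient extractions of \eqref{eq:HPequal} and \eqref{eq:EPequal}, the two arguments rest on identical identities; what your version buys is that it avoids the sequential induction (at $r=n$ the recursion B-2 involves both $\omegahat(p_n)$ and $\omegahat(\sT{p}_n)$ at once, so the coefficient-level comparison genuinely needs the classical case $\omegahat(p_n)=(-1)^{n-1}p_n$ as prior input, whereas your identity delivers everything in one stroke). Your two stated justifications — that $\omegahat$ extends to a homomorphism commuting with $t,\sTt,\sDt$ because it preserves the fermionic bi-degree, and that $E$ is invertible by \eqref{HE} — are exactly the points that need saying, and the final sign count $\sum_i(|\Lambda_i|-1)$ over $\sTDL\cup\Lambda^0$ plus $\sum_i|\Lambda_i|$ over $\sTL\cup\sDL$ correctly yields $\omega_\Lambda=(-1)^{|\Lambda|-\ell(\sTDL)-\ell(\Lambda^0)}$, with no reordering signs since each generator is sent to a scalar multiple of itself.
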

						
		 \begin{proof} The proof follows directly by applying $\omegahat$ on equations B-2 and B-3 of Table \ref{tab2} 	and comparing with equations C-2 and C-3.	 \end{proof}

	\subsection{Scalar product and reproducing kernel}
				\begin{definition} We define a scalar product on $\mathscr{P}^{S_\infty}$ by
\beq
			\label{PS1}	\langle{ p_\Lambda^\top} | p_\Omega \rangle = z_\Lambda \zeta_\Lambda \delta_{\Lambda,\Omega}, \eeq
				where			\begin{align}\label{zdef}
				z_\Lambda &= z_{\Lambda_0} = \prod_{k\geq 1} k^{n_{\Lambda_0}(k)}n_{\Lambda_0}(k)! , \\
				\zeta_\Lambda &= \zeta_{\sTDL}= \prod_{k\geq 0} \frac{n_{\sTDL}(k)!}{(k+1)^{n_{\sTDL}(k)}}.\label{zetadef}
			\end{align}
			and \beq p_\La^\top=(-1)^{\binom{\o m+\u m}{2}}p_\La,\eeq 
with  $n_\lambda(i)$ denoting the number of parts equal to $i$ in the partition $\lambda$.
 
		\end{definition}
		The sign is only introduced to simplify the computations that will appear later on.  Since the sign does not vary in a given fermionic sector, the norm of any element in that sector is either always positive or always negative. 
		Observe that the symbol $\top$ stands for the operation that reverses the order of the anticommuting variables.\\

				The norm of $p_\Lambda$ in the scalar product comes from the reproducing kernel that we now introduce.

		\begin{theorem} \label{Th: Kernel} Let $K(x,y,\phi, \sTt, \theta, \sDt)$ be the bi-symmetric formal power series given by
			\begin{align}
				K(x,y,\phi, \sTt, \theta, \sDt) = \prod_{i,j}\frac{1}{1-x_iy_j -\phi_i\sTt_j -\theta_i\sDt_j} .
			\end{align}
			We then have
			\begin{align}
				K(x,y,\phi,\theta,\sTt,\sDt)
				&= \sum_{ \Lambda \,\in\, \text{SPar} } (-1)^{\binom{\sT{m}+\sD{m}}{2}} z_\Lambda^{-1} \zeta_\Lambda^{-1} p_\Lambda(x,\phi,\theta)p_\Lambda(y,\sTt,\sDt).
			\end{align}
		\end{theorem}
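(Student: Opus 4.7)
The plan is to prove the identity by taking logarithms of both sides, exploiting the fact that in $1-x_iy_j-\phi_i\sTt_j-\theta_i\sDt_j$ the three subtracted pieces $a:=x_iy_j$, $b:=\phi_i\sTt_j$, $c:=\theta_i\sDt_j$ are each Grassmann-even and therefore mutually commute, while $b^2=c^2=0$. Consequently the power series for $-\log(1-a-b-c)$ truncates:
\[
-\log(1-a-b-c)=-\log(1-a)+\frac{b+c}{1-a}+\frac{bc}{(1-a)^2}.
\]
Using $\sTt_j\theta_i=-\theta_i\sTt_j$ once, I would rewrite $bc=\phi_i\sTt_j\theta_i\sDt_j=-\phi_i\theta_i\sTt_j\sDt_j$. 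Summing over $i,j$ and reading off the coefficients via the power-sum definitions in Table \ref{tab1}, this would produce
\begin{align*}
\log K=\sum_{k\geq 1}\frac{p_k(x)p_k(y)}{k}&+\sum_{k\geq 0}\sT{p}_k(x,\phi)\sT{p}_k(y,\sTt)+\sum_{k\geq 0}\sD{p}_k(x,\theta)\sD{p}_k(y,\sDt)\\&-\sum_{k\geq 0}(k+1)\sTD{p}_k(x,\phi,\theta)\sTD{p}_k(y,\sTt,\sDt).
\end{align*}

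Next I would exponentiate. Every summand is Grassmann-even, so all the terms mutually commute and $\exp$ factorizes as a product over $k$. Two simplifications arise. First, $(\sT{p}_k)^2=(\sD{p}_k)^2=0$ since $\phi_i\phi_j$ is antisymmetric while $x_i^kx_j^k$ is symmetric, so the overlined and underlined exponentials collapse to $1+\sT{p}_k\sT{p}_k$ and $1+\sD{p}_k\sD{p}_k$. Second, the $p_kp_k/k$ and $-(k+1)\sTD{p}_k\sTD{p}_k$ factors expand as ordinary exponential series. Gathering coefficients, every term ends up indexed by a quadruple $(\Lambda^0,\sT{\Lambda},\sD{\Lambda},\sTD{\Lambda})$ — i.e., by an $\Nm=2$ superpartition $\Lambda$ — with numerical factor $(-1)^{\ell(\sTD{\Lambda})}/(z_\Lambda\zeta_\Lambda)$: the magnitudes $z_\Lambda$ and $\zeta_\Lambda$ come directly from \eqref{zdef}--\eqref{zetadef}, and the sign $(-1)^{\ell(\sTD{\Lambda})}$ from the $(-1)^r$ in $\exp(-(k+1)\sTD{p}_k\sTD{p}_k)$.

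The remaining — and most delicate — step is to recombine these Grassmann factors into $p_\Lambda(x,\phi,\theta)p_\Lambda(y,\sTt,\sDt)$. The bilined and unmarked contributions are bosonic and can be moved freely, but the interleaved fermionic product $\prod_{k\in\sT{\Lambda}}\sT{p}_k(x)\sT{p}_k(y)\prod_{l\in\sD{\Lambda}}\sD{p}_l(x)\sD{p}_l(y)$ must be reordered so that all $x$-side fermions precede all $y$-side ones. Counting odd-parity commutations block by block yields the sign
\[
(-1)^{\binom{\ell(\sT{\Lambda})}{2}+\binom{\ell(\sD{\Lambda})}{2}+\ell(\sT{\Lambda})\ell(\sD{\Lambda})}=(-1)^{\binom{\ell(\sT{\Lambda})+\ell(\sD{\Lambda})}{2}}.
\]
Combined with $(-1)^{\ell(\sTD{\Lambda})}$ and using $\sT{m}=\ell(\sT{\Lambda})+\ell(\sTD{\Lambda})$, $\sD{m}=\ell(\sD{\Lambda})+\ell(\sTD{\Lambda})$, a direct mod-$2$ calculation $\binom{\sT{m}+\sD{m}}{2}\equiv\binom{\ell(\sT{\Lambda})+\ell(\sD{\Lambda})}{2}+\ell(\sTD{\Lambda})\pmod{2}$ produces exactly the sign $(-1)^{\binom{\sT{m}+\sD{m}}{2}}$ of the theorem. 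Finally, the further reordering from "all-$x$-fermions-first" to the value-ordered convention of \eqref{multiA} gives the same sign for $p_\Lambda(x,\phi,\theta)$ and for $p_\Lambda(y,\sTt,\sDt)$, so it enters squared and drops out. The main obstacle is thus keeping the Grassmann signs consistent through all these reorderings; the magnitude side of the identity is then a routine expansion of the exponentials.
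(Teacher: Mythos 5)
Your proposal is correct and follows essentially the same route as the paper's proof: expand $-\log(1-x_iy_j-\phi_i\sTt_j-\theta_i\sDt_j)$, truncate using nilpotency of the Grassmann pieces to get the four power-sum bilinears in the exponent, exponentiate factor by factor, and reduce the accumulated reordering signs via the same mod-$2$ identity $\binom{\sT{m}+\sD{m}}{2}\equiv\binom{\ell(\sTL)+\ell(\sDL)}{2}+\ell(\sTDL)$. The only cosmetic differences are that you write the truncated logarithm in closed form and treat the fermionic exponentials directly where the paper invokes Theorem 33 of \cite{classicalN1}, and your explicit observation that the final reordering to the canonical part order contributes the same sign to $p_\Lambda(x,\phi,\theta)$ and to $p_\Lambda(y,\sTt,\sDt)$ (hence cancels) is a detail the paper leaves implicit.
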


\begin{remark} The reproducing kernel is the natural generalization of the cases $\Nm=0$ and $\Nm=1$.
But the actual justification is that this expression leads to a scalar product with respect to which $m_\La$ and $h_\Om$ are dual to each other (see Proposition \ref{mhdual} below).
\end{remark}

		\begin{proof} We have
			\begin{align}
				\prod_{i,j}&\frac{1}{1-x_iy_j - \phi_i\sTt_j - \theta_i\sDt_j} \nonumber \\
					&= \exp\l(\sum_{i,j}\ln(1-x_iy_j - \phi_i\sTt_j - \theta_i\sDt_j)^{-1}\r)	\nonumber \\
					&=\exp\left\lbrace\sum_{i,j} \sum_{n\geq 1}\frac{1}{n}(x_iy_j + \phi_i\sTt_j + \theta_i\sDt_j)^n )\right\rbrace \nonumber \\
										&=\exp\left\lbrace\sum_{i,j} \sum_{n\geq 1}\frac{1}{n}\left( (x_iy_j)^n +\sum_{k\geq 1}^{n}{\binom{n}{k}}(x_iy_j)^{n-k} (\phi_i\sTt_j + \theta_i\sDt_j)^k \right) \right\rbrace \nonumber \\
					&=\exp\left\lbrace\sum_{i,j} \sum_{n\geq 1}\frac{1}{n}\left( (x_iy_j)^n +\frac{n!}{(n-1)!1!}(x_iy_j)^{n-1} (\phi_i\sTt_j + \theta_i\sDt_j)\right. \right.\nonumber \\
					& \qquad \quad \left.\left. + \frac{n!}{(n-2)!2!}(x_iy_j)^{n-2} (\phi_i\sTt_j + \theta_i\sDt_j)^2 \right) \right\rbrace \nonumber \\
					&= \exp\left\lbrace \sum_{n \geq 1} \frac{p_n(x)p_n(y)}{n}+\sum_{n\geq 0}\bigl(\sT{p}_n(x,\phi)\sT{p}_n(y,\sTt) +\sD{p}_n(x,\theta)\sD{p}_n(y,\sDt)\bigr) \right. \nonumber \\
					&\qquad \quad \left. - \sum_{n\geq 0}(n+1)\sTD{p}_n(x_i, \phi_i, \theta_i) \sTD{p}_n(y_j,\sTt_j,\sDt_j)   \right\rbrace.
			\end{align}
			In the fourth equality,  we used the fact that $(\phi_i\sTt_j + \theta_i\sDt_j)^3=0$.  Note also that
the term in 
$(\phi_i\sTt_j + \theta_i\sDt_j)^2$ only appears if $n\geq 2$ in the penultimate
expression.   
Now,
			using Theorem 33 of \cite{classicalN1}, we can write
			\begin{align}
				&\exp\left\lbrace \sum_{n \geq 1} \frac{p_n(x)p_n(y)}{n}+\sum_{n\geq 0}(\sT{p}_n(x,\phi)\sT{p}_n(y,\sTt) +\sD{p}_n(x,\theta)\sD{p}_n(y,\sDt))\right\rbrace \nonumber\\
				&={	\sum_{\substack{\lambda \,\in\, \text{Par} \\ \mu,\nu \,\in\, \text{ParD}}}  }
				z_\lambda^{-1}p_\lambda(x)p_\lambda(y)
				{(-1)^{\binom{\ell(\mu)}{2}+\binom{\ell(\nu)}{2}}}\,\sT{p}_\mu(x,\phi)\sT{p}_\mu(y,\sTt)\sD{p}_\nu(x,\theta)\sD{p}_\nu(y,\sDt),
			\end{align}
			where  Par and ParD stand respectively for the set of partitions and the set of partitions with distinct parts.
We can thus concentrate on the remaining term:
			\begin{align}
				&\prod_{n\geq0} \exp \left\lbrace -(n+1)\sTD{p}_n(x,\phi,\theta)\sTD{p}_n(y,\sTt,\sDt)  \right\rbrace \nonumber \\
				&\quad= \prod_{n\geq0}\sum_{k\geq0}\frac{1}{k!} \left[ -(n+1)\sTD{p}_n(x,\phi,\theta)\sTD{p}_n(y,\sTt,\sDt)  \right]^k \nonumber \\
				&\quad= \prod_{n\geq0}\sum_{k\geq0} \frac{(-1)^k (n+1)^k}{k!} \left[  \sTD{p}_n(x,\phi,\theta)\sTD{p}_n(y,\sTt,\sDt)  \right]^k \nonumber \\
				&\quad= \sum_{\rho \,\in\, \text{Par}_0} (-1)^{\ell(\rho)}\zeta_\rho^{-1} \sTD{p}_\rho(x,\phi,\theta)\,\sTD{p}_\rho(y,\sTt,\sDt),
			\end{align} 
			where $\text{Par}_{0}$ is the set of ordinary partitions for which we keep track of the number of zeros (we distinguish for instance the partitions $(1)$, $(1,0)$ and $(1,0,0)$, and we also include $(0)$, $(0,0)$, etc.) and where we recall that  
			\begin{align}
				\zeta_\lambda &= \prod_{k\geq 0} \frac{n_\lambda(k)!}{(k+1)^{n_\lambda(k)}}. 
			\end{align}
			Observing that
			\begin{align} 
				\sT{p}_\mu(y,\phi) \sD{p}_\nu(x,\theta) = (-1)^{\ell(\mu)\ell(\nu)}\sD{p}_\nu(x,\theta) \sT{p}_\mu(y,\phi),
			\end{align}
			we combine these results to get 
			\begin{align}
				K(x,y,\phi,\theta,\sTt,\sDt)
				&= \sum_{ \Lambda \,\in\, \text{SPar}} (-1)^{\binom{\ell(\sTL)+{\ell({\sDL})}}{2} + \ell(\sTDL)} z_\Lambda^{-1} \zeta_\Lambda^{-1} p_\Lambda(x,\phi,\theta)p_\Lambda(y,\sTt,\sDt).
			\end{align}
			Using the relation $\ell({\sTL}) + \ell({\sDL}) = \sT{m} + \sD{m} -2\ell(\sTDL)$ we can obtain a more elegant form for the sign.  Denoting  $\ell(\sTDL)$ as $\sTD{\ell}$ for convenience, we have
			\begin{align}
				\frac{(\ell(\sTL)+\ell(\sDL))(\ell(\sTL)+\ell(\sDL)-1)}{2} + \sTD{\ell}
				&= \frac{(\sT{m} + \sD{m} - 2\sTD{\ell})(\sT{m} + \sD{m} - 2\sTD{\ell} -1)}{2}+ \sTD{\ell} \nonumber \\
				&= \frac{(\sT{m} + \sD{m})(\sT{m} + \sD{m} -1)}{2} -\sTD{\ell}\,(2\sT{m} + 2\sD{m} -1) + \sTD{\ell} \; \mod 2 \nonumber\\
				&= \binom{\sT{m} + \sD{m}}{2}\; \mod 2, \label{eq92}
			\end{align}
			so that 
			\begin{align}
				K(x,y,\phi,\theta,\sTt,\sDt)
				&= \sum_{ \Lambda \,\in\, \text{SPar}} (-1)^{\binom{\sT{m} + \sD{m}}{2}} z_\Lambda^{-1} \zeta_\Lambda^{-1} p_\Lambda(x,\phi,\theta)p_\Lambda(y,\sTt,\sDt) \, ,
			\end{align}
			which completes the proof. 
		\end{proof}

		\begin{corollary} $K(x,y,\phi,\theta,\sTt,\sDt)$ is a reproducing kernel in the space of symmetric superfunctions:
			\begin{align}
				\braket{K^\top(x,y,\phi,\theta,\sTt,\sDt) | f(x,\phi,\theta)} = f(y,\sTt,\sDt),\; \forall \; f \in \mathscr{P}^{S_\infty} 
			\end{align}
		{where $\top$ acts only on the variables $\phi, \ta$.}
		\end{corollary}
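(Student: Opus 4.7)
The plan is to verify the reproducing identity on the power-sum basis $\{p_\Omega\}_{\Omega \in \text{SPar}}$ and extend by linearity to all of $\mathscr{P}^{S_\infty}$. The completeness of this basis, needed to make this reduction valid, is established in Appendix C. Thus it suffices to show
\begin{align}
\braket{K^\top(x,y,\phi,\theta,\sTt,\sDt)\,|\,p_\Omega(x,\phi,\theta)} = p_\Omega(y,\sTt,\sDt)
\end{align}
for every superpartition $\Omega$.

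The main input is Theorem \ref{Th: Kernel}, which expands $K$ explicitly in the power-sum basis. Applying $\top$ on the $(\phi,\theta)$-side replaces each $p_\Lambda(x,\phi,\theta)$ by $p_\Lambda^\top(x,\phi,\theta)$, yielding
\begin{align}
K^\top = \sum_{\Lambda \in \text{SPar}} (-1)^{\binom{\sT m + \sD m}{2}}\, z_\Lambda^{-1}\, \zeta_\Lambda^{-1}\, p_\Lambda^\top(x,\phi,\theta)\, p_\Lambda(y,\sTt,\sDt).
\end{align}
Substituting this into the pairing and factoring the $(y,\sTt,\sDt)$-dependent piece outside the bracket (since the scalar product acts only on the $(x,\phi,\theta)$-variables), the defining orthogonality relation $\langle p_\Lambda^\top | p_\Omega \rangle = z_\Lambda\zeta_\Lambda\delta_{\Lambda,\Omega}$ collapses the sum onto $\Lambda = \Omega$. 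The $z_\Lambda\zeta_\Lambda$ cancels its inverse, leaving $p_\Omega(y,\sTt,\sDt)$ up to a global sign that remains to be checked.

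The main obstacle --- indeed the only nontrivial computational point --- is the bookkeeping of fermionic signs: the explicit factor $(-1)^{\binom{\sT m + \sD m}{2}}$ in the kernel expansion together with any sign arising from commuting the anticommuting piece $p_\Lambda(y,\sTt,\sDt)$ through the bra-ket must conspire to give $+1$ in every sector $(\sT m, \sD m)$. This is precisely the purpose of the sign $(-1)^{\binom{\sT m + \sD m}{2}}$ built into the definition $p_\Lambda^\top = (-1)^{\binom{\sT m + \sD m}{2}} p_\Lambda$, whose role --- as the paper notes right after the scalar-product definition --- is ``to simplify the computations that will appear later on.'' Carrying out this elementary sign check, analogous to the one performed in \eqref{eq92} during the proof of Theorem \ref{Th: Kernel}, completes the argument.
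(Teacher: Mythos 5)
Your proposal is correct and follows essentially the same route as the paper's own proof: the paper likewise writes $f=\sum_\Lambda f_\Lambda p_\Lambda$, substitutes the power-sum expansion of $K$ from Theorem \ref{Th: Kernel}, and collapses the double sum via $\langle p_\Omega^\top\,|\,p_\Lambda\rangle=z_\Lambda\zeta_\Lambda\delta_{\Lambda,\Omega}$, with the sign $(-1)^{\binom{\sT{m}+\sD{m}}{2}}$ absorbed into $p_\Omega^\top$ exactly as you indicate. Checking the identity on the basis elements $p_\Omega$ and extending by linearity is the same argument in a trivially different packaging, and your remaining sign bookkeeping is handled in the paper just as implicitly as in your write-up.
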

				\begin{proof} If $f \in \mathscr{P}^{S_\infty}$ then there exist unique coefficients $f_\Lambda$ such that $f = \sum_\Lambda f_\Lambda p_\Lambda$. Hence, 
			\begin{align}
				\braket{K^\top(x,y,\phi,\theta,\sTt,\sDt) | f(x,\theta,\phi)} 
				&= \sum_{\Lambda, \Omega}
								z^{-1}_\Omega \zeta_\Omega^{-1} f_\Lambda \braket{p^\top_\Omega(x,\phi,\theta) | p_\Lambda(x,\phi,\theta)} p_\Omega(y,\sTt,\sDt) \nonumber \\
					&=\sum_\Lambda f_\Lambda p_\Lambda(y,\sTt,\sDt) = f(y, \sTt,\sDt).
			\end{align}
				\end{proof}
		\begin{proposition}
			Let $u_\Lambda$ and $v_\Lambda$ be two bases of $\mathcal{P}^{S_\infty}_{(n|\sD{m},\sT{m})}$.  We then  have
			\begin{align}
				K(x,y,\phi, \sTt, \theta, \sDt) = \sum_{{\rm SPar}} (-1)^{\binom{\sT{m}+\sD{m} }{2} }u_\Lambda(x,\phi,\theta) v_\Lambda(y,\sTt,\sDt) 
				\quad \Longleftrightarrow \quad
				\braket{u_\Lambda^\top | v_\Omega} = \delta_{\Omega \Lambda}.
			\end{align}
		\end{proposition}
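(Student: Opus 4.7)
My plan is to reduce the biconditional to a single matrix identity via the power-sum basis, working sector by sector.

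First I would observe that both sides of the equivalence respect the sector grading. The scalar product $\braket{u_\Lambda^\top | v_\Omega}$ automatically vanishes when $\Lambda$ and $\Omega$ belong to different sectors, since this is forced by $\braket{p_\Gamma^\top | p_\Delta} = z_\Gamma \zeta_\Gamma \delta_{\Gamma\Delta}$ and bilinearity. Similarly, both $K$ (by Theorem~\ref{Th: Kernel}) and the candidate sum $\sum_\Lambda (-1)^{\binom{\sT{m}+\sD{m}}{2}} u_\Lambda(x,\phi,\theta) v_\Lambda(y,\sTt,\sDt)$ decompose as direct sums over sectors $(n|\sT{m},\sD{m})$. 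Within a fixed sector $\sigma$, the combinatorial sign $(-1)^{\binom{\sT{m}+\sD{m}}{2}}$ is a global constant that appears identically on both sides of the kernel equation and therefore cancels.

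Within $\sigma$, I would expand
\[
u_\Lambda = \sum_{\Gamma \,\in\, \sigma} A_{\Lambda\Gamma}\, p_\Gamma, \qquad v_\Lambda = \sum_{\Gamma \,\in\, \sigma} B_{\Lambda\Gamma}\, p_\Gamma,
\]
with $A$ and $B$ invertible square matrices (since $\{u_\Lambda\}$ and $\{v_\Lambda\}$ are bases of $\mathscr{P}^{S_\infty}_\sigma$), and set $D := \mathrm{diag}(z_\Gamma \zeta_\Gamma)_{\Gamma \in \sigma}$. Bilinearity of the scalar product together with the definition $\braket{p_\Gamma^\top | p_\Delta} = z_\Gamma \zeta_\Gamma \delta_{\Gamma\Delta}$ gives $\braket{u_\Lambda^\top | v_\Omega} = (A D B^T)_{\Lambda\Omega}$, so the orthogonality condition is exactly $A D B^T = I$. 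On the kernel side, the $\sigma$-component of the candidate sum equals $\sum_{\Gamma, \Delta} (A^T B)_{\Gamma\Delta}\, p_\Gamma(x,\phi,\theta) p_\Delta(y,\sTt,\sDt)$, while, again by Theorem~\ref{Th: Kernel}, the $\sigma$-component of $K$ equals $\sum_\Gamma z_\Gamma^{-1}\zeta_\Gamma^{-1}\, p_\Gamma(x,\phi,\theta) p_\Gamma(y,\sTt,\sDt)$. By linear independence of the products $\{p_\Gamma(x,\phi,\theta)\, p_\Delta(y,\sTt,\sDt)\}_{\Gamma,\Delta \in \sigma}$ in the bi-symmetric tensor product, the kernel condition reduces to the matrix equation $A^T B = D^{-1}$.

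Finally, since $A$ and $D$ are invertible, the two matrix equations $A D B^T = I$ and $A^T B = D^{-1}$ are equivalent: starting from $A D B^T = I$ and isolating $B^T = D^{-1} A^{-1}$, one transposes to obtain $B = (A^T)^{-1} D^{-1}$, i.e.\ $A^T B = D^{-1}$; the other direction is the same algebra in reverse. This simultaneously establishes both implications of the proposition. The main obstacle will be the bookkeeping around the sector decomposition and verifying linear independence of the products $p_\Gamma(x,\phi,\theta)\, p_\Delta(y,\sTt,\sDt)$ in the bi-symmetric tensor space; once that is granted, the rest is a routine linear-algebra argument.
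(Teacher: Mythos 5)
Your proof is correct, and it is essentially the standard transition-matrix argument (the one in Macdonald I.(4.6) and its $\mathcal{N}=1$ analogue) that the paper itself invokes by omitting the proof and citing \cite{Macdonald1998,classicalN1}; the sector-by-sector reduction, the two matrix equations $ADB^{T}=I$ and $A^{T}B=D^{-1}$, and their equivalence are exactly the intended route. The only point worth making explicit is that each sector $(n|\sT{m},\sD{m})$ contains finitely many superpartitions (the zero parts of $\sTDL$ are bounded by $\min(\sT{m},\sD{m})$), so $A$, $B$, $D$ are genuine finite invertible matrices and the inversions are licit.
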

		\begin{proof} The proof is similar to that of the cases $\mathcal N=0$
and $\mathcal N=1$ (see \cite{Macdonald1998,classicalN1}) and will thus be omitted. \end{proof}

		\begin{proposition} \label{mhdual}The monomial and the homogeneous symmetric functions are dual bases, that is,
			\begin{align}
				K(x,y,\phi, \sTt, \theta, \sDt) = \sum_{\text{SPar}} (-1)^{\binom{\sT{m}+\sD{m} }{2} }m_\Lambda(x,\phi,\theta) h_\Lambda(y,\sTt,\sDt),
			\end{align}
			or, equivalently, 
			\begin{align} \braket{m_\Lambda^\top | h_\Omega} = \delta_{\Omega \Lambda} .\end{align}
		\end{proposition}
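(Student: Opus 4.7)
The plan is to expand the reproducing kernel directly and establish the identity
$$K(x,y,\phi,\sTt,\theta,\sDt) = \sum_{\Lambda \in \text{SPar}} (-1)^{\binom{\sT m + \sD m}{2}} m_\Lambda(x,\phi,\theta)\, h_\Lambda(y,\sTt,\sDt),$$
from which the duality $\braket{m_\Lambda^\top | h_\Omega} = \delta_{\Lambda\Omega}$ is immediate by the preceding proposition.

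The first step is to observe that each local factor $\frac{1}{1-x_iy_j-\phi_i\sTt_j-\theta_i\sDt_j}$ is Grassmann-even, since $\phi_i\sTt_j$ and $\theta_i\sDt_j$ are products of two odd variables. Distinct local factors therefore commute, so that
$$K = \prod_i \prod_j \frac{1}{1-(x_iy_j+\phi_i\sTt_j+\theta_i\sDt_j)}.$$
For each fixed $i$, the inner product over $j$ has exactly the form of the generating function $H$ in \eqref{eq:GenComplete}, but with its internal product now running over the $y$-side variables $(y_j,\sTt_j,\sDt_j)$ and with the auxiliary parameters $(t,\sTt,\sDt)$ specialized to $(x_i,\phi_i,\theta_i)$. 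Using the explicit expansion of $H$, this yields
$$K = \prod_i \sum_{n_i \geq 0} x_i^{n_i}\Bigl( h_{n_i} + \phi_i\, \sT h_{n_i} + \theta_i\, \sD h_{n_i} - \phi_i\theta_i\, \sTD h_{n_i}\Bigr),$$
where the $h$-type symmetric functions on the right now live in the $y$-side variables.

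The second step is to expand this product. Each term amounts to selecting, for every index $i$, a type $T_i \in \{\text{unmarked, overlined, underlined, bilined}\}$ and an integer $n_i \geq 0$, contributing $x_i^{n_i}$ times a Grassmann piece $1,\,\phi_i,\,\theta_i$, or $-\phi_i\theta_i$ in the $x$-side variables, along with a generator $h_{n_i},\,\sT h_{n_i},\,\sD h_{n_i}$, or $\sTD h_{n_i}$ in the $y$-side variables. Collecting the multiset $\{(T_i,n_i)\}$ into a superpartition $\Lambda$ ordered as in Section 2 partitions the sum over all choices into classes indexed by $\Lambda \in \text{SPar}$. Within each class, summing over the distinct labelings of which indices $i$ carry which parts of $\Lambda$ produces, in the $x$-side variables, exactly the sum over distinct permutations that defines $m_\Lambda(x,\phi,\theta)$ in Definition \ref{def:monomials}; on the $y$-side, the generators combine into $h_\Lambda(y,\sTt,\sDt)$ once reordered to the convention of Table \ref{tab1}.

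The substance of the argument lies in the careful bookkeeping of the signs, which is where I expect the main obstacle. Three sources contribute: the minus sign in $-\phi_i\theta_i\,\sTD h_n$, producing a factor $(-1)^{\ell(\sTDL)}$ from the $\ell(\sTDL)$ bilined choices; the signs picked up when the odd generators $\sT h_n$ and $\sD h_n$ are reordered into the canonical order of $h_\Lambda$, which requires moving $\ell(\sTL)+\ell(\sDL)$ fermionic elements past each other; and the signs incurred when the product $\prod_i G_i$ of Grassmann factors in the $x$-side variables is brought into the canonical form $[\phi;\theta]_\Lambda$ of Definition \ref{def:monomials}. A parity computation entirely analogous to \eqref{eq92} in the proof of Theorem \ref{Th: Kernel}, together with the identity $\ell(\sTL) + \ell(\sDL) = \sT m + \sD m - 2\ell(\sTDL)$, reduces the combined sign modulo $2$ to the announced global factor $(-1)^{\binom{\sT m + \sD m}{2}}$. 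Once this reconciliation is complete, the expansion matches the claimed form and the proposition follows.
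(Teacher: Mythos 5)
Your proposal is correct and follows essentially the same route as the paper: recognizing the kernel as $\prod_i H(x_i,\phi_i,\theta_i)$ with the $H$-generators living in the $y$-side variables, expanding the product into marked sequences that regroup into $m_\Lambda h_\Lambda$, and reducing the accumulated sign to $(-1)^{\binom{\sT{m}+\sD{m}}{2}}$ via the parity identity \eqref{eq92}. Your sign bookkeeping is in fact slightly more explicit than the paper's (which only tracks the commutation of each $h_{\alpha_i}$ past the later Grassmann factors together with the $(-1)^{\ell(\sTDL)}$ from the bilined terms), but the two accountings agree.
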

		\begin{proof}
			\begin{align}
				K(x,y,\phi, \sTt, \theta, \sDt) 	&= \prod_i \frac{1}{1-x_i y_i - \phi_i\sTt_i - \theta_i\sDt_i} =\prod_i  H(x_i, \theta_i, \phi_i) \nonumber\\
					&= \prod_i \sum_n x_i^n(h_n + \theta_i\, \sD{h}_n + \phi_i \,\sT{h}_n - \phi_i \theta_i \,\sTD{h}_n) \nonumber \\
					&= \sum_{\alpha} (-1)^{ \binom{\sT{m}+\sD{m} }{2} } [\phi;\theta]_\alpha x^\alpha h_\alpha \nonumber \\
					&= \sum_{\La\in \text{SPar}} (-1)^{ \binom{\sT{m}+\sD{m} }{2} }m_\sL(x,\phi,\theta) h_\sL(y,\sTt,\sDt),
			\end{align}
with $x^\a$ defined as in \eqref{xla} and
where $\alpha$ runs over all sequences of marked nonnegative integers (that is, either overlined, underlined, bilined or ordinary integers) with a finite
number of non-zero entries (considering that $\overline 0$ is not a zero entry for instance).   The sign is understood in the following way.  
Each $h_{\alpha_i}$ (we use the obvious notation $h_{\bar n}$ for $\overline{h}_n$, $h_{\underline{n}}$ for $\underline{h}_n$, and
 $h_{\overline{\underline{n}}}$ for $\overline{\underline{h}}_n$) needs to commute with $[\phi;\theta]_{(\alpha_{i+1}, \alpha_{i+2},\dots)}$ thus producing, when considering all $i$'s, 
a sign of $(-1)^{\binom{\ell(\overline{\alpha})+{\ell(\underline{\alpha})}}{2}}$.  
Taking into account the extra $(-1)^{\ell(\underline{\overline{\alpha}})}$ coming from the terms of the type $-\phi_i \theta_i \,\sTD{h}_n$, we get a total sign of  $(-1)^{\binom{\ell(\overline{\alpha})+{\ell(\underline{\alpha})}}{2} + \ell(\underline{\overline{\alpha}})}$ which, as we have seen in \eqref{eq92}, is equal to $(-1)^{ \binom{\sT{m}+\sD{m} }{2} }$.

		\end{proof}

		We complete this section by taking advantage of the notation introduced in the definition of the scalar product in order to present concise expressions for the complete and elementary symmetric functions in terms of power sums.
		
		\begin{proposition} \label{cor:basisInpowersum}
						The relations between $h$ and $p$ (resp. $e$ and $p$) in Table \ref{tab2} can be rewritten compactly as
			\begin{align}
				h_n &= \sum_{\Lambda \vdash (n|0,0)} z^{-1}_\Lambda p_\Lambda, & e_n &= \sum_{\Lambda \vdash (n|0,0)} z^{-1}_\Lambda \omega_\Lambda p_\Lambda,  \\
				\sT{h}_n &= \sum_{\Lambda \vdash (n|1,0)} z^{-1}_\Lambda p_\Lambda, & \sT{e}_n &= \sum_{\Lambda \vdash (n|1,0)} z^{-1}_\Lambda \omega_\Lambda p_\Lambda, \label{eq:hEnP}\\
			\sTD{h}_n &= \sum_{\Lambda \vdash (n | 1,1)} \zeta_\Lambda^{-1} z_\Lambda^{-1} p_\Lambda, 
			&	\sTD{e}_n &= \sum_{\Lambda \vdash (n | 1,1)} \zeta_\Lambda^{-1} z_\Lambda^{-1} \omega_\Lambda p_\Lambda. \label{eq:newhebarbarenP}\end{align}
Note that the equations on the second line also hold with overlines replaced by underlines and $(n|1,0)$ replaced by $(n|0,1)$.\\
		\end{proposition}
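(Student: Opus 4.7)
The plan is to extract these closed-form expansions from the generating function $H(t,\sTt,\sDt)$ by re-using the logarithmic computation carried out in the proof of Theorem~\ref{Th: Kernel}, and then to obtain the $e$-formulas by applying the involution $\omegahat$. Specializing the calculation in that proof (taking $y_i=t$, $\sTt_i=\sTt$, $\sDt_i=\sDt$) yields
\begin{equation*}
H(t,\sTt,\sDt)=\exp(F_1+F_2+F_3+F_4),
\end{equation*}
with $F_1=\sum_{n\geq 1}t^n p_n/n$, $F_2=\sum_{n\geq 0}t^n\sTt\,\sT{p}_n$, $F_3=\sum_{n\geq 0}t^n\sDt\,\sD{p}_n$, and $F_4=-\sum_{n\geq 0}(n+1)t^n\sTt\sDt\,\sTD{p}_n$. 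All four summands are bosonic and thus pairwise commute; each of $F_2,F_3,F_4$ squares to zero because of its Grassmannian factor; and $F_2F_4=F_3F_4=0$ since multiplying $F_4$ by $F_2$ or $F_3$ repeats either $\sTt$ or $\sDt$. These relations collapse the exponential to
\begin{equation*}
H(t,\sTt,\sDt)=\exp(F_1)\bigl(1+F_2+F_3+F_2F_3+F_4\bigr).
\end{equation*}

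I would then read off the coefficients of $t^n$, $\sTt t^n$, $\sDt t^n$, and $-\sTt\sDt t^n$ in this product and compare with $H=\sum_n t^n(h_n+\sTt\sT{h}_n+\sDt\sD{h}_n-\sTt\sDt\sTD{h}_n)$. The $t^n$ piece reproduces the classical $h_n=\sum_{\lambda\vdash n} z_\lambda^{-1}p_\lambda$. The $\sTt t^n$ piece, being the convolution of $\exp(F_1)$ with $F_2$, gives $\sT{h}_n=\sum_{|\lambda|+k=n}z_\lambda^{-1}p_\lambda\,\sT{p}_k$, which is repackaged as $\sum_{\Lambda\vdash(n|1,0)}z_\Lambda^{-1}p_\Lambda$ via the bijection $\Lambda=(\sT{k},\lambda)$ (noting $z_\Lambda=z_{\Lambda_0}$ and $\zeta_\Lambda=1$ in that sector); the $\sDt t^n$ piece is handled identically. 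For $\sTD{h}_n$ two contributions arise: the $F_4$ term delivers $\sum_{|\lambda|+k=n}(k+1)\,z_\lambda^{-1}p_\lambda\,\sTD{p}_k$, matching the superpartitions $\Lambda=(\sTD{k},\lambda)\in\SPar{n|1,1}$ via the identity $\zeta_\Lambda^{-1}=k+1$ dictated by \eqref{zetadef}; and the $F_2F_3$ term, once the sign arising from moving $\sDt$ past $\sT{p}_k$ is accounted for, produces a sum over the remaining $(n|1,1)$-superpartitions (those with $\sTDL=\varnothing$, on which $\zeta_\Lambda=1$).

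The $e$-formulas then follow at once by applying the involution $\omegahat$ from the preceding proposition to each $h$-identity and invoking $\omegahat(p_\Lambda)=\omega_\Lambda p_\Lambda$, since $\omegahat$ maps $h_n,\sT{h}_n,\sD{h}_n,\sTD{h}_n$ to $e_n,\sT{e}_n,\sD{e}_n,\sTD{e}_n$ respectively. The main obstacle I anticipate is the careful sign bookkeeping in the $F_2F_3$ contribution: one must commute the Grassmannian generators past each other to match the explicit power-sum products $\sT{p}_k\sD{p}_l$ arising from the expansion against the ordered $p_\Lambda$ prescribed by the superpartition convention, and verify that these signs align consistently to give the uniform combinatorial weight $\zeta_\Lambda^{-1}z_\Lambda^{-1}$ in every case.
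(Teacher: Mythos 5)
Your route is genuinely different from the paper's. The paper proves the $h$-identities by induction on the fermionic degree: it starts from the classical $h_n=\sum_{\lambda\vdash n}z_\lambda^{-1}p_\lambda$, applies the exterior derivative $\sT{d}$, computes $\ll\sT{d}p_\Lambda\rr$ term by term, and tracks how $z_\lambda$ mutates into $z_{\lambda'}$ and how the $\zeta$-factor is created when $\sT{d}$ hits the underlined generator. You instead take the logarithm of $H(t,\sTt,\sDt)$ (a one-variable specialization of the kernel computation) and use the nilpotency of the Grassmann-laden summands to collapse the exponential to $e^{F_1}(1+F_2+F_3+F_2F_3+F_4)$. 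Your $F_i$'s are correct, the scheme delivers all four expansions at once, it makes the origin of $\zeta_\Lambda^{-1}=|\sTD{\Lambda}_1|+1$ transparent (it is the $(n+1)$ in $F_4$), and your identifications of the $t^n$, $\sTt t^n$, $\sDt t^n$ and $F_4$ contributions all check out; passing to the $e$-column via $\hat{\omega}$ is exactly what the paper does as well.

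The step you defer, the sign bookkeeping in the $F_2F_3$ term, is precisely the delicate point, and it does not resolve into a uniformly positive weight quite as stated. Carrying it out: $F_2F_3=-\sum_{k,l}t^{k+l}\,\sTt\sDt\,\sT{p}_k\sD{p}_l$, so the contribution to $\sTD{h}_n$ is $+\sum z_\lambda^{-1}\,\sT{p}_k\sD{p}_l\,p_\lambda$, always with the overlined generator written first. Under the paper's ordering convention for $p_\Lambda$ (generators multiplied in the order of the parts, parts sorted by decreasing size) one has $p_\Lambda=\sD{p}_l\sT{p}_k p_\lambda=-\sT{p}_k\sD{p}_l p_\lambda$ whenever $l>k$, so a sign $(-1)$ appears on exactly those $\Lambda$ with $|\sDL_1|>|\sTL_1|$ --- the same sign that is visible in the coefficient $c_2$ of Table \ref{tab1}. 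Concretely, the generating function gives $\sTD{h}_1=2\sTD{p}_1+\sTD{p}_0p_1+\sT{p}_1\sD{p}_0+\sT{p}_0\sD{p}_1+\sT{p}_0\sD{p}_0p_1$, and $\sT{p}_0\sD{p}_1=-p_{(\sD{1},\sT{0})}$. To finish you must therefore either adopt the convention that in this sector $p_\Lambda$ means the product with $\sT{p}$ placed before $\sD{p}$ irrespective of part sizes, or carry the corresponding sign; note that the paper's own proof makes the same silent choice when it writes $\ll\sT{d}p_\Lambda\rr$ without the odd-derivation signs. Apart from settling this convention, your argument is complete.
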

		\begin{proof} Clearly, it suffices to prove the relations  in the first column since the other ones can be obtained by applying $\hat\om$.
			Even though \eqref{eq:hEnP} is proven in \cite[Corollary 36]{classicalN1}, we provide here a new and more direct proof whose ideas will be useful in the demonstration of \eqref{eq:newhebarbarenP}.\\

			We know from \cite{Macdonald1998} that $h_n = \sum_{\lambda \vdash n} z^{-1}_\lambda p_\lambda$.  Acting on both sides with $\sT{d}$, we get 
			\begin{align}
				\sT{h}_{n-1}=\ll\sT{d}h_n\rr =\l\ll\sum_{\lambda \vdash n}z^{-1}_\lambda \sT{d}p_\lambda\r\rr.
			\end{align}
			We calculate the action of the exterior derivative on the power sum
			\begin{align}
				\ll\sT{d}p_\lambda\rr = \sum_{i} \lambda_i\, p_{(\lambda_1, \dots, \overline{\lambda_i-1}, \dots)} 
				= \sideset{}{'}\sum_{i} \lambda_i\, n_\lambda(\lambda_i) \,p_{(\lambda_1, \dots, \overline{\lambda_i-1}, \dots)},
			\end{align}
			where the prime indicates that the sum is restricted {to parts of $\lambda$ associated to rows ending with a removable corner} and where, as usual, $n_\lambda(\lambda_i)$ is the number of occurrence of $\lambda_i$ in $\lambda$. Therefore
			\begin{align}
				\sT{h}_{n-1} = \sum_{\lambda \vdash n}z^{-1}_\lambda  \sideset{}{'}\sum_{i} \lambda_i \,n_\lambda(\lambda_i) \, p_{(\lambda_1, \cdots, \overline{\lambda_i-1}, \cdots)}.
			\end{align}
			It is easy to check that
			\begin{align}
				\frac{\lambda_i n_\lambda(\lambda_i)}{z_\lambda} 
= z_{\lambda^\prime}^{-1}, \quad \text{with } \lambda^\prime = \left. \lambda\right|_{n_\lambda(\lambda_i) \rightarrow n_\lambda(\lambda_i) -1}.
			\end{align}
Hence, we obtain
			\begin{align}
				\sT{h}_{n-1} &= \sum_{\lambda \vdash n} \sideset{}{'}\sum_{i} z^{-1}_{\lambda'}   p_{\lambda'} \sT{p}_{\lambda_i -1} \nonumber \\
				&= \sum_{\Lambda \vdash (n-1|1, 0)} z^{-1}_\Lambda p_\Lambda,
			\end{align}
which completes the proof of the first relation in \eqref{eq:hEnP}. 						We  can then obtain the first relation in \eqref{eq:newhebarbarenP} by acting on $\sD{h}_n$ with $\sT{d}$:
			\begin{align} 
			\sTD{h}_{n-1}=\ll\sT{d}\sD{h}_n \rr&= \l\ll\sum_{\Lambda \vdash (n|1,0)}z^{-1}_\Lambda \sT{d}p_\Lambda\r\rr \nb\\
				&= \sum_{\Lambda \vdash (n|1,0)} z^{-1}_\Lambda
				\left[\, \sum_{{i ,\, \Lambda_i \notin \sD{\Lambda}}} n_\Lambda(\Lambda_i)\,\Lambda_i\, p_{(\Lambda_1, \cdots, \overline{\Lambda_i - 1}, \cdots)} + \Lambda_k \, p_{(\Lambda_1, \cdots, \sTD{\Lambda_k -1}, \cdots)} \right], \label{eq104} 
			\end{align}
where $k$ is the position of the overlined entry in $\Lambda$.  As in the first part of the proof, we have
\begin{align} \label{eq105}
 \sum_{\Lambda \vdash (n|1,0)} z^{-1}_\Lambda
				\sum_{{i ,\, \Lambda_i \notin \sD{\Lambda}}} n_\Lambda(\Lambda_i)\,\Lambda_i\, p_{(\Lambda_1, \cdots, \overline{\Lambda_i - 1}, \cdots)} = \sum_\Omega z_\Omega^{-1} p_\Omega = \sum_\Omega \zeta_\Omega^{-1} z_\Omega^{-1} p_\Omega  ,
\end{align} 
where the sum is over all $\Omega$'s of degree $n-1$ with exactly one overlined entry and one underlined entry, and where the last equality holds since there is no bilined entry in those $\Omega$'s.

Now, it is immediate that in \eqref{eq104} we have			
			\begin{align}
	z_{(\Lambda_1, \cdots, \sTD{\Lambda_k -1}, \cdots)} = z_\Lambda \qquad \text{and} \qquad
		  		\zeta_{(\Lambda_1, \cdots, \sTD{\Lambda_k -1}, \cdots)}  = \frac{1!}{(\Lambda_k -1 +1)^{1}} = \Lambda_k^{-1},
			\end{align}
which finally gives from \eqref{eq105} that 
			\begin{align}
				\sTD{h}_{n-1} = \sum_{\Lambda \vdash (n-1 | 1,1)} \zeta_\Lambda^{-1} z_\Lambda^{-1} p_\Lambda.
			\end{align}

		\end{proof}
 	\section{Generalisation to arbitrary  $\mathcal{N}$}

	In this final section, {we extend} the previous $\Nm=2$ results to an arbitrary value of  $\mathcal{N}$.
	This requires the introduction of $\mathcal{N}$ types of anticommuting variables:
		\begin{align}
			\theta^{(t)}_j, \;\text{ with } t = 1, \dots, \mathcal{N},\; \text{ and }\; j=1, \dots, N,
		\end{align}
		where the upper index $t$ distinguishes the different types  of variables (replacing {the $\theta$ and $\phi$ variables} in the $\Nm=2$ case). An arbitrary coordinate in this generic superspace is denoted 
		\begin{align}
			(x,\theta) = (x_1, \dots, x_N, \theta^{(1)}_1,\dots,\theta^{(1)}_N,\dots, \theta^{(\mathcal{N})}_1, \dots, \theta^{(\mathcal{N})}_N ).
		\end{align}
		The symmetric superpolynomials {in the ring of polynomials in
these variables} are defined to be those polynomials which are invariant under the simultaneous exchange of all type of variables, i.e., under the interchange 
		\beq(x_i, \theta^{(1)}_i, \cdots, \theta^{(\mathcal{N})}_i) \longleftrightarrow (x_j, \theta^{(1)}_j, \cdots, \theta^{(\mathcal{N})}_j).\eeq

	\subsection{Superpartition}
				The generalization {of the concept of superpartition to} an arbitrary $\mathcal{N}$ is straightforward. {We will} still denote by $\Lambda$ a generic $\mathcal{N}$-superpartition 
		\begin{align}
			\Lambda &= (\Lambda_1, \Lambda_2, \cdots, \Lambda_\ell).
		\end{align}
{We now have} $\sum_{t \geq 0} \binom{\mathcal{N}}{t} = 2^{\mathcal{N}}$ different types of parts in the superpartition. An arbitrary part will be denoted by $	\Lambda_k=a^{\{T\}}$, where 
$\{T\}$ is a subset of 
$\{1,\dots,\mathcal N\}$ and $a$ is a nonnegative integer.
We will use the notation $|\Lambda_k|$ for the size $a$ 
of the part.		
 We will refer to $m$ as the fermionic degree of the part and make it explicit by writing $m(\Lambda_k
	)$. Likewise $T(\Lambda_k)$ denotes the {fermionic content} of the part $\La_k$. 	For example, if $\Lambda_k$ is a part of numerical value $4$ and of fermionic content $\{ 5,2,1 \}$ then
		\begin{align}
			\Lambda_k = 4^{\{ 5,2,1 \}}, \; m(\Lambda_k) = 3, \; T(\Lambda_k) = \{ 5,2,1 \}. 
		\end{align}
			Parts with a given  fermionic content can be grouped together to form  a partition:
		\begin{align}
			\Lambda_j\; \in\; \Lambda^{\{ T \}} \text{ iff } \Lambda_j = |\Lambda_j|^{\{ T \}}.
		\end{align}
		By direct analogy with the $\mathcal{N}=2$ case, we will refer to these partitions as the constituent partitions (these are the analogues of $ \sTD{\La},\,\sT{\La}, \,\sD{\La} $ and $\La^0$).
		 The constituent partitions are naturally separated into 
two families according to the parity of $m$. 
	The parity entails
		 the following restrictions on the parts of the constituent partitions
		\begin{align}
			\forall \; \Lambda_j, \Lambda_k \; \in \; &\Lambda^{\{ T_m \}} \\
			 m \text{ even : }&	\Lambda_j \geq \Lambda_k \geq  0\; \forall\; j>k \\
			 m \text{ odd : }& \Lambda_j > \Lambda_k \geq  0\; \forall\; j>k.
		\end{align}
	where $T_m$ stands for
an arbitrary  fermionic content but of fixed fermionic degree $m$.\\

	These restrictions on constituent partitions directly yield the following generating function of superpartitions: 
	\begin{align}
		\sum_{\substack{n \geq 0 \\ M = (m_1, \cdots, m_\mathcal{N})}}p(n,M) q^n 
		u_{1}^{m_1}\cdots u_{\mathcal{N}}^{m_\mathcal{N}}
		&= \prod_{n=0}^{\infty} \frac{1}{(1-q^{n+1})}
					\dfrac{\prod\limits_{t \in T_{\text{odd}}}(1+u_{\{ t  \}} q^n)}
			{\prod\limits_{t \in T_{\text{even}}}(1-u_{\{ t  \}} q^n)} ,
			\end{align}
	where $T_{\text{even}}$ and $T_{\text{odd}}$ denote the sets $T_m$ for $m$ even and odd respectively. 	Also, given $\{t\} = \{ t_1, t_2, \cdots, t_k \}$, 
$u_{\{ t  \}}$ stands for $ u_{t_1} u_{t_2} \cdots u_{t_k}$. \\

		We now specify the ordering of the parts of a superpartition
we will use. Let $\xi$ and $\zeta$ denote arbitrary parts in some generic superpartition. Then
		\begin{align}
		\begin{array}{l l}
			&\xi \text{ appears before } \zeta \\
			&\quad\text{if } |\xi| > |\zeta|,\\
			&\quad\text{or if } |\xi| = |\zeta| \text{ and } m(\xi) > m(\zeta), \\
			&\quad\text{or if } |\xi| = |\zeta|,\; m(\xi) = m(\zeta) \text{ and } T_i(\xi) > T_i(\zeta) \\
			&\quad\text{ with $i$ being the smallest value for which } T_k(\xi) \neq T_k(\zeta).
		\end{array}
		\end{align}\\

				The bosonic and total fermionic degrees of a partition are respectively, 
		\begin{align}
			|\Lambda| 	&= \sum_{i=1}^{\ell (\Lambda)} |\Lambda_i| \\
			M(\Lambda)	&= \sum_{i=1}^{\ell (\Lambda)} m(\Lambda_i),
		\end{align}
		where, as before, $\ell(\La)$ refers to
		the length of the superpartition $\Lambda$.
		As an example, consider the following superpartition
		\begin{gather}
			\Lambda = ( 5^{\{ 2,1 \}}, 4, 3^{\{ 5,4,1 \}}, 3^{\{ 4,1 \}}, 2^{\{ 6,4,1 \}}, 2^{\{ 6,3,2 \}},2, 1^{\{ 4,3,2,1 \}}, 1^{\{ 4,3,2,1 \}}, 1^{\{ 1 \}}, 0^{\{ 5,2,1 \}}, 0^{\{ 6,1 \}}), \label{eq:Ex.Nspart}\\
			|\Lambda|= 24,\nonumber \\
			M(\Lambda)= 27. \nonumber
		\end{gather}
		This example also illustrates the rule for the ordering of the parts in $\La$.

	\subsection{Symmetric monomial  superfunctions}
	We now introduce the monomial symmetric function $m_\La$ for
generic $\Nm$. Its definition is expressed in terms of the following compact notation:
		\begin{align}
						\theta^{\{ T(\Lambda_i) \}}_i &= \theta^{\{ t_1, t_2, \cdots, t_{m(\Lambda_i)} \}}_i = \theta^{(t_1)}_i \cdots \theta^{{(t_{m(\Lambda_i)})}}_i.
		\end{align}

		\begin{definition} To every $\Lambda \in \text{SPar}_\mathcal{N}$, we associate the monomial symmetric function
		\begin{align}
			m_\Lambda = \sideset{}{'}\sum_{\omega \in S_N}\mathcal{K}_\omega \prod_{i=1}^{\ell(\Lambda)} \theta^{\{ T(\Lambda_i) \}}_i x^{|\Lambda_i|}_i,
		\end{align}
		where $\mathcal{K}_\omega$ permutes the indices of the variables and the prime indicates that the summation is restricted to distinct permutations. 
		\end{definition}
		\begin{example}
			Consider the following superpartition $(5^{\{6,2,1\}} , 3^{\{1\}} , 1^{\{5,3\}} , 1 , 0^{\{6,5\}})$. The  associated monomial is
			\begin{align}
				m_{(5^{\{6,2,1\}} , 3^{\{1\}} , 1^{\{5,3\}} , 1 , 0^{\{6,5\}})} &= \sideset{}{'}\sum_{\omega \in S_N}\mathcal{K}_\omega
				\theta^{\{6,2,1\}}_1\theta^{\{1\}}_2\theta^{\{5,3\}}_3\theta^{\{6,5\}}_5 x^5_1 x^3_2 x_3 x_4\nonumber \\
				&=\sideset{}{'}\sum_{\omega \in S_N}\mathcal{K}_\omega
			\theta^{(6)}_1 \theta^{(2)}_1 \theta^{(1)}_1 \theta^{(1)}_2 \theta^{(5)}_3 \theta^{(3)}_3 \theta^{(6)}_5\theta^{(5)}_5 x^5_1 x^3_2 x_3 x_4
.			\end{align}
		\end{example}

	\subsection{Multiplicative bases}
		Denote as before  a generic multiplicative basis by $f_\Lambda(z,\theta)$, so that
		\begin{align}
			f_\Lambda = \tilde{f}_{\Lambda_1}\cdots\tilde{f}_{\Lambda_\ell}
		\end{align}
		where
		\begin{align}
			\tilde{f}_{\Lambda_i} = f^{\{ T_m \}}_{n},\text{ for } \Lambda_i = n^{\{T_m\}},
		\end{align}
		where $m$ is the fermionic degree of the part (or equivalently the length of $T_m$) and $f^{\{\varnothing\}}_n := f_n$. Each of those components are defined as 
		\begin{align}
			f^{\{ T_m \}}_{n} \propto \left\llbracket d^{ \{ T_m \} }f_{n+m} \right\rrbracket,
		\end{align}
		(extending in  a natural way   to generic $\Nm$ the meaning of the notation $\ll\,\rr$ defined in \eqref{NOTA}) and with
		\begin{align}
			d^{ \{ T_m \} }:= \prod_{t \in T_m} d^{(t)} ,
		\end{align}
where $d^{\{\varnothing\}}=1$.\\

We will use the following normalization for the three multiplicative bases:
		\begin{align}
			p_n^{\{ T_m \}} &:= \frac{n!}{(n+m)!} \left\llbracket d^{ \{ T_m \} } p_{n+m} \right\rrbracket,\\
			h_n^{\{ T_m \}} &:= \left\llbracket d^{ \{ T_m \} } h_{n+m} \right\rrbracket,\\
			e_n^{\{ T_m \}} &:= \left\llbracket d^{ \{ T_m \} } e_{n+m} \right\rrbracket.
		\end{align}
The normalization chosen for the power sums is dictated by the naturalness of the resulting scalar product induced by the reproducing kernel, to which we now turn.

	\subsection{Scalar product and reproducing kernel}
		\begin{definition} 
			For $\Lambda,\Omega\,\in\,\text{SPar}_{\mathcal{N}}$, we define the following scalar product :
			\begin{align}
				\braket{p^\top_\Lambda | p_\omega}= z_\Lambda \delta_{\Lambda,\Omega},
			\end{align}
			where			\begin{align}\label{zLA}
				z_\Lambda &:= \prod_{i=0}^{2^{\Nm}-1} z^{(m(\Lambda^{(i)}))}_{\Lambda^{(i)}}, \\
				z^{(m)}_\lambda &:= \prod_{k \geq 0 } (n_{\lambda}(k))! \left( \frac{k!}{(k+m-1)!} \right)^{n_{\lambda}(k)}, \\
				p^\top_\Lambda &= (-1)^{\binom{M(\Lambda)}{2}}p_\Lambda ,
			\end{align}
and where we recall that the $\Lambda^{(i)}$'s stand for  the constituent partitions and that
$n_{\lambda}(k)$ is the multiplicity of the  part $k$ in the partition $\la$.	
		\end{definition}
		\begin{theorem}
			Let $K(x,y,\theta,\phi)$ be the bi-symmetric formal power series given by
			\begin{align}
				K(x,y,\phi,\theta) = \prod_{ij} \frac{1}{1-x_iy_j - \sum_{t=1}^{\mathcal{N}}\theta_i^{(t)}\phi_j^{(t)}}.
			\end{align}
We then have the following power-sum expansion
			\begin{align}
				K(x,y,\phi,\theta) = \sum_{\Lambda \vdash \text{SPar}_{\mathcal{N}}} z_\Lambda^{-1} (-1)^{\binom{M(\Lambda)}{2}}p_{\Lambda}(x,\theta) p_{\Lambda}(y,\phi).
			\end{align}			
		\end{theorem}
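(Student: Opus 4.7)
The plan is to adapt the proof of Theorem 3.1 (the $\mathcal N=2$ kernel) directly to arbitrary $\mathcal N$. First I would take the logarithm of $K$ and Taylor expand each factor, writing
\[
\ln K \;=\; \sum_{i,j}\sum_{n\geq 1}\frac{1}{n}\bigl(x_iy_j+B_{ij}\bigr)^{n},
\qquad
B_{ij}:=\sum_{t=1}^{\mathcal N}\theta_i^{(t)}\phi_j^{(t)}.
\]
Since $B_{ij}$ is a sum of even (Grassmann-degree $2$) elements $c^{(t)}_{ij}:=\theta_i^{(t)}\phi_j^{(t)}$ which commute pairwise and whose individual squares vanish, the binomial formula together with a direct induction on $k$ (moving $\phi$'s past $\theta$'s) yields
\[
B_{ij}^{k} \;=\; k!\,(-1)^{\binom{k}{2}}\!\!\!\sum_{T\subseteq\{1,\dots,\mathcal N\},\,|T|=k}\!\!\!\theta_i^{\{T\}}\,\phi_j^{\{T\}}.
\]

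Second, substituting this into the logarithm and using $\binom{n}{k}k!/n=(n-1)!/(n-k)!$, followed by the change of variable $s=n-k$, I would rewrite the logarithm as
\[
\ln K \;=\; \sum_{m=0}^{\mathcal N}(-1)^{\binom{m}{2}}\!\!\sum_{|T|=m}\;\sum_{s\geq 0}\;\frac{(s+m-1)!}{s!}\,p^{\{T\}}_{s}(x,\theta)\,p^{\{T\}}_{s}(y,\phi),
\]
after observing that $\sum_{i,j}x_i^sy_j^s\theta_i^{\{T\}}\phi_j^{\{T\}}=p_s^{\{T\}}(x,\theta)\,p_s^{\{T\}}(y,\phi)$, a factorization which is unobstructed because the $y_j^s$ are bosonic and slide freely past $\theta_i^{\{T\}}$. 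Reading the coefficient $(s+m-1)!/s!$ as $1/z^{(m)}_{(s)}$ (the $(m,s)=(0,0)$ term is absent since $p_0=0$) puts the expansion in the normalization of the stated scalar product.

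Third, I would exponentiate, organizing the product over the $2^{\mathcal N}$ blocks indexed by $T$. Each block $p_s^{\{T\}}(x,\theta)\,p_s^{\{T\}}(y,\phi)$ has total Grassmann degree $2m$, so distinct blocks commute and can be handled independently. For $|T|=m$ even, $p_s^{\{T\}}$ is itself even and standard exponentiation produces a sum $\sum_\lambda \bigl(z^{(m)}_\lambda\bigr)^{-1} p^{\{T\}}_\lambda(x,\theta)p^{\{T\}}_\lambda(y,\phi)$ over all partitions $\lambda$. For $|T|=m$ odd, each $p_s^{\{T\}}$ is odd, so its square vanishes and the expansion forces distinct parts; rearranging anticommuting factors into canonical order introduces the sign $(-1)^{\binom{\ell(\lambda)}{2}}$ per block, exactly as in the $\mathcal N=1$ computation (Theorem 33 of \cite{classicalN1}). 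Taking the product over all $T$ and assembling the constituent partitions $\Lambda^{\{T\}}$ into a single $\mathcal N$-superpartition $\Lambda$ via $z_\Lambda=\prod_T z^{(|T|)}_{\Lambda^{\{T\}}}$ yields the claimed power-sum expansion up to a global sign.

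The main obstacle is proving that the accumulated signs collapse precisely to $(-1)^{\binom{M(\Lambda)}{2}}$. There are three sources: (i) a factor $(-1)^{\binom{m}{2}\,\ell(\Lambda^{\{T\}})}$ from raising the $k=m$ slot of the logarithm expansion to the required power for each constituent $\Lambda^{\{T\}}$; (ii) a factor $(-1)^{\binom{\ell(\Lambda^{\{T\}})}{2}}$ from each odd-$m$ block in the exponential expansion; (iii) reordering signs from permuting the block-ordered product $\prod_T p^{\{T\}}_{\Lambda^{\{T\}}}$ into the canonical ordering of $\Lambda$ prescribed in Section~4.1. A mod-$2$ computation generalizing the one carried out in \eqref{eq92} for $\mathcal N=2$, together with the identity $M(\Lambda)=\sum_T |T|\,\ell(\Lambda^{\{T\}})$ and a careful pairing of odd-$|T|$ blocks, should collapse these three contributions into the single binomial $\binom{M(\Lambda)}{2}$. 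This combinatorial bookkeeping, rather than any analytical difficulty, is where the proof requires the most care; the remaining steps are a direct transcription of the $\mathcal N=2$ argument.
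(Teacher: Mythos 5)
Your proposal follows essentially the same route as the paper's proof: take the logarithm, expand the Grassmann block via the multinomial theorem and reorder to extract the factor $k!\,(-1)^{\binom{k}{2}}$, shift the summation index to obtain the coefficient $(s+m-1)!/s!$ matching $z^{(m)}_{(s)}$, exponentiate block by block with distinct parts forced for odd $|T|$, and track exactly the same three sign sources. The final mod-2 collapse that you defer is carried out in the paper precisely as you anticipate, via $\tfrac12\ell_j m_j(m_j-1)+\tfrac12 m_j\ell_j(\ell_j-1)=\tfrac12 M_j(M_j-1)\bmod 2$ for each constituent block (their difference $\tfrac12\ell_j(\ell_j-1)m_j(m_j-1)$ being even), followed by $\tfrac12\sum_i M_i(M_i-1)+\sum_{i>j}M_iM_j=\binom{M}{2}$.
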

		\begin{proof} Writing $K := K(x,y,\theta,\phi)$, we have
			\begin{align}
			K&=\prod_{ij} \frac{1}{1-x_iy_j - \sum_{t=1}^{\mathcal{N}}\theta_i^{(t)}\phi_j^{(t)}}\nonumber \\
			&= \exp\l(\sum_{i,j} \ln\l[1-x_iy_j -\sum_{t=1}^{\mathcal{N}}\theta_i^{(t)}\phi_j^{(t)}\r]^{-1}\r)\nonumber \\
			&= \exp\l(\sum_{i,j}\sum_{n\geq 1} \frac{1}{n}\l[\, x_i y_j + 
			\sum_{t=1}^{\mathcal{N}} \theta_i^{(t)} \phi_j^{(t)}\r]^n\r)\nonumber \\
			&=\exp\l(\sum_{\substack{i,j \\ n\geq 1}} \frac{(x_iy_j)^n}{{n}} + 
			\sum_{\substack{i,j \\ n\geq 1}} \sum_{m=1}^{n} \binom{n}{m} \frac{(x_iy_j)^{n-m}}{{n}} \l[\sum_{t=1}^{\mathcal{N}}\theta_i^{(t)}\phi_j^{(t)}\r]^m\r) .\label{eq:KernelN_A}
			\end{align}
			Let us concentrate on the exponent of the sum on Grassmannian variables. Using the multinomial theorem, we get
			\begin{align}
				\left[\sum_{t=1}^{\mathcal{N}} \theta_i^{(t)}\phi_j^{(t)} \right]^m &=
					\sum_{\substack{k_1+k_2+\cdots+k_{\mathcal{N}}=m\\ 0\leq k_l \leq m}} \frac{m!}{k_1!k_2!\cdots k_{\mathcal{N}}!} 
					\prod_{ t=1}^{\mathcal{N}}(\theta_i^{(t)}\phi_j^{(t)})^{k_t}\nonumber \\
					&=m! \sum_{\substack{\epsilon_1 +\epsilon_2 + \cdots \epsilon_{\mathcal{N}}=m\\ 0 \leq \epsilon_l \leq 1}} 					\prod_{ t=1}^{\mathcal{N}}(\theta_i^{(t)}\phi_j^{(t)})^{\epsilon_t}.
								\end{align}
			In the second expression, we have changed $k_t$ to $\ep_t$ to stress the fact that it can only take  the values 0 and 1, which implies that $\ep_j!=1$ for all $j$. Bringing this result back into \eqref{eq:KernelN_A}, 	we see that in the exponent, the term that multiplies $\sum_{\ep_j}\prod_t (\theta_i^{(t)}\phi_j^{(t)})^{\epsilon_t}$  can be manipulated as follows,
			\begin{align}
				& \sum_{\substack{i,j\\ n\geq 1 \\ m=1,\ldots,\min(n,\mathcal N)}} \frac{n! m!}{n (n-m)! m!}(x_iy_j)^{n-m}
				= \sum_{\substack{i,j\\ n\geq 1\\ m=1,\ldots,\min(n,\mathcal N)}} \frac{(n-1)!}{(n-m)!} (x_iy_j)^{n-m}
		= \sum_{\substack{i,j\\ n\geq 0 \\ m=1,\ldots,\mathcal N}} \frac{(n+m-1)!}{(n)!} (x_iy_j)^{n}
				.			\end{align}
			Reorganizing the variables we get 
			\begin{align}
				&\sum_{\substack{i,j\\ n\geq 0 \\ m=1,\ldots,\mathcal N}} \frac{(n+m-1)!}{(n)!} (-1)^{\binom{m}{2}} \sum_{\mathcal{N} \geq t_1 > \cdots > t_m \geq 1 } \l(\theta_i^{(t_1)} \theta_i^{(t_2)}\cdots\theta_i^{(t_m)}x_i^n\r) \l(\phi_j^{(t_1)} \phi_j^{(t_2)}\cdots\phi_j^{(t_m)}{y_j^n} \r). 							\end{align}
The above expression unveils the power-sum constituents with the terms in the last two parentheses. 			Now we can introduce the power-sum notation and incorporate the value $m=0$ which refers to the case where there are no Grassmannian variables. Recall that $\{ T_m\}$ stands for the multi-index 
$T_m = \{ t_1, t_2, \cdots, t_m \}$ with $\mathcal{N} \geq t_1 > \cdots > t_m \geq 1 $. The summation over $T_m$ with $m$ fixed followed by a sum over $m$ amounts  to summing over every possible combination of the $t_j$'s (and note that  $T_0 = \{ \varnothing \}$ and $p^{\{ \varnothing\}}_n$ is just the usual power sum). Getting the exponential back and developing the Taylor series we have, 
			\begin{align}
				&\prod_{\substack{i,j\\ n\geq 0 \\ m=0,\ldots,\mathcal{N}}} \exp\left\{  \sum_{T_m} \frac{(n+m-1)!}{(n)!} (-1)^{\binom{m}{2}}  p_n^{\{ T_m \}}(x,\theta) p_n^{\{ T_m \}}(y,\phi) \right\} \nonumber \\
				&=\prod_{\substack{i,j\\ n\geq 0 \\ m=0,\ldots,\mathcal{N} \\ T_m}} \sum_{k} \frac{1}{k!} \left(\frac{(n+m-1)!}{n!} \right)^k \left( (-1)^{\binom{m}{2}} \right)^k  \left( p_n^{\{ T_m \}}(x,\theta) p_n^{\{ T_m \}}(y,\phi) \right)^k.
			\end{align}
			
		From now on, to alleviate slightly the notation, we will denote the constituent partition of fermionic content  $T_m$ as $\La^{T_m}$ instead of $\La^{\{T_m\}}$.\\

	We now focus on a single set $T_m$ and develop the product over $n$ in order to reconstruct the product of power sums for the arbitrary constituent partition $\Lambda^{T_m}$. Leaving out a multiplying  factor for the moment, let us consider the expression
			\begin{align}
				(-1)^{\ell(\Lambda^{T_m})\binom{m}{2}} p_{\Lambda^{T_m}_1}^{\{ T_m \}}(x,\theta) p_{\Lambda^{T_m}_1}^{\{ T_m \}}(y,\phi)p_{\Lambda^{T_m}_2}^{\{ T_m \}}(x,\theta) p_{\Lambda^{T_m}_2}^{\{ T_m \}}(y,\phi)\cdots p_{\Lambda^{T_m}_{\ell(\Lambda^{T_m})}}^{\{ T_m \}}(x,\theta) p_{\Lambda^{T_m}_{\ell(\Lambda^{T_m})}}^{\{ T_m \}}(y,\phi).
			\end{align}
			Reorganizing the product of power sums to move the $(y,\phi)$ variables to the right gives the familiar sign $(-1)^{\binom{\ell(\Lambda)}{2}}$ if $m$ is odd and no sign otherwise. Both parities are thus taken into account by writing the sign as $(-1)^{m\binom{\ell(\Lambda)}{2}}$. The above expression becomes then
			\begin{align}
				(-1)^{\ell(\Lambda^{T_m})\binom{m}{2}} (-1)^{m\binom{\ell(\Lambda^{T_m})}{2}}p_{\Lambda^{T_m}}^{\{ T_m \}}(x,\theta) p_{\Lambda^{T_m}}^{\{ T_m \}}(y,\phi).
			\end{align} 
Before we pursue, we modify once again our notation to make it a little more handy. Here the $T^{(j)}$'s stand for the different fermionic contents irrespectively of their fermionic degree. Therefore, the index $j$ runs from 0 to  $2^{\Nm}-1$.
			Now, we know that the fermionic degree of a superpartition is the sum of the fermionic degrees of its parts. So in the case where we are dealing with a constituent partition $\Lambda^{T^{(j)}}$ we know that all parts share the same fermionic content and hence are of the same fermionic degree. This means that the fermionic degree of the partition $\Lambda^{T^{(j)}}$ is just its length times the fermionic degree ($m_j$) of its parts. 
As a further step towards alleviating  the notation, let us denote $\Lambda^{T^{(j)}}$ as $\Lambda^{(j)}$, which amounts to express a generic superpartition as $\Lambda = (\Lambda^{(2^{\mathcal{N}-1})}; \cdots;\Lambda^{(j)}; \cdots; \Lambda^{(0)})$. Therefore, writing $\ell_j$ for $\ell(\Lambda^{(j)})$ for short, the fermionic degree of the constituent partition $\Lambda^{(j)}$ is: 
			\begin{align}
				M_j := \ell_jm_j.
			\end{align}
			Given this, we can craft a better expression for the exponent of $-1$.    We note that
			\begin{align}\frac12\ell_j m_j(m_j-1) +\frac12m_j\ell_j(\ell_j-1)&=\frac12M_j(M_j-1)\quad \mod 2.
			\end{align}
		{	Indeed, the difference between the terms on the  two sides of the previous expression is
			\beq \frac12 \ell_j(\ell_j-1)m_j(m_j-1)=0 \quad \mod 2\eeq
			since the product of two consecutive integers is necessarily even.}\\

			We are now in a position to reintroduce the full expression of the kernel
			\begin{align}
				K &= \sum_{\Lambda} z_\Lambda^{-1} (-1)^{\sum_{i =0}^{2^{\mathcal{N}}-1} \binom{M_i }{2}}
				 \prod_{i=0}^{2^{\mathcal{N}}-1}
				p_{\Lambda^{(i)}}^{\{ T^{(i)} \}}(x,\theta) p_{\Lambda^{(i)}}^{\{ T^{(i)} \} }(y,\phi),
			\end{align}
			where $z_\La$ is defined in \eqref{zLA}.
								A final step of reorganization consists  in bringing						all the power sums in the variables $(y,\phi)$ to the right. This introduces a further sign: $(-1)^{\sum_{i>j}M_iM_j}$. Combining the two sign factors, we get			\begin{align}
			\frac12\sum_i M_i(M_i-1)+ \sum_{i>j}M_iM_j &=\frac12\sum_i M_i(M_i-1) + \frac{1}{2}\l(\sum_{i,j}M_iM_j- \sum_i M_i^2\r)= \frac{M(M-1)}{2},
			\end{align}
		where  \beq M= \sum_i M_i\eeq
is the total fermionic degree of the superpartition $\Lambda$.
			Hence, we finally obtain 
			\begin{align}
				K &= \sum_{\Lambda \vdash \text{SPar}_{\mathcal{N}}} z_\Lambda^{-1} (-1)^{\binom{M(\Lambda)}{2}}
				p_{\Lambda}(x,\theta) p_{\Lambda}(y,\phi)	.	\end{align}
		\end{proof}

		We conclude this section with the presentation of three propositions whose proofs are omitted since they parallel closely their $\mathcal{N}=2$ analogues.
		\begin{proposition} 
			$K(x,y,\theta,\phi)$ is a reproducing kernel in the space of symmetric superfunctions: 
			\begin{align}
				\braket{K^\top(x,y,\theta,\phi) | f(x,\theta)} = f(y,\phi),\; \forall \; f \in \mathscr{P}^{S_\infty} 
			\end{align}
			where $\top$ acts only on the variables $\theta$. 
		\end{proposition}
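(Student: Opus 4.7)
The plan is to mimic, essentially verbatim, the proof of the corresponding $\mathcal{N}=2$ corollary stated earlier in the paper, replacing the $\mathcal{N}=2$ sign $(-1)^{\binom{\sT{m}+\sD{m}}{2}}$ by the $\mathcal{N}$-general analogue $(-1)^{\binom{M(\Lambda)}{2}}$. The two ingredients already available are the power-sum expansion of $K$ just established and the scalar product defined at the start of Section 4.4.

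First, since the $\mathcal{N}=2$ argument showing that $\{m_\Lambda\}$ forms a basis carries over unchanged to arbitrary $\mathcal{N}$, the power sums $\{p_\Lambda\}$ also form a basis of $\mathscr{P}^{S_\infty}$. Hence any $f \in \mathscr{P}^{S_\infty}$ admits a unique finite expansion $f(x,\theta) = \sum_\Lambda f_\Lambda\, p_\Lambda(x,\theta)$. Next, insert the power-sum expansion of $K$ and use the identity $p_\Omega = (-1)^{\binom{M(\Omega)}{2}} p_\Omega^\top$ on the $(x,\theta)$-side to absorb the sign in front, rewriting the kernel as
\begin{align*}
K(x,y,\phi,\theta) \;=\; \sum_{\Omega \in \text{SPar}_{\mathcal N}} z_\Omega^{-1}\, p_\Omega^\top(x,\theta)\, p_\Omega(y,\phi).
\end{align*}
Bilinearity of the scalar product, together with the defining orthogonality $\braket{p_\Omega^\top \mid p_\Lambda} = z_\Omega\,\delta_{\Omega,\Lambda}$, then gives
\begin{align*}
\braket{K^\top \mid f} \;=\; \sum_{\Omega,\Lambda} z_\Omega^{-1}\, f_\Lambda \braket{p_\Omega^\top \mid p_\Lambda}\, p_\Omega(y,\phi) \;=\; \sum_\Omega f_\Omega\, p_\Omega(y,\phi) \;=\; f(y,\phi),
\end{align*}
which is the desired reproducing identity.

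The only real obstacle is the sign accounting: one must check that reversing only the $\theta$-variables inside $p_\Omega(x,\theta)$ produces exactly the factor $(-1)^{\binom{M(\Omega)}{2}}$ implicit in $p_\Omega^\top$. Since every $\theta_i^{(t)}$ is Grassmann-odd regardless of its upper index $t$, this sign is simply that of reversing a string of $M(\Omega)$ anticommuting letters, so it matches the exponent $\binom{M(\Omega)}{2}$ appearing in the theorem. With that verification in place, the computation collapses to exactly the same manipulation already performed for $\mathcal{N}=2$, and no new ideas are required.
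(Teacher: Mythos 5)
Your proposal is correct and follows essentially the same route as the paper: the paper explicitly omits this proof because it ``parallels closely'' the $\mathcal{N}=2$ corollary, whose argument is exactly the one you give --- expand $f$ in the power-sum basis, insert the power-sum expansion of $K$, and invoke the defining orthogonality $\braket{p_\Omega^\top \mid p_\Lambda} = z_\Omega\,\delta_{\Omega,\Lambda}$, with the sign $(-1)^{\binom{M(\Omega)}{2}}$ absorbed into $p_\Omega^\top$. Your sign check (reversal of a string of $M(\Omega)$ Grassmann-odd letters yields $(-1)^{\binom{M(\Omega)}{2}}$ irrespective of the upper indices) is precisely the bookkeeping the paper relies on, so no further comment is needed.
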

		\begin{proposition}
			Let $u_\Lambda$ and $v_\Lambda$ be two bases of $\mathcal{P}^{S_\infty}_{(n|M)}$ (using a self-explanatory notation).  Then we have
			\begin{align}
				K(x,y,\theta,\phi) = \sum_{\text{SPar}}u^{\top}_\Lambda(x,\theta) v_\Lambda(y,\phi) 
				\quad \Longleftrightarrow \quad
				\braket{u_\Lambda^\top | v_\Omega} = \delta_{\Omega \Lambda}.
			\end{align}
		\end{proposition}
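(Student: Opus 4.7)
The plan is to reduce both sides of the equivalence to matrix statements within each finite-dimensional sector $(n|M)$, using the power-sum basis $\{p_\mu\}$ as a common intermediary. I would write
\[
u_\Lambda = \sum_\mu a_{\Lambda\mu}\, p_\mu, \qquad v_\Lambda = \sum_\mu b_{\Lambda\mu}\, p_\mu,
\]
where, by homogeneity, the sums range over $\mu \vdash (n|M)$ with $\mu$ in the sector of $\Lambda$. The matrices $A = (a_{\Lambda\mu})$ and $B = (b_{\Lambda\mu})$ are then square and invertible in each sector since both $\{u_\Lambda\}$ and $\{v_\Lambda\}$ are bases of that sector.

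The first step is to translate the orthogonality condition on the right-hand side of the equivalence. In the sector $(n|M)$ both $u_\Lambda^\top = (-1)^{\binom{M}{2}} u_\Lambda$ and $p_\mu^\top = (-1)^{\binom{M}{2}} p_\mu$, so the two sign factors match; combining this with $\braket{p_\mu^\top | p_\nu} = z_\mu \delta_{\mu\nu}$ yields $\braket{u_\Lambda^\top | v_\Omega} = (ADB^T)_{\Lambda\Omega}$, where $D = \text{diag}(z_\mu)$. Consequently, $\braket{u_\Lambda^\top | v_\Omega} = \delta_{\Lambda\Omega}$ is equivalent to the matrix identity $ADB^T = I$ in each sector.

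The second step is to translate the kernel identity on the left-hand side. Expanding $\sum_\Lambda u_\Lambda^\top(x,\theta)\, v_\Lambda(y,\phi)$ and reading off the coefficient of $p_\mu(x,\theta)\, p_\nu(y,\phi)$ gives $(-1)^{\binom{M}{2}}(A^T B)_{\mu\nu}$ (only $\Lambda$ in the sector of $\mu$ and $\nu$ contribute). The theorem of the preceding subsection expands $K$ itself as $\sum_\mu z_\mu^{-1} (-1)^{\binom{M(\mu)}{2}} p_\mu(x,\theta)\, p_\mu(y,\phi)$, so, since $M(\mu)=M$ for contributing $\mu$, the kernel identity becomes $A^T B = D^{-1}$. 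Finally I would close the argument with the elementary linear-algebra observation that $ADB^T = I \Longleftrightarrow A^T B = D^{-1}$: the first relation gives $B^T = D^{-1} A^{-1}$, whence $A^T B = A^T (A^T)^{-1} D^{-1} = D^{-1}$, and the converse is obtained by reading the same manipulation backwards. The one point needing care, and the place where I expect the bookkeeping to accumulate, is verifying that the sign $(-1)^{\binom{M}{2}}$ enters identically on both sides of each equivalence, because $u_\Lambda^\top$ and $p_\mu^\top$ carry the same sign in a fixed sector; once this is checked, the sign cancels cleanly and the equivalence reduces to the purely matrix-theoretic statement above.
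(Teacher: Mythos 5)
Your proof is correct and follows essentially the route the paper intends: the paper omits this proof, deferring to the standard transition-matrix argument of Macdonald and its $\mathcal{N}=1$ analogue, which is exactly your reduction to the matrices $A$ and $B$ over the power-sum basis together with the equivalence $ADB^{T}=I\Longleftrightarrow A^{T}B=D^{-1}$. Your observation that the sign $(-1)^{\binom{M}{2}}$ carried by $\top$ is constant on each sector, enters identically on both sides, and therefore cancels is precisely the only point at which the super-case differs from the classical one.
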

		\begin{proposition} \label{mhdualNarbit}
			The monomial and the complete symmetric functions are dual, that is
			\begin{align}
				K(x,y,\theta,\phi) = \sum_{\text{SPar}} m^{\top}_\Lambda(x,\theta) h_\Lambda(y,\phi),
			\end{align}
			or equivalently 
			\begin{align} 
				\braket{m_\Lambda^\top | h_\Omega} = \delta_{\Omega \Lambda} .
			\end{align}
		\end{proposition}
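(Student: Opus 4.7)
\medskip

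The plan is to imitate the proof of Proposition \ref{mhdual}, adapted to arbitrary $\Nm$. By the preceding proposition the kernel identity
\[
K(x,y,\theta,\phi) = \sum_{\Lambda \in \text{SPar}_{\Nm}} m_\Lambda^\top(x,\theta)\, h_\Lambda(y,\phi)
\]
is equivalent to $\braket{m_\Lambda^\top\,|\,h_\Omega} = \delta_{\Lambda\Omega}$, so it suffices to establish the kernel identity. The first step is to factor the kernel with respect to the $x$-index:
\[
K(x,y,\theta,\phi) = \prod_i \Bigl( \prod_j \frac{1}{1-x_i y_j - \textstyle\sum_t \theta_i^{(t)}\phi_j^{(t)}} \Bigr) = \prod_i H(x_i, \theta_i^{(1)},\dots,\theta_i^{(\Nm)}),
\]
where $H(s,\tau)$ is the generating function for the $h$-multiplicative basis in the variables $(y_j,\phi_j^{(t)})$ with formal parameters $s,\tau^{(1)},\dots,\tau^{(\Nm)}$.

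Next, I would derive the $\Nm$-analog of \eqref{eq:GenComplete}, namely
\[
H(s,\tau) = \sum_{n \geq 0} s^n \sum_{T \subseteq \{1,\dots,\Nm\}} (-1)^{\binom{|T|}{2}}\, \tau^T\, h_n^{\{T\}}(y,\phi),
\]
with $\tau^T = \tau^{(t_1)}\cdots\tau^{(t_m)}$ for $T = \{t_1<\cdots<t_m\}$. This is immediate from the key observation at the end of \S3.1 together with the definition $h_n^{\{T\}} = \ll d^{\{T\}} h_{n+|T|}\rr$; the sign $(-1)^{\binom{|T|}{2}}$ comes from ordering the anticommuting $\tau^{(t)}$'s. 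Substituting this expansion into the product over $i$ and expanding gives
\[
K = \sum_{\alpha} \prod_i \Bigl( (-1)^{\binom{|T_i|}{2}} x_i^{n_i}\, \theta_i^{T_i}\, h_{n_i}^{\{T_i\}}(y,\phi)\Bigr),
\]
with $\alpha = ((n_1,T_1),(n_2,T_2),\dots)$ ranging over finitely supported sequences of labeled integers.

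I would then move all the $h$-factors to the right to separate the $(x,\theta)$ and $(y,\phi)$ dependence. Since $h_{n_i}^{\{T_i\}}$ has fermionic degree $|T_i|$, passing it through $\theta_j^{T_j}$ for $j>i$ produces the sign $(-1)^{|T_i||T_j|}$, yielding a global sign $(-1)^{\sum_{i<j}|T_i||T_j|}$. Grouping the $\alpha$'s into $S_\infty$-orbits indexed by superpartitions $\Lambda$, the inner sum over each orbit reconstructs $m_\Lambda(x,\theta)$ multiplied by a single $h_\Lambda(y,\phi)$ (using the same ordering convention as in $m_\Lambda$). The total sign attached to $m_\Lambda\, h_\Lambda$ is then
\[
(-1)^{\sum_i \binom{|T_i|}{2} + \sum_{i<j}|T_i||T_j|}.
\]

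The final and main combinatorial step is to identify this exponent with $\binom{M(\Lambda)}{2} \pmod 2$. This follows from the identity
\[
\binom{M}{2} = \sum_i \binom{|T_i|}{2} + \sum_{i<j}|T_i||T_j|,\qquad M = \sum_i |T_i|,
\]
which is just the expansion of $M^2 = \sum_i |T_i|^2 + 2\sum_{i<j}|T_i||T_j|$ combined with $\binom{M}{2} = (M^2-M)/2$. Since $M(\Lambda)$ is by definition the total fermionic degree, this completes the identification. The hard part of the argument is the sign bookkeeping in the last two steps; once one has convinced oneself that the moves above are the same as those performed in the $\Nm=2$ proof and in \eqref{eq92}, the arithmetic identity closes the argument uniformly in $\Nm$.
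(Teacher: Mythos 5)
Your proposal is correct and follows exactly the route the paper intends: the paper omits this proof, stating that it ``parallels closely'' the $\mathcal{N}=2$ analogue (Proposition 3.5), and your argument is precisely that parallel — factoring $K$ as $\prod_i H(x_i,\theta_i^{(1)},\dots,\theta_i^{(\mathcal{N})})$, expanding via the $\mathcal{N}$-generalized generating function with sign $(-1)^{\binom{|T|}{2}}$, and reducing the sign bookkeeping to the exact identity $\binom{M}{2}=\sum_i\binom{|T_i|}{2}+\sum_{i<j}|T_i||T_j|$, which plays the role of \eqref{eq92}. Nothing further is needed.
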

				
 	\section{Concluding remarks}

\subsection{A further line of generalization: deformation of the scalar product}
It is interesting to point out that the scalar product \eqref{PS1} can be deformed by the introduction of a free parameter $\a$ as follows:
			\begin{align}\label{alphaPS}
				\braket{p_\Lambda^\top | p_\Omega}_\a = \alpha^{\ell (\Lambda)} z_\Lambda \zeta_\Lambda \delta_{\Lambda \Omega},
			\end{align}
where $z_\La$ and $\zeta_\La$ are defined in \eqref{zdef} and \eqref{zetadef} respectively. This particular form is actually motivated by the following expression of the  reproducing kernel
			\begin{align}
K(x,y,\phi,\psi,\theta,\tau;\alpha)&=\prod_{i,j}\frac{1}{(1-x_iy_j -\phi_i\o\tau_j -\theta_i\u\tau_j)^{1/\alpha}}
			\end{align}
			which can be decomposed in the form
			\begin{align}
	K(x,y,\phi,\o\tau,\theta,\u\tau;\alpha)				&= \sum_{ \Lambda \,\in\, \text{SPar} } \alpha^{-\ell(\Lambda)} (-1)^{\binom{\sT{m}+\sD{m}}{2}} z_\Lambda^{-1} \zeta_\Lambda^{-1} p_\Lambda(x,\phi,\theta)p_\Lambda(y,\o\tau,\u\tau).\label{eq:alphaKernel}
			\end{align}
		  	The correctness of this expression for the kernel is justified {\it a priori} by the naturalness of the extension from 
$\Nm=1$ to $\Nm=2$ and {\it a fortiori} by the duality between the bases $m_\La$ and $g^{(\a)}_\Lambda$, where the latter is
the $\a$-deformation of $h_\Om$ which is also the ${\mathcal N}=2$ extension of $g^{(\a)}_\lambda$ defined as \cite{Macdonald1998}
		\begin{align}
			g^{(\a)}_\lambda = \prod_i g_{\lambda_i}^{(\a)}(x)
		\qquad\text{with}\qquad G(x,t;\alpha)  = \sum_{n \geq 0 } g_n^{(\a)} t^n = \prod_i (1-x_i t)^{-1/ \alpha}. \label{eq:GenGalphalambda}
		\end{align}
In the ${\mathcal N}=2$ case, the generating function becomes
			\begin{align}
			\ti G(x,\phi,\theta,t,\o\tau,\u\tau;\alpha) =\ll \O G(x,t;\a)\rr
				&=	\prod_i \frac{1}{(1 -tx_i - \o\tau\phi_i-\u\tau\theta_i )^{1/\alpha}} \nonumber \\
				&= \sum_{n\geq 0} t^n(g_n^{(\a)} +\o\tau\, \sT{g}^{(\a)}_{n} + \u\tau\, \sD{g}^{(\a)}_{n}- \o\tau\,\u\tau  \,\sTD{g}^{(\a)}_{n}). \label{eq:GLambdaAlphaGen}
			\end{align}		
		On the other hand, we have
		\beq 
			K(x,y,\phi,\o\tau,\theta,\u\tau;\alpha)= 
			\prod_i \left( 
			\sum_{n\geq 0} x_i^n(g_n^{(\alpha)} {+} \phi_i \sT{g}^{(\alpha)}_{n} {+} \theta_i \sD{g}^\alpha_{n} - \phi_i\theta_i  \sTD{g}^{(\alpha)}_{n})
			\right).
			\eeq
Following the proof of Proposition~\ref{mhdual}, we have immediately that
\begin{align} \nonumber \\ 
\phantom{\sum}K(x,y,\phi,\o\tau,\theta,\u\tau;\alpha)  
		&= \sum_\Lambda (-1)^{\binom{\sT{m}+\sD{m}}{2}} m_\sL(x,\phi,\theta) g^{(\alpha)}_\sL(y, \o\tau, \u\tau)
			\end{align}
which implies that the basis $g^{(\alpha)}_\Lambda$ is dual to the monomial basis:
			\begin{align}
				\braket{m_\Om^\top|g^{(\alpha)}_\Lambda }_\a = \delta_{\Lambda, \Omega}.
			\end{align}

One can go a step further and introduce the $\a$-deformation of the homomorphism \eqref{eq:involution} 
defined by
			\begin{align}
				\omegahat_\alpha : (p_n,\sT{p}_n ,\sD{p}_n ,\sTD{p}_n ) &\longrightarrow (-1)^{n-1}\alpha\,(p_n,-\sT{p}_n ,-\sD{p}_n ,\sTD{p}_n )
															\end{align}
resulting into
			\begin{align}
				\omegahat_\alpha(p_\Lambda) = \omega^\alpha_\Lambda\, p_\Lambda, \quad \omega^\alpha_\Lambda \equiv (-\alpha)^{|\Lambda| - \ell(\sTDL) - \ell(\Lambda^0)}.
			\end{align}
			It is then a simple matter to verify that \beq \label{dual}\omegahat_\alpha( g^{(\a)}_\La)=e_\La,\eeq
			which is a further confirmation that this new basis deforms naturally the homogeneous one. 

\subsection{Toward the $\Nm=2$ Jack superpolynomials}
As mentioned in the introduction, the end objective of this work is the construction of the $\Nm=2$ Jack superpolynomials.
These would be tentatively defined as in the  $\Nm=0,1$  cases, via the following two conditions: triangularity in the monomial basis and orthogonality.  The orthogonality condition would be defined with respect to the scalar product \eqref{alphaPS}. Moreover, the three partners of  $g^{(\a)}_n$ are candidates for the particular ${\mathcal N}=2$ Jack superpolynomials that are indexed by a  superpartition represented by a single row diagram:
\begin{equation} P_{\sT{n}}^{(\a)}\propto \sT{g}^{(\alpha)}_n,\qquad P_{\sD{n}}^{(\a)}\propto \sD{g}^{(\alpha)}_n,\quad\text{and}\quad P_{\sTD{n}}^{(\a)}\propto \sTD{g}^{(\alpha)}_n.
\end{equation}
The non-trivial duality relation \eqref{dual} is another piece of evidence of an  underlying Jack-like structure.\\

Unfortunately, at least for any ordering of the superpartitions that we
deem natural, the direct construction of generic $\Nm=2$ Jack superpolynomials failed. 
The difficulty could be related to
a technical missing aspect in the $\Nm=2$ formalism: the definition of a conjugate superpartition. Indeed, the diagrammatical representation of the superpartitions does not entail  an immediate definition of the conjugation operation which, in the $\Nm=0,1$ case, amounts to {interchanging} rows and columns. In the present case, there is a built in row-column asymmetry in that at most one vh-circle is allowed in a row but an arbitrary number of {them} can be piled up in  a column. On the other hand, it could be that some restrictions have to be imposed on the superpartitions and/or
some symmetry requirements need to be enforced on the bases.
Exploratory computations show that by restricting the anticommuting variables with the extra requirement $\ta_i\phi_i=0$ (no sum over $i$) provides a well-defined  extension of Jack polynomials which we intend to study.
\\

\subsection*{Acknowledgements}

 This work was  supported by the Natural Sciences and Engineering Research Council of Canada and the
Fondo Nacional de Desarrollo Cient\'{\i}fico y
Tecnol\'ogico de Chile grant \#1130696.

 	\appendix
	\section{Action of the master operator for generic $\Nm$}\label{apendix:ProofIdent}
 We present the {details} of the proof of the identity \eqref{IGF} but in a generic version applicable to arbitrary $\Nm$.\\

	 We first introduce $\Nm$ distinct exterior derivatives $d^{(q)}$ and generalize the differential operators $\o\D$ and $\u\D$ as follows 
	 \begin{align}
	 	\mathcal{D}^{(q)}f := d^{(q)}t\wedge d^{(q)}f,\qquad q=1,\cdots, \Nm. 
	 \end{align}
	 The master operator becomes
	 \beq \O_{\Nm}:= \prod_{q=1}^{\mathcal{N}}(1+\mathcal{D}^{(q)}).\eeq
	 We also extend the meaning of the $\ll\; \rr$ operation in a natural way:
	 \begin{align}
	 	 &\left\ll F(x_i,d^{(1)} x_i,\cdots,d^{(\mathcal{N})}x_i,d^{(1)} t,\cdots,d^{(\mathcal{N})}t) \right\rr:=\nonumber \\
	 	 &\l[F(x_i,d^{(1)} x_i,\cdots,d^{(\mathcal{N})}x_i,d^{(1)} t,\cdots,d^{(\mathcal{N})}t)\r]_{\substack{d^{(p)} x_i\to \theta^{(p)}_i \\ t d^{(p)}t\to \tau^{(p)}}},
	 \end{align}
where $\tau^{(p)}$ are 	distinct anticommuting but constant parameters. The general version of the identity we want to prove is
	 	 \beq \ll\O_{\Nm} F(xt)\rr =F\l(xt+\textstyle{\sum_p\tau^{(p)}\theta^{( p )}}\r). \label{GenI}
		 \eeq\\

	Consider first the  series expansion of $F\l(xt+\sum_p\tau^{(p)}\theta^{( p )}\r)$ around the points $\xi_i=x_it+\sum_{p\ne q}\tau^{(p)}\theta_i^{(p)}$:
		 \begin{align}
	 	F\l(tx+\textstyle{\sum_p\tau^{(p)}\theta^{(p)}}\r) 	 	= 	F\l(tx+\textstyle{\sum_{p \neq q} \tau^{(p)}\theta^{(p)}}\r) 
	 		+\sum_{i=1}^{N} \tau^{(q)}\theta^{(q)}_i \partial_{\xi_i}
			F\l(tx+\textstyle{\sum_{p \neq q} \tau^{(p)}\theta^{(p)} }\r). 	\label{eq:GenNtaylorSeries}
	 \end{align}
	 Note that the series truncates after the second term since $(\tau^{(q)})^2=0$.
	 \\

	 On the other hand, for the left-hand side of \eqref{GenI}, we proceed recursively.
Suppose that we have:
	 \beq \l\ll\prod_{p\ne  q}(1+\mathcal{D}^{( p )})F(xt)\r\rr=F\l(tx+\textstyle{\sum_{p \ne q} \tau^{(p)}\theta^{(p)} }\r).\eeq
	We then evaluate the action of the operator $(1+\mathcal{D}^{(q)})$ on this expression:
		 \begin{align}
	 	& (1+\mathcal{D}^{(q)}) F\l(xt+\textstyle{\sum_{p \ne q}\tau^{(p)}\theta^{( p )}}\r)  \nonumber \\
	 	 &=  F\l(xt+\textstyle{\sum_{p \ne q}\tau^{(p)}\theta^{( p )}}\r)  + (d^{(q)}t\wedge d^{(q)})F\l(xt+\textstyle{\sum_{p \ne q}\tau^{(p)}\theta^{( p )}}\r)  \nonumber \\
	 	 &=F\l(xt+\textstyle{\sum_{p \ne q}\tau^{(p)}\theta^{( p )}}\r)  + \sum_{i=1}^{N} (d^{(q)}t\wedge d^{(q)}x_i)\partial_i F\l(xt+\textstyle{\sum_{p \ne q}\tau^{(p)}\theta^{( p )}}\r) ,	 		 \end{align}
so that
\begin{align}  \l\ll\prod_{p}(1+\mathcal{D}^{( p )})F(xt)\r\rr&=F\l(xt+\textstyle{\sum_{p \ne q}\tau^{(p)}\theta^{( p )}}\r)  + \sum_{i=1}^{N} \tau^{(q)}\ta_i^{(q)}\partial_{\xi_i} F\l(xt+\textstyle{\sum_{p \ne q}\tau^{(p)}\theta^{( p )}}\r) \nonumber \\&=F\l(tx+\textstyle{\sum_p\tau^{(p)}\theta^{(p)}}\r) ,
 \end{align}
 where we used eq. \eqref{eq:GenNtaylorSeries}
 in the last step. Letting now $p$ {run} from 1 to $\Nm$ completes the proof of \eqref{GenI}.

 	\section{Monomial Product Algorithm} \label{sect:monomialProd}
	We present here a simple algorithm to compute the monomial-basis expansion of the product of two {monomial} superfunctions in terms of monomial superfunctions. 
	In particular, this algorithm can be used to provide an efficient
method to convert power sums into monomials when using a symbolic computation software. {In the present context, it has an immediate application, being the starting point of the proof that our multiplicative ``bases'' are genuine bases, which is the subject of the following appendix.}
In order to introduce the algorithm, we first have to introduce the operation of addition of superpartitions.

	\subsection{Addition of superpartitions}
			The addition of the two superpartitions $\Lambda$ and $\Omega$ is
	\begin{align}
		\Lambda + \Omega = (\Lambda_1 + \Omega_1, \Lambda_2 + \Omega_2, \cdots, \Lambda_\ell + \Omega_\ell), \quad \text{where}\quad \ell=\text{max}\,\bigl(\ell(\La),\ell(\Om)\bigr).
	\end{align}
																																																							If the lengths of the superpartitions are not equal, we add the appropriate number of zeros to the shortest superpartition. Now, we must specify  the rule for partwise addition. The rule is most easily understood from the diagrammatic representation. When adding two parts, 
we simply concatenate the symbols representing the parts  and reorder them according to our diagrammatical conventions. The various cases are illustrated as follows:
	\begin{alignat}{3}\label{addition}
	&\superY{\,&\none[{\scriptstyle \cdots}]&\,} &+ &\superY{\,&\none[{\scriptstyle \cdots}]&\,} 	&= &\superY{\,&\none[{\scriptstyle \cdots}]&\,&\,&\none[{\scriptstyle \cdots}]&\,}\nonumber\\
	&\superY{\,&\none[{\scriptstyle \cdots}]&\,&\yT} &+ &\superY{\,&\none[{\scriptstyle \cdots}]&\,} 	&= &\superY{\,&\none[{\scriptstyle \cdots}]&\,&\,&\none[{\scriptstyle \cdots}]&\,&\yT}\nonumber\\
	&\superY{\,&\none[{\scriptstyle \cdots}]&\,&\yC} &+ &\superY{\,&\none[{\scriptstyle \cdots}]&\,} 	&= &\superY{\,&\none[{\scriptstyle \cdots}]&\,&\,&\none[{\scriptstyle \cdots}]&\,&\yC}\nonumber\\
	&\superY{\,&\none[{\scriptstyle \cdots}]&\,&\yC} &+ &\superY{\,&\none[{\scriptstyle \cdots}]&\,&\yT} 	&= &\superY{\,&\none[{\scriptstyle \cdots}]&\,&\,&\none[{\scriptstyle \cdots}]&\,&\yT&\yC} \nonumber \nonumber\\
																	&&&&= &\superY{\,&\none[{\scriptstyle \cdots}]&\,&\,&\none[{\scriptstyle \cdots}]&\,&\yTC}\nonumber\\
	&\superY{\,&\none[{\scriptstyle \cdots}]&\,&\yTC} &+ &\superY{\,&\none[{\scriptstyle \cdots}]&\,} 	&= &\superY{\,&\none[{\scriptstyle \cdots}]&\,&\,&\none[{\scriptstyle \cdots}]&\,&\yTC}.
	\end{alignat}
	Whenever a part obtained by juxtaposing the symbols does not meet the restrictions reflecting the constraints on superpartitions, the diagram is annihilated. For example, two circles  of the same type cannot be juxtaposed. 

	\begin{example}Let $\Lambda = (\sT{3},\sTD{2},\sD{2},1,1,\sT{0})$ and $\Omega=(\sD{1},1,1,\sT{0})=(\sD{1},1,1,\sT{0},0,0)$ and let the zeros be represented on the diagram as $\emptyset$, then
	\begin{align}
	\Lambda + \Omega = 
	\superY{\,& \,& \,& \yT\\
    \,& \,& \yTC\\
    \,& \,& \yC\\
    \,\\
    \,\\
    \yT}
    + 
    \superY{\,& \yC\\
    \,\\
    \,\\
    \yT\\
    \none[{\vphantom{A}_\emptyset}]\\
    \none[{\vphantom{A}_\emptyset}]}
    = 
    \, \superY{\,& \,& \,& \,& \yTC\\
    \,& \,& \,& \yTC\\
    \,& \,& \,& \yC\\
    \,& \yT\\
    \,\\
    \yT}.
	\end{align}
	As another example let $\Lambda= (\sT{1},\sD{0})$ and $\Omega=(1,\sD{0})$ then
	\begin{align}
		\superY{\,&\yT\\ \yC} + \superY{\,\\ \yC} = 0,
	\end{align}
	since $\superY{\yC &\yC}$ is not allowed.  
	\end{example}

	\subsection{The algorithm}
		Given two monomial symmetric functions $m_\Lambda$ and $m_\Omega$, one has 
		\begin{align}
			m_\Lambda m_\Omega = \sum_{\Gamma \in \mathcal{P}} \epsilon_{\Lambda \Omega}^\Gamma N_{\Lambda \Omega}^\Gamma m_\Gamma. 
		\end{align}
		where $N_{\Lambda \Omega}^\Gamma$ is a certain positive number, $\epsilon_{\Lambda \Omega}^\Gamma$ is either $+1$ or $-1$ and $\mathcal{P}$ is the set composed of every superpartition corresponding to a  filled diagram obtained with the algorithm that follows. \\

		Given two arbitrary superpartitions $\Lambda$ and $\Omega$ of length $\ell(\Lambda)$ and $\ell(\Omega)$ respectively, the resulting sum of monomials is given by the following procedure : 
		\begin{enumerate}
			\item We add zeros at the end of each superpartition
in such a way that they each have length $\ell(\Lambda)+\ell(\Omega)$, that is $\Lambda \mapsto \tilde \La=(\Lambda, 0^{\ell(\Omega)})$ and $\Omega \mapsto \tilde \Om=(\Omega, 0^{\ell(\Lambda)})$.
			\item The boxes of the diagram $\ti\Lambda$ are filled with the letter $a$ and its circles are transformed as follow : 			\begin{align}
			\superYbig{\yT} 	& \longrightarrow 	\superYbig{\yF{ \sT{a_i} }}\\
			\superYbig{\yC} 	& \longrightarrow 	\superYbig{\yF{ \sD{a_i} }}\\
			\superYbig{\yTC}	& \longrightarrow 	\superYbig{\yF{ \sT{a} }& \yF{ \sD{a} } } 
			\end{align} 
			where the subscript  starts at 1 for the highest circle in the diagram and increases downward.
			\item The previous step is also applied to the  $\ti\Omega$ diagram with $a$ replaced by $b$.
			\item We then permute the rows (including the $\emptyset$ rows) of the  filled $\ti\Omega$ diagram in every distinct way. The resulting diagrams are then added to the $a$-filled $\ti\Lambda$ diagram. For each resulting diagram, we rearrange the rows so that every diagram becomes that of a {\it bona fide} filled superpartition. The boxes containing $a$'s are to be placed to the left of those containing $b$'s. When two types of circles are next to each other, the one with an overlined letter is placed to the left of the one marked by an underlined letter, irrespectively of the $a$ vs $b$ order. 
			\item In the resulting set of filled diagrams, we keep only distinct ones. 
			\item For a given filled diagram in that set, associated to let's say  a superpartition $\Gamma$, the number $N_{\Lambda \Omega}^\Gamma$ is the number of distinguishable (filling-wise) ways of permuting rows so that the superpartition ordering on parts is still respected.
									\item The sign $\epsilon_{\Lambda \Omega}^\Gamma$ is given by the following procedure: reading the diagram $\Gamma$ from top to bottom and left to right, we write the corresponding sequence of  $a_i$'s and $b_j$'s. We then  determine the minimal number of elementary permutations needed to reorder that sequence so that all $a$'s appear first with increasing subscripts, followed by the $b$'s, also with increasing subscripts. Denote this number by $r$; then $\epsilon_{\Lambda \Omega}^\Gamma = (-1)^r$. 
			\item Repeating the last three steps for each filled diagrams yields the complete monomial decomposition. 
		\end{enumerate}

		\subsection{An illustrative example}

		We clarify the procedure with the following example: find the monomial expansion of the product $
				m_{\, \superYsmall{\,& \yT\\
				\,& \yC\\
				\yC}}\,
				m_{\, \superYsmall{\,& \yT\\
				\yT}}
			$.
			Taking $\Lambda = \superYsmall{\,& \yT\\ \,& \yC\\ \yC}$ and $\Omega = \superYsmall{\,& \yT\\ \yT}$, we apply the algorithm step by step.
			\begin{enumerate}
			\item Adding zeros
				\begin{align}
					&\superY{\,& \yT\\
					\,& \yC\\
					\yC} 
					\longrightarrow 
					\superY{\,& \yT\\
					\,& \yC\\
					\yC\\
					\none[{\vphantom{A}_\emptyset}]\\
					\none[{\vphantom{A}_\emptyset}]}
					&
					&\superY{\,& \yT\\
					\yT}
					\longrightarrow 
					\superY{\,& \yT\\
					\yT \\
					\none[{\vphantom{A}_\emptyset}]\\
					\none[{\vphantom{A}_\emptyset}]\\
					\none[{\vphantom{A}_\emptyset}]}
				\end{align}
			\item[2.,3.] $a$ filling and $b$ filling
				\begin{align}
					&\superY{\,& \yT\\
					\,& \yC\\
					\yC\\
					\none[{\vphantom{A}_\emptyset}]\\
					\none[{\vphantom{A}_\emptyset}]}
					\longrightarrow
					\superY{a& \yF{\sT{a_1}}\\
					a & \yF{\sD{a_2}} \\
					\yF{\sD{a_3}} \\
					\none[{\vphantom{A}_\emptyset}]\\
					\none[{\vphantom{A}_\emptyset}]}
					&
					&\superY{\,& \yT\\
					\yT \\
					\none[{\vphantom{A}_\emptyset}]\\
					\none[{\vphantom{A}_\emptyset}]\\
					\none[{\vphantom{A}_\emptyset}]}
					\longrightarrow
					\superY{b & \yF{\sT{b_{\scriptscriptstyle 1} }}\\
					\yF{\sD{b_2}} \\
					\none[{\vphantom{A}_\emptyset}]\\
					\none[{\vphantom{A}_\emptyset}]\\
					\none[{\vphantom{A}_\emptyset}]}
				\end{align}
			\item[4., 5.] Permutations and addition
				\begin{align}
						\superY{b & \yF{\sT{b_{\scriptscriptstyle 1} }}\\
						\yF{\sD{b_2}} \\
						\none[{\vphantom{A}_\emptyset}]\\
						\none[{\vphantom{A}_\emptyset}]\\
						\none[{\vphantom{A}_\emptyset}]} \xrightarrow[{\text{Permutations}}]{}
												&\left\{ 
						\superY{b & \yF{\sT{b_{\scriptscriptstyle 1} }}\\
						\yF{\sD{b_2}} \\
						\none[{\vphantom{A}_\emptyset}]\\
						\none[{\vphantom{A}_\emptyset}]\\
						\none[{\vphantom{A}_\emptyset}]},
												\superY{b & \yF{\sT{b_{\scriptscriptstyle 1} }}\\
						\none[{\vphantom{A}_\emptyset}]\\
						\yF{\sD{b_2}} \\
						\none[{\vphantom{A}_\emptyset}]\\
						\none[{\vphantom{A}_\emptyset}]},
												\superY{b & \yF{\sT{b_{\scriptscriptstyle 1} }}\\
						\none[{\vphantom{A}_\emptyset}]\\
						\none[{\vphantom{A}_\emptyset}]\\
						\yF{\sD{b_2}} \\
						\none[{\vphantom{A}_\emptyset}]},
												\superY{b & \yF{\sT{b_{\scriptscriptstyle 1} }}\\
						\none[{\vphantom{A}_\emptyset}]\\
						\none[{\vphantom{A}_\emptyset}]\\
						\none[{\vphantom{A}_\emptyset}]\\
						\yF{\sD{b_2}}},
												\superY{
						\none[{\vphantom{A}_\emptyset}]\\
						b & \yF{\sT{b_{\scriptscriptstyle 1} }}\\
						\yF{\sD{b_2}} \\
						\none[{\vphantom{A}_\emptyset}]\\
						\none[{\vphantom{A}_\emptyset}]},
												\superY{
						\none[{\vphantom{A}_\emptyset}]\\
						b & \yF{\sT{b_{\scriptscriptstyle 1} }}\\
						\none[{\vphantom{A}_\emptyset}]\\
						\yF{\sD{b_2}} \\
						\none[{\vphantom{A}_\emptyset}]},
												\superY{
						\none[{\vphantom{A}_\emptyset}]\\
						b & \yF{\sT{b_{\scriptscriptstyle 1} }}\\
						\none[{\vphantom{A}_\emptyset}]\\
						\none[{\vphantom{A}_\emptyset}] \\
						\yF{\sD{b_2}} },
												\superY{
						\none[{\vphantom{A}_\emptyset}]\\
						\none[{\vphantom{A}_\emptyset}]\\
						b & \yF{\sT{b_{\scriptscriptstyle 1} }}\\
						\yF{\sD{b_2}} \\
						\none[{\vphantom{A}_\emptyset}]}, \right. \nonumber\\
																								& \phantom{\{}\left.
						\superY{
						\none[{\vphantom{A}_\emptyset}]\\
						\none[{\vphantom{A}_\emptyset}]\\
						b & \yF{\sT{b_{\scriptscriptstyle 1} }}\\
						\none[{\vphantom{A}_\emptyset}] \\
						\yF{\sD{b_2}} },
												\superY{
						\none[{\vphantom{A}_\emptyset}]\\
						\none[{\vphantom{A}_\emptyset}]\\
						\none[{\vphantom{A}_\emptyset}] \\
						b & \yF{\sT{b_{\scriptscriptstyle 1} }}\\
						\yF{\sD{b_2}} },
												\superY{
						\yF{\sD{b_2}} \\
						b & \yF{\sT{b_{\scriptscriptstyle 1} }}\\
						\none[{\vphantom{A}_\emptyset}]\\
						\none[{\vphantom{A}_\emptyset}]\\
						\none[{\vphantom{A}_\emptyset}]},
												\superY{
						\yF{\sD{b_2}} \\
						\none[{\vphantom{A}_\emptyset}]\\
						b & \yF{\sT{b_{\scriptscriptstyle 1} }}\\
						\none[{\vphantom{A}_\emptyset}]\\
						\none[{\vphantom{A}_\emptyset}]},
												\superY{
						\yF{\sD{b_2}} \\
						\none[{\vphantom{A}_\emptyset}]\\
						\none[{\vphantom{A}_\emptyset}]\\
						b & \yF{\sT{b_{\scriptscriptstyle 1} }}\\
						\none[{\vphantom{A}_\emptyset}]},
												\superY{
						\yF{\sD{b_2}} \\
						\none[{\vphantom{A}_\emptyset}]\\
						\none[{\vphantom{A}_\emptyset}]\\
						\none[{\vphantom{A}_\emptyset}]\\
						b & \yF{\sT{b_{\scriptscriptstyle 1} }}},
												\superY{
						\none[{\vphantom{A}_\emptyset}]\\
						\yF{\sD{b_2}} \\
						b & \yF{\sT{b_{\scriptscriptstyle 1} }}\\
						\none[{\vphantom{A}_\emptyset}]\\
						\none[{\vphantom{A}_\emptyset}]},
												\superY{
						\none[{\vphantom{A}_\emptyset}]\\
						\yF{\sD{b_2}} \\
						\none[{\vphantom{A}_\emptyset}]\\
						b & \yF{\sT{b_{\scriptscriptstyle 1} }}\\
						\none[{\vphantom{A}_\emptyset}]}, \right.\nonumber \\
												&\phantom{\{}\left. 
						\superY{
						\none[{\vphantom{A}_\emptyset}]\\
						\yF{\sD{b_2}} \\
						\none[{\vphantom{A}_\emptyset}]\\
						\none[{\vphantom{A}_\emptyset}]\\
						b & \yF{\sT{b_{\scriptscriptstyle 1} }}},
												\superY{
						\none[{\vphantom{A}_\emptyset}]\\
						\none[{\vphantom{A}_\emptyset}]\\
						\yF{\sD{b_2}} \\
						b & \yF{\sT{b_{\scriptscriptstyle 1} }}\\
						\none[{\vphantom{A}_\emptyset}]},
												\superY{
						\none[{\vphantom{A}_\emptyset}]\\
						\none[{\vphantom{A}_\emptyset}]\\
						\yF{\sD{b_2}} \\
						\none[{\vphantom{A}_\emptyset}]\\
						b & \yF{\sT{b_{\scriptscriptstyle 1} }}},
												\superY{
						\none[{\vphantom{A}_\emptyset}]\\
						\none[{\vphantom{A}_\emptyset}]\\
						\none[{\vphantom{A}_\emptyset}]\\
						\yF{\sD{b_2}} \\
						b & \yF{\sT{b_{\scriptscriptstyle 1} }}}.
												\right\}
				\end{align}
				The elements of that set are then added to the $a$-filled diagram of $\ti\La$. The rows of the resulting diagrams are rearranged in proper order such that they can potentially be associated to a superpartition. 
				 				 Among the resulting diagrams, those that violate the conditions on superpartitions are rejected.
	In the remaining set of rearranged diagrams, we only keep the distinct ones. 
	The resulting set is
	\begin{align}
																					\mathcal{P} =	\left\{ 
					\superYbig{a & b & \yF{\sT{b_1}}&\yF{\sD{a_2}} \\
					a & \yF{\sT{a_1}} \\
					\yF{\sT{b_2}}&\yF{\sD{a_3}} },
										\superYbig{a & b & \yF{\sT{b_1}}&\yF{\sD{a_2}} \\
					a &  \yF{\sT{a_1}} \\
					\yF{\sT{b_2}} \\
					\yF{\sD{a_3}} },
										\superYbig{b & \yF{\sT{b_1}}&\yF{\sD{a_3}} \\
					a & \yF{\sT{a_1}} \\
					a & \yF{\sD{a_2}} \\
					\yF{\sT{b_2}} },
																\superYbig{b & \yF{\sT{b_1}}&\yF{\sD{a_3}} \\
					a& \yF{\sT{b_2}}&\yF{\sD{a_2}} \\
					a & \yF{\sT{a_1}} }.
					\right\}
				\end{align}
				\item[6.] Multiplicity of the monomial\\
				The first three diagrams of $\mathcal{P}$ do not have any part that is repeated (i.e., parts with the same number of boxes and same fermionic content). We thus have
				\beq N_{\Lambda \Omega}^{(\sTD{2}, \sT{1}, \sTD{0})}= N_{\Lambda \Omega}^{(\sTD{2}, \sT{1},\sT{0}, \sD{0})}= N_{\Lambda \Omega}^{(\sTD{1}, \sT{1}, \sD{1},\sT{0})}=1.\eeq 
In the fourth diagram, the first and the second parts have the same number of boxes and same fermionic content, so that \beq N_{\Lambda \Omega}^{(\sTD{1},\sTD{1},\sT{1})}=2.\eeq 
				\item[7.] Sign of the permutation\\
				Reading the content of the circles from left to right and top to bottom, we note the sequence of indexed symbols. We enumerate the sequences in the same order as they appear in $\mathcal{P}$: 
				\begin{align*}
					(b_1, a_2, a_1, b_2, a_3) &\xrightarrow{\text{odd permutation}} (a_1, a_2, a_3, b_1, b_2) \\
					(b_1, a_3, b_2, a_2, a_1) &\xrightarrow{\text{odd permutation}} (a_1, a_2, a_3, b_1, b_2) \\ 				(b_1, a_3, a_1, a_2, b_2) &\xrightarrow{\text{{odd} permutation}} (a_1, a_2, a_3, b_1, b_2)\\ (b_1, a_3, b_2, a_2, a_1) &\xrightarrow{\text{even permutation}} (a_1, a_2, a_3, b_1, b_2). 
				\end{align*}
			\end{enumerate}
			We finally have that 
			\begin{align}
				m_{\, \superYsmall{\,& \yT\\
				\,& \yC\\
				\yC}}
				m_{\, \superYsmall{\,& \yT\\
				\yT}} &=
				-m_{\, \superYsmall{\,& \,& \yTC\\
				\,& \yT\\
				\yTC}}
				-m_{\, \superYsmall{\,& \,& \yTC\\
				\,& \yT\\
				\yT\\
				\yC}}
				-m_{\, \superYsmall{\,& \yTC\\
				\,& \yT\\
				\,& \yC\\
				\yT}}
								+2m_{\, \superYsmall{\,& \yTC\\
				\,& \yTC\\
				\,& \yT}}.
			\end{align}
		 	\section{ {Completeness of {the multiplicative bases}}}
{Given that   the monomials form a basis, the standard strategy (cf. \cite{Macdonald1998}) for demonstrating that the elementary symmetric functions also form a basis goes through the demonstration that $e_{\la'}$ is lower triangular in its monomial expansion (i.e., it is expressed  in terms of the $m_\mu$ for $\mu\leq \la$). It then readily follows, from the involution $\om$, that  the $h_\la$ also form a basis. Finally, Newton's formulas, which allows to express the $h$'s
in terms of $p$'s, ensure that the power sums also form a basis.
\\

				But this route is blocked from the beginning  in our case, simply because there is no triangularity between the 
monomials and the $e_\Lambda$'s in the $\mathcal{N}=2$ case. We thus have to follow a different path. We will first prove that the power sums $p_\La$ form a basis and then use the standard argument (albeit  in inverse order) to deduce that the $h_\La$ and $e_\La$ do also form a basis.  
				For the first step, we proceed by induction: we suppose that the monomial decomposition of $p_\La$ is upper triangular.  The inductive step relies on the multiplicative character of the power sums: the power sum with the superpartition $\La^+$ which is $\La$ augmented by an extra part $\La_{L+1}$ (with $L=\ell(\La$)), that is $\La^+=\La\cup\La_{L+1}$, is obtained by multiplying $p_\La$ by $p_{\La_{L+1}}.$  Then, we use the fact that $p_{\La_{L+1}}=m_{\La_{L+1}}$ to transform the product of $p_{\La_{L+1}}$ times the monomial upper-triangular sum by means of the product rule for monomials described in Appendix B.
				The limitation of this procedure is that we cannot demonstrate that the elementary and homogeneous functions form $\mathbb {Z}$ bases but only the weaker statement that they are $\mathbb Q$ bases.}  But this is a rather minor detail in the present context.

	\subsection{Weight ordering on superpartitions} \label{subC1}
	We now introduce  a partial order on superpartitions which will 
allow us to prove that the transition matrix between the monomial and power sum bases is triangular. The order relies on the assignment of a weight to the various types of parts in a superpartition according to
\begin{align}\label{Wpar}
			w(a) &= a, \qquad
			w( \sT{a}) = 			w( \sD{a}) = a + \tfrac{1}{2} ,\qquad
			w( \sTD{a})= a + 1.
		\end{align}
Note that this is compatible with the ordering $\succ$ introduced in \eqref{orderP}: the parts in a superpartition are ordered by non-increasing value of their weight, i.e., 
		\beq \label{worder}\La=(\La_1,\La_2,\cdots ,\La_\ell): \qquad w(\La_i)\geq w(\La_{i+1}),\eeq
		 with the sole difference that in the $\succ$ ordering, we prioritize vertical circles.

		\begin{definition}[Weight Ordering] \label{def.weightOrder}
		 In terms of the weight assignment \eqref{Wpar}, we define the following ordering on superpartitions: 
		\begin{align}
			\Lambda > \Omega	\quad	&\text{iff} \quad
\quad \sum_{i=1}^n w(\Lambda_i) \geq \sum_{i=1}^n w(\Omega_i) \; \forall n
\quad \text{ and } \quad \exists  n \text{~such that~}
			 \sum_{i=1}^n w(\Lambda_i) > \sum_{i=1}^n w(\Omega_i) .
		\end{align}
		\end{definition}
		\begin{example}
		Let us illustrate the ordering with the following example:
		\begin{align}
			\superY{	\,&\,&\yTC}
			>
			\superY{	\,&\,&\yT\\
						\yC}
			>
			\superY{	\,&\yC\\
						\,&\yT}
			>
			\superY{	\,&\yT\\
						\,\\
						\yT}
			>			
			\superY{	\,\\
						\,\\
						\yTC}
			>
			\superY{	\,\\
						\,\\
						\yC\\
						\yT}
.		\end{align}
Note that two superpartitions $\Lambda$ and $\Omega$ can be distinct even though	\begin{align}
			\sum_{i=1}^n w(\Lambda_i) = \sum_{i=1}^n w(\Omega_i) \quad \forall n,
		\end{align}
in which case they are not comparable. For instance, 
		\begin{align}
			\superY{	\,&\,&\yT\\
						\yC}
			& \quad \text{and} \quad
			\superY{	\,&\,&\yC\\
						\yT}		
                 \end{align}
is such a pair of superpartitions.
		\end{example}
Observe that the weight ordering, 
which was only chosen for its simplicity,
does not reduce to the usual 
dominance ordering  for $\mathcal {N}={1}$ superpartitions when $\o m$ or $\u m$ vanishes.  For instance, the superpartition $(\bar 4,\bar 2,2,2,1)$ is
larger than the superpartition $(3,3,3,\bar 1,1,\bar 0)$ in the 
weight ordering while 
the two superpartitions are not comparable in the dominance ordering.

The total weight of a superpartition is defined in the obvious way as
		\begin{align}
			w(\Lambda) = \sum_{i=1}^{\ell (\Lambda)} w(\Lambda_i).
		\end{align}
Diagrammatically, the above weight assignments amounts to weighting boxes and circles as follows
		\begin{align}
			w \left(\makebox[10pt]{$\;\superY{\,}$} \right) = 1, \qquad			w\left(\makebox[10pt]{$\;\superY{\yT}$} \right) =			w\left(\makebox[10pt]{$\;\superY{\yC}$} \right) = \tfrac{1}{2}, \qquad
			w\left(\makebox[10pt]{$\;\superY{\yTC}$} \right) = 1 .
		\end{align}
In terms of the data $(n,\o  m,\u m)$, the total weight of a superpartition reads
		\begin{align}
			w(\Lambda) = n+\tfrac{1}{2}\sT{m} + \tfrac{1}{2}\sD{m}\quad \forall \Lambda \in \SPar{n|\sT{m},\sD{m}}.
		\end{align}
		Finally, note that if {the diagram of} $\Lambda$ can be obtained {from that of $\Gamma$} by lowering any symbol, then $\Gamma > \Lambda.$

		\begin{example}
			This last point is illustrated by the following series of diagrams where every diagram is obtained by the previous one by lowering one symbol (we consider \makebox[10pt]{$\;\superY{\yTC}$} as two symbols):
			\begin{align}
				\superY{
					\,&\,&\,&\yTC\\
					\yC
					}
				>
				\superY{
					\,&\,&\yTC\\
					\,\\
					\yC
					}
				>
				\superY{
					\,&\,&\yT\\
					\,&\yC\\
					\yC
					}
				>
				\superY{
					\,&\,\\
					\,&\yTC\\
					\yT
				}.
			\end{align}
		\end{example}
		 		
\subsection{Multiplicative bases}
				
	\begin{proposition}	\label{prop.preserveWeightDominance}
		Let $\Gamma$ and $\Lambda \in \SPar{n,\sT{m},\sD{m}}$ {be such that $\Gamma > \Lambda$. If the same part $\Lambda_{L+1}$, such that $\Lambda_{L+1}\leq \Lambda_{L}$, is added to both superpartitions 
then $\Gamma^+>\Lambda^+$, where}		\begin{align}\label{plus}
			\Lambda^+ &:= \La\cup(\La_{L+1})=(\Lambda_1, \Lambda_2, \cdots, \Lambda_{L+1})\\
			\Gamma^+ &:= \Gamma\cup(\La_{L+1})=(\Gamma_1, \cdots, \Gamma_{{k-1}}, \Lambda_{L+1}, \Gamma_{{k}}, \cdots, \Gamma_{\ell(\Gamma)}),
		\end{align}
with the further specification that, if such $\Gamma_k$ exists,
		\begin{align}
			w(\Lambda_{L+1})> w(\Gamma_{{k}}), \label{eq.LaBiggerGammak}
		\end{align}
meaning that if the part $\La_{L+1}$ is repeated in $\Gamma^+$, the added part is placed in the rightmost position.
	\end{proposition}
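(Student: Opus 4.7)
The plan is to translate the proposition entirely into a statement about partial sums of weights, then verify it by a short case analysis. Write $a_i = w(\Lambda_i)$ (for $i=1,\dots,L+1$, with $a_{L+1}=c := w(\Lambda_{L+1})$), $b_j=w(\Gamma_j)$, and set $d_n := \sum_{i\le n}b_i - \sum_{i\le n}a_i$, extending both sequences by parts of weight $0$ so that everything is defined for all $n$. The hypothesis $\Gamma>\Lambda$ becomes $d_n\ge 0$ for all $n$, with $d_{n_0}>0$ for some $n_0$, and $d_\infty=0$ since $\Gamma,\Lambda$ lie in the same sector. I would then write the partial-sum differences $D_n := \sum_{i\le n}w(\Gamma^+_i) - \sum_{i\le n}w(\Lambda^+_i)$ using: in $\Lambda^+$ the new part sits at position $L+1$, while in $\Gamma^+$ it sits at position $k$ characterized by $b_{k-1}\ge c > b_k$.

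The case analysis splits according to where $n$ falls relative to the insertion points. For $n<\min(k,L{+}1)$ one has $D_n=d_n\ge 0$ immediately. For $L{+}1\le n<k$, a short calculation gives $D_n = d_{L} + b_{L+1} - c$, which is non-negative because $b_{L+1}\ge c$ (every part to the left of the insertion point in $\Gamma$ has weight at least $c$). Symmetrically, for $k\le n\le L$ at the insertion in $\Lambda^+$, one finds
\begin{equation}
D_n \;=\; d_{n-1} + c - a_n .
\end{equation}
This is the only delicate range, and it is handled by the following \emph{key observation}: the identity $d_n = d_{n-1}+b_n-a_n$ together with $d_n\ge 0$ yields $d_{n-1}\ge a_n - b_n$, and since $n\ge k$ implies $b_n<c$, we get $d_{n-1}> a_n - c$, i.e.\ $D_n>0$. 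The remaining ranges $n\ge L+1$ and $n\ge k$ give $D_n = S^\Gamma_{n-1}-S^\Lambda_L\ge d_{n-1}\ge 0$, and $D_n=0$ once $n$ exceeds both lengths, consistent with $\Lambda^+$ and $\Gamma^+$ sharing the same total weight.

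To conclude the strict inequality $\Gamma^+>\Lambda^+$, I would argue that the index $n_0$ witnessing strictness of $\Gamma>\Lambda$ transports to strictness of $\Gamma^+>\Lambda^+$: in the ``safe'' ranges one has $D_{n_0}=d_{n_0}>0$ directly, while in the range $k\le n_0\le L$ the computation above in fact yields $D_{n_0}\ge c-b_{n_0}>0$, independently of $d_{n_0-1}$. If it happens that $n_0$ lies in none of these ranges (only a bookkeeping issue in the boundary case), strictness can be transferred to a neighbouring index by using $d_\infty=0$ and the monotone character of partial sums past the lengths of $\Gamma,\Lambda$.

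The main obstacle, as in standard arguments about dominance on partitions, is the range $k\le n\le L$: naively one loses control because a ``heavy'' part $a_n\ge c$ of $\Lambda$ is being compared to a ``light'' part $b_n<c$ of $\Gamma$, which depresses $D_n$. The whole proof hinges on the simple remark that this loss is exactly compensated by the non-negativity of $d_n$ one step later, which is why the hypothesis $\Gamma>\Lambda$ is used precisely there rather than at $n$ itself. Once this observation is made, everything else reduces to bookkeeping that I would organize in a single table of cases.
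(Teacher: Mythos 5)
Your argument is correct and is essentially the paper's proof in different notation: the decisive step in the range $k\le n\le L$, which you write as $D_n=d_{n-1}+c-a_n>0$ via $d_n\ge 0$ and $b_n<c$, is algebraically the same as the paper's chain $\sum_{i\le n}w(\Gamma^+_i)\ge\sum_{i\le n}w(\Gamma_i)\ge\sum_{i\le n}w(\Lambda_i)=\sum_{i\le n}w(\Lambda^+_i)$, both resting on the observation that every part of $\Gamma$ at or beyond the insertion position has weight strictly less than $w(\Lambda_{L+1})$. The only superficial differences are that the paper organizes the argument into the two cases $k-1=\ell(\Gamma)$ and $k-1<\ell(\Gamma)$ rather than by ranges of $n$, and that your range $L+1\le n<k$ is in fact vacuous since $k\le \ell(\Gamma)+1\le L+1$.
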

	\begin{proof} 	Suppose first that $k-1=\ell(\Gamma)$, that is, that $\Lambda_{L+1}$ is also the last
entry of $\Gamma^+$. Using $\Gamma > \Lambda$, which implies $\ell(\Gamma) \leq \ell(\Lambda)=L$,
we have 
\begin{align}
\sum_{i=1}^{n}w(\Gamma^+_i)=\sum_{i=1}^{n}w(\Gamma_i) \geq 
\sum_{i=1}^{n}w(\Lambda_i)=\sum_{i=1}^{n}w(\Lambda_i^+) \quad \forall n \leq \ell(\Gamma),
\end{align}
with at least one partial sum being strictly larger.
This immediately implies $\Gamma^+ > \Lambda^+$ since the only remaining partial sum is also such that 
\begin{align}
\sum_{i=1}^{\ell(\Gamma)+1}w(\Gamma_i^+)= w(\Gamma^+) = w(\Lambda^+) \geq \sum_{i=1}^{\ell(\Gamma)+1}w(\Lambda_i^+) \, . 
\end{align}

Now, suppose that $k-1 < \ell(\Gamma)$. Since $\Gamma>\Lambda$, we have
\begin{align}
\sum_{i=1}^{n}w(\Gamma^+_i)=\sum_{i=1}^{n}w(\Gamma_i) \geq 
\sum_{i=1}^{n}w(\Lambda_i)=\sum_{i=1}^{n}w(\Lambda_i^+) \quad \forall n \leq k-1
\end{align}
given that $k \leq \ell(\Gamma) \leq \ell(\Lambda)=L$.
Adding $\sum_{i=1}^{k-1}w(\Gamma_i)$ on each side of \eqref{eq.LaBiggerGammak} also results in
		\begin{align}
			\sum_{i=1}^{k-1}w(\Gamma_i) + w(\Lambda_{L+1}) &>\sum_{i=1}^{k}w(\Gamma_i)\quad
			\implies\quad \sum_{i=1}^{k}w(\Gamma^+_i) >\sum_{i=1}^{k}w(\Gamma_i)\geq  \sum_{i=1}^{k}w(\Lambda_i)= \sum_{i=1}^{k}w(\Lambda^+_i).
		\end{align}
	The ordering on the entries of superpartitions  implies that $\Gamma_i^+\geq \Gamma_i\, \forall\, i > k$, so we obtain
		\begin{align}
			\sum_{i=1}^{{n}}w(\Gamma^+_i) > \sum_{i=1}^{{n}}w(\Gamma_i)\geq \sum_{i=1}^n w(\Lambda_i)=w(\Lambda_i^+) \quad \text{whenever~} k \leq n \leq\ell(\Gamma). \label{eq.stronginq}
		\end{align}
where in the last step we used $\Gamma > \Lambda$ and $\ell(\Gamma)\leq L$.
As before, we also have
\begin{align}
\sum_{i=1}^{\ell(\Gamma)+1}w(\Gamma_i^+)= w(\Gamma^+) = w(\Lambda^+) \geq \sum_{i=1}^{\ell(\Gamma)+1}w(\Lambda_i^+) \, . 
\end{align}
which implies
	\begin{align}
			\sum_{i=1}^n w(\Gamma_i^+) \geq \sum_{i=1}^n w(\Lambda_i^+) \quad \forall n .\label{ineq.gamaplusBigLambdaplus}
		\end{align}
Since, as we have seen, the inequality is strict when $n=k$, we have that $\Gamma^+ > \Lambda^+$.
	\end{proof}

{
\begin{example}
	Let us illustrate the previous proposition with an example. Let $\Gamma = (\sT{4}, 2)$ and $\Lambda= (\sT{3}, 3)$. We see that $\Gamma > \Lambda$. Now let $\Lambda^+= (\sT{3},3 ,\sTD{2})$. So we have 
	\begin{align}
		\Gamma^+ = (\sT{4}, 2) \cup (\sTD{2}) = \superY{\,&\,&\,&\,&\yT\\ \,&\,&\yTC\\ \,&\,}, \quad \Lambda^+ = (\sT{3}, 3) \cup (\sTD{2}) = \superY{\,&\,&\,&\yT\\ \,&\,&\, \\ \,&\,&\yTC}.
	\end{align}
	We see that $\Gamma^+$ is still larger than $\Lambda^+$ in the weight ordering. 
\end{example}
}

	\begin{proposition}\label{prop:powersumistriangular}
		The expansion of the power sums basis in the monomial basis is upper triangular.  To be more precise, 
		\begin{align}\label{uptri}
			&p_{\Lambda} = a_\Lambda m_\Lambda + \sum_{\Gamma > \Lambda} a_{\Lambda, \Gamma}m_\Gamma,
		\qquad \text{with}\qquad
			a_\Lambda= {\prod_{\substack{i=1\\ \Lambda_{i}\neq \Lambda_{i+1} }}^L}			 n_\Lambda(\Lambda_i)!.
		\end{align}
		where 
	$n_\Lambda(\Lambda_i)$ is the number of occurrences of $\Lambda_i$ in the superpartition $\Lambda$.
	\end{proposition}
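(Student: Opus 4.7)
The plan is to induct on the length $L = \ell(\La)$ of $\La$, exploiting the multiplicative nature of the power sums. The base case $L=1$ is immediate from the remark following Table~\ref{tab1}, which asserts $p_{\La_1} = m_{(\La_1)}$ for each of the four possible part-types; the leading coefficient is then $1 = 1! = n_\La(\La_1)!$ and there are no other terms.

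For the inductive step, write $\La^+ = \La \cup (\La_{L+1})$ with $\La_{L+1} \preceq \La_L$. Multiplicativity and the identification $p_{(\La_{L+1})} = m_{(\La_{L+1})}$ yield
\begin{equation*}
p_{\La^+} = p_\La \cdot m_{(\La_{L+1})} = a_\La\, m_\La\, m_{(\La_{L+1})} + \sum_{\Gamma > \La} a_{\La,\Gamma}\, m_\Gamma\, m_{(\La_{L+1})}.
\end{equation*}
The key intermediate step is to establish, via the monomial-product algorithm of Appendix~B, that for every superpartition $\Omega$ such that $\Omega^+ := \Omega \cup (\La_{L+1})$ is a valid superpartition one has
\begin{equation*}
m_\Omega \cdot m_{(\La_{L+1})} = n_{\Omega^+}(\La_{L+1})\, m_{\Omega^+} + \sum_{\Delta > \Omega^+} c_\Delta\, m_\Delta.
\end{equation*}

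In the algorithm, the diagram $\Omega^+$ corresponds to the unique ``unmerged'' placement of the lone $b$-row (appended at the bottom of $\Omega$), whereas every other placement either gets annihilated (because it would violate the distinct-parts rule on rows of odd fermionic degree) or merges the $b$-row into some $\Omega_p$ with $p \leq L$, transferring weight upward and thereby producing a $\Delta$ strictly greater than $\Omega^+$ by the lowering characterization of the weight ordering at the end of Subsection~\ref{subC1}. The leading multiplicity comes from steps~6--7 of the algorithm: the $b$-filled row can occupy any of the $n_{\Omega^+}(\La_{L+1})$ positions among the equal-(value, content) rows of $\Omega^+$, so $N^{\Omega^+} = n_{\Omega^+}(\La_{L+1})$; the sign $\epsilon^{\Omega^+}$ equals $+1$, because the superpartition axioms force any repeated (value, content) pair to have even fermionic degree, so swapping two such rows requires an even number of subscript transpositions and the natural reading already places all $a$-subscripts before all $b$-subscripts.

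Applying the intermediate identity to $\Omega = \La$ and to each $\Omega = \Gamma > \La$, together with Proposition~\ref{prop.preserveWeightDominance} (which gives $\Gamma^+ > \La^+$), shows that the coefficient of $m_{\La^+}$ in $p_{\La^+}$ equals $a_\La \cdot n_{\La^+}(\La_{L+1})$, with every remaining monomial indexed by a superpartition $>\La^+$. A direct comparison of the factorial formulas identifies this coefficient with $a_{\La^+}$: if $\La_{L+1}$ duplicates an existing part-type of multiplicity $k$ in $\La$ then $a_{\La^+}/a_\La = (k+1)!/k! = k+1 = n_{\La^+}(\La_{L+1})$, while if $\La_{L+1}$ introduces a new part-type the ratio is trivially $1$. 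The main obstacle is the rigorous verification of the intermediate identity, in particular ensuring that every merged placement produces a strictly larger superpartition in the weight ordering and that no sign cancellations among equivalent fillings disturb the leading multiplicity.
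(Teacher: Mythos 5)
Your proposal is correct and follows essentially the same route as the paper's proof: induction on the length via $p_{\La^+}=p_\La\, p_{(\La_{L+1})}$, the identification $p_{(\La_{L+1})}=m_{(\La_{L+1})}$, the Appendix~B product algorithm to show the non-leading terms arise by merging (hence are strictly higher in the weight ordering by the lowering characterization), Proposition~\ref{prop.preserveWeightDominance} to handle the $\Gamma>\La$ terms, and the multiplicity bookkeeping $N^{\La^+}_{\La\,\La_{L+1}}=n_{\La^+}(\La_{L+1})$. Your explicit justification that the leading sign is $+1$ (repeated parts necessarily have even fermionic degree) is a detail the paper leaves implicit, but it does not change the argument.
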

	\begin{proof}
		{Let us define the superpartitions $\La,\La^+,\Gamma$ and $\Gamma^+$ as in \eqref{plus}.}
We proceed by induction.	Suppose the correctness of \eqref{uptri} and consider	\begin{align}
			p_{\Lambda^+}=p_{\Lambda}p_{\Lambda_{L+1}} = a_{\Lambda}m_\Lambda m_{\Lambda_{L+1}} + \sum_{\Gamma > \Lambda} a_{\Lambda,\Gamma}m_{\Gamma}m_{\Lambda_{L+1}}. \label{eq.plambdaplus}
		\end{align}
		Let us first focus on the $a_{\Lambda}m_\Lambda m_{\Lambda_{L+1}}$ term. Given the rules of the product of monomials presented in Appendix \ref{sect:monomialProd}, we know that this product of monomials will result in  {an expression of the form (where $\mathcal P$ is defined by the algorithm described there)				} 
				\begin{align}
			m_\Lambda m_{\Lambda_{L+1}} = N_{\Lambda\Lambda_{L+1}}^{\Lambda^+}m_{\Lambda^+} + \sum_{\Om \in \mathcal{P} | \Om \neq \Lambda^+} \epsilon_{\Lambda \Lambda_{L+1}}^\Om N_{\Lambda \Lambda_{L+1}}^\Om m_\Om.
		\end{align}
		Now the second term is a sum over every possible ways of adding {(with the  $+$ operation described in \eqref{addition})} the part $\Lambda_{L+1}$ to every other part of $\Lambda$.  This means that we can obtain $\Lambda^+$ from {such a  superpartition, say  $\Om$, by lowering a number of symbols in the diagrammatic representation of the latter}. This amounts to say (see last comment of Subsection~\ref{subC1}) that all these superpartitions $\Omega$
are higher in the weight ordering than $\La^+$. This means that 
		\begin{align}
			m_\Lambda m_{\Lambda_{L+1}} = N_{\Lambda\Lambda_{L+1}}^{\Lambda^+}m_{\Lambda^+} + \text{ higher terms}.
		\end{align}
		Moreover,   from the monomial product rules, we know that 
		\begin{align}
			N_{\Lambda\Lambda_{L+1}}^{\Lambda^+} = n_{\Lambda^+}(\Lambda_{L+1}).
		\end{align}
		Now, let us focus on the sum in \eqref{eq.plambdaplus}. The product of monomials that is summed upon is
		\begin{align}
			m_{\Gamma}m_{\Lambda_{L+1}} \quad \text{with } \Gamma > \Lambda.
		\end{align}
		From the previous argument, we know that the result of the product will contain the monomial of superpartition $\Gamma$ to which we {adjoin (with the $\cup$ operation)} $\Lambda_{L+1}$ as a new part (giving $\Gamma^+$ defined in \eqref{plus}) followed by a sum of terms higher in the weight ordering. This means that
we have		\begin{align}
			m_{\Gamma}m_{\Lambda_{L+1}} \propto m_{\Gamma^+} + \text{ higher terms}.
		\end{align}
		From proposition \ref{prop.preserveWeightDominance} we know that 
		if $ \Gamma > \Lambda$, then $\Gamma^{+} > \Lambda^{+}.$
Equation \eqref{eq.plambdaplus} can thus be written as
		\begin{align}
			p_{\Lambda^+}&= n_{\Lambda^+}(\Lambda_{L+1}){\prod_{\substack{i=1\\ \Lambda_{i}\neq \Lambda_{i+1} }}^L}  			n_\Lambda(\Lambda_i)!\,m_{\Lambda^{+}} + \text{ higher terms}\\
							&= {\prod_{\substack{i=1\\ \Lambda_{i}\neq \Lambda_{i+1} }}^{L+1} }			n_{\Lambda^+}(\Lambda^+_i)!\,m_{\Lambda^{+}} + \text{ higher terms},
		\end{align}
{which completes the proof of the inductive step. The proof of \eqref{uptri} is completed by noticing that the result is obviously true for a  single-part superpartition: $	p_{(\Lambda_1)} = m_{(\Lambda_1)}$.
}
	\end{proof}
\begin{example}
		Let us illustrate the proof with an example. Let $\Lambda=(\sT{2},\sD{1})$ and $\Lambda^+=(\sT{2},\sD{1},1)$. We have
		\begin{align}
			p_{
				\superYsmall{
				\,&\,&\yT\\
				\,&\yC
				}
			}
			=
			m_{
				\superYsmall{
				\,&\,&\yT\\
				\,&\yC}
			}
			+m_{
				\superYsmall{
				\,&\,&\,&\yTC
				}
			}.
		\end{align} 
		Then, we have
		\begin{align}
			p_{
				\superYsmall{
				\,&\,&\yT\\
				\,&\yC \\
				\,
				}
			}=
			p_{
				\superYsmall{
				\,&\,&\yT\\
				\,&\yC \\
				}
			}p_{\superYsmall{\,}}
			&=
			m_{
				\superYsmall{
				\,&\,&\yT\\
				\,&\yC}
			}
			m_{\superYsmall{\,}}
			+m_{
				\superYsmall{
				\,&\,&\,&\yTC
				}
				}	
			m_{\superYsmall{\,}}\\
			&=
			(	
				m_{
				\superYsmall{
				\,&\,&\yT\\
				\,&\yC\\
				\,}
				}+
				m_{
				\superYsmall{
				\,&\,&\yT\\
				\,&\,&\yC}
				}+
				m_{
				\superYsmall{
				\,&\,&\,&\yT\\
				\,&\yC}
				}
			)	
			+(
			m_{
				\superYsmall{
				\,&\,&\,&\yTC\\
				\,}
				}	+
			m_{
				\superYsmall{
				\,&\,&\,&\,&\yTC}
				}			
			).
		\end{align}
	\end{example} 
	\begin{proposition}
The power sums $p_\La$, for $\La\in\text{SPar}$, form a basis of
$\mathscr{P}^{S_\infty}(\mathbb{Q})$.		
	\end{proposition}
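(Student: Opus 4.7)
The plan is to deduce basishood from Proposition \ref{prop:powersumistriangular} via a standard unitriangular change of basis argument, working sector by sector. First I would fix a sector $\text{SPar}(n|\sT{m},\sD{m})$, which is finite since $n$, $\sT{m}$ and $\sD{m}$ bound both the size and the length of admissible superpartitions. In this sector all superpartitions share the same total weight $n+\tfrac{1}{2}\sT{m}+\tfrac{1}{2}\sD{m}$, so the weight ordering of Definition \ref{def.weightOrder} is nontrivial among them, and can be refined to a total order on the finite set $\text{SPar}(n|\sT{m},\sD{m})$.

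Next I would read off from Proposition \ref{prop:powersumistriangular} that in this refined total order the transition matrix $A=(a_{\Lambda,\Gamma})$ expressing the family $\{p_\Lambda\}_{\Lambda\,\in\,\text{SPar}(n|\sT{m},\sD{m})}$ in the monomial basis $\{m_\Gamma\}_{\Gamma\,\in\,\text{SPar}(n|\sT{m},\sD{m})}$ is upper triangular, with diagonal entries
\begin{equation*}
a_\Lambda=\prod_{\substack{i\\ \Lambda_i\neq\Lambda_{i+1}}} n_\Lambda(\Lambda_i)!\,\in\,\mathbb{Z}_{>0}.
\end{equation*}
Since these diagonal entries are nonzero rationals, $A$ is invertible over $\mathbb{Q}$. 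Because the $m_\Gamma$ form a basis of $\mathscr{P}^{S_\infty}_{(n|\sT{m},\sD{m})}(\mathbb{Q})$ by the Proposition stated at the end of Section 2.4, the invertibility of $A$ shows that the $p_\Lambda$ both span this sector and are linearly independent in it.

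Finally, since $\mathscr{P}^{S_\infty}(\mathbb{Q})=\bigoplus_{n,\sT{m},\sD{m}}\mathscr{P}^{S_\infty}_{(n|\sT{m},\sD{m})}(\mathbb{Q})$, and the power sums $p_\Lambda$ are homogeneous with respect to this tri-grading (so that there is no mixing between sectors), assembling the sectorwise statements yields that $\{p_\Lambda:\Lambda\in\text{SPar}\}$ is a basis of $\mathscr{P}^{S_\infty}(\mathbb{Q})$. There is no genuine obstacle here; the only subtle point is the one already anticipated in the preamble to the appendix, namely that the diagonal entries $a_\Lambda$ are integers greater than $1$ whenever $\Lambda$ has repeated parts, so the transition matrix fails to be unimodular and the argument only produces a $\mathbb{Q}$-basis rather than a $\mathbb{Z}$-basis. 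This is precisely the reason the statement is restricted to $\mathbb{Q}$-coefficients, and the $h_\Lambda$ and $e_\Lambda$ bases will then be obtained as $\mathbb{Q}$-bases via Newton-type relations and the involution $\hat\omega$ discussed in Section 3.4.
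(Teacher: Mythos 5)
Your proposal is correct and follows essentially the same route as the paper: invoke the upper-triangularity of the power-sum-to-monomial transition matrix from Proposition \ref{prop:powersumistriangular}, note that the diagonal entries $a_\Lambda$ are nonzero integers so the matrix is invertible over $\mathbb{Q}$, and conclude from the fact that the monomials form a basis. The only difference is that you make explicit two points the paper leaves implicit (the finiteness of each sector and the refinement of the weight order to a total order), which is a welcome but not substantively different elaboration.
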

	\begin{proof}
	From proposition \ref{prop:powersumistriangular}, we know that 
	{the transition matrix $M(p,m)_{\Lambda,\Gamma}$ (which describes the monomial expansion of the power sums)} is upper triangular, that is
	\begin{align}
		M(p,m)_{\Lambda,\La} \neq 0 \quad \forall \Lambda \qquad\text{and}\qquad
		M(p,m)_{\Lambda,\Gamma} = 0 \quad \forall \,\Gamma \leq \Lambda.
	\end{align}
The triangular nature of the transition matrix and the absence of zero entries on its diagonal ensure its invertibility over $\mathbb Q$, i.e., that $M(m,p)$ is well defined. This implies that any monomial symmetric function can be expressed as a linear combination over $\mathbb Q$ of the power sums. Since the monomials form a basis of the ring of symmetric superfunctions over $\mathbb Q$ (given that they do over $\mathbb Z$), so do the power sums.
	\end{proof}

	\begin{proposition} \label{prop:homoisabasis}
The homogeneous symmetric functions $h_\La$, for $\La\in\text{SPar}$, form a basis of
$\mathscr{P}^{S_\infty}(\mathbb{Q})$.		
	\end{proposition}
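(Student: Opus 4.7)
The plan is to leverage the preceding proposition, which establishes that $\{p_\Lambda\}_{\Lambda\in\text{SPar}}$ is a $\mathbb{Q}$-basis of $\mathscr{P}^{S_\infty}(\mathbb{Q})$, and to reduce the statement for $\{h_\Lambda\}$ to showing that every $p_\Lambda$ lies in the $\mathbb{Q}$-span of the $h_\Gamma$'s. A dimension argument within each tri-graded sector $\mathscr{P}^{S_\infty}_{(n|\sT{m},\sD{m})}(\mathbb{Q})$ will then force the $h_\Gamma$'s to be a basis. In contrast with the proof for $p_\Lambda$, no useful triangularity against $m_\Gamma$ is available here, so the argument must proceed indirectly through the power sums, as the introductory paragraph of this appendix already indicates.

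The first concrete step is to invert, by induction on $n$, the four Newton-type recursions in Block B of Table~\ref{tab2}. In each of these, the generator of highest index $p_n$, $\sT{p}_n$, $\sD{p}_n$ or $\sTD{p}_n$ appears isolated with a nonzero rational leading coefficient (namely $n$, $n+1$, $n+1$ and $n+2$ respectively), so one may solve for it in terms of lower-index power-sum generators (already in $\mathbb{Q}[h_k,\sT{h}_k,\sD{h}_k,\sTD{h}_k]$ by the inductive hypothesis) and the $h$-generators themselves. The outcome is an explicit expression of each of $p_n, \sT{p}_n, \sD{p}_n, \sTD{p}_n$ as a $\mathbb{Q}$-polynomial in the $h$-generators.

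Next, for an arbitrary $\Lambda \in \text{SPar}$, I would substitute these expressions into $p_\Lambda = \prod_i \tilde{p}_{\Lambda_i}$ and expand. Every resulting monomial in the $h$-generators, once reordered according to the convention of the remark following Table~\ref{tab1}, equals $\pm h_\Gamma$ for a unique $\Gamma \in \text{SPar}$. Crucially, the distinctness constraint on overlined (resp.\ underlined) parts is enforced automatically: squares of odd-fermionic-degree generators such as $\sT{h}_n\sT{h}_n$ and $\sD{h}_n\sD{h}_n$ vanish by anticommutativity over $\mathbb{Q}$, which matches exactly the fact that no $\Gamma$ carries a repeated overlined or underlined part. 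Hence $p_\Lambda \in \mathrm{span}_{\mathbb{Q}}\{h_\Gamma\}$, and since the $p_\Lambda$'s span $\mathscr{P}^{S_\infty}(\mathbb{Q})$, so do the $h_\Gamma$'s.

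To conclude, restrict to the finite-dimensional sector $(n|\sT{m},\sD{m})$: its dimension is $|\text{SPar}(n|\sT{m},\sD{m})|$ because the monomials form a basis, and the family $\{h_\Gamma\}_{\Gamma\in\text{SPar}(n|\sT{m},\sD{m})}$ has exactly this cardinality while spanning the module by the preceding step, so it must be linearly independent. Assembling over all sectors yields the claim. The main obstacle I expect is the careful sign bookkeeping during the reordering step and checking that the vanishing of fermionic squares lines up exactly with the shape constraints on superpartitions; both points reduce to the anticommutation rules and the ordering convention, and are mechanical once set up.
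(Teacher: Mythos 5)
Your argument is correct and follows essentially the same route as the paper's own proof, which likewise inverts the Newton-type relations of Block B in Table~\ref{tab2} to express the $p$-generators in terms of the $h$-generators and then concludes by a counting argument over the superpartition-indexed graded sectors. Your write-up simply makes explicit the inversion, the sign bookkeeping under reordering, and the matching of vanishing fermionic squares with the distinctness constraints, all of which the paper leaves implicit.
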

	\begin{proof}
The relation $h_n$ vs $p_n$ in Table~\ref{tab2} ensures 
that the $h$'s can be written in terms of the $p$'s and vice versa.
Since, as we have seen, the $p$'s form a multiplicative basis of $\mathscr{P}^{S_\infty}(\mathbb{Q})$, any
element of $\mathscr{P}^{S_\infty}(\mathbb{Q})$ can be written in terms of
$h$'s.  Given that the $h$'s are indexed by superpartitions (as are the $p$'s)
and that the
homogeneous degree of $h_\Lambda$ is the same as that
of $p_\Lambda$, the $h$'s also form a basis
of $\mathscr{P}^{S_\infty}(\mathbb{Q})$.
\end{proof}
	\begin{proposition}
	The elementary symmetric functions $e_\La$, for $\La\in\text{SPar}$, form a basis of
$\mathscr{P}^{S_\infty}(\mathbb{Q})$.
	\end{proposition}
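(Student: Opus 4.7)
The plan is to leverage the involution $\hat\omega$ introduced in equation \eqref{eq:involution}, together with the previously established fact that the homogeneous functions $h_\Lambda$ form a basis of $\mathscr{P}^{S_\infty}(\mathbb{Q})$ (Proposition \ref{prop:homoisabasis}). Since $\hat\omega$ was defined as a homomorphism on $\mathscr{P}^{S_\infty}$ by its action on the elementary generators, and since it was shown to be an involution, it is in particular a bijection of $\mathscr{P}^{S_\infty}(\mathbb{Q})$ onto itself.

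First I would observe that because $\hat\omega$ is a ring homomorphism, its action commutes with the multiplicative construction of the bases. Concretely, for any superpartition $\Lambda$ with constituent decomposition as in \eqref{multiA}--\eqref{multiB}, we have
\begin{align}
\hat\omega(h_\Lambda) = \hat\omega\bigl(\tilde h_{\Lambda_1} \cdots \tilde h_{\Lambda_\ell}\bigr) = \tilde e_{\Lambda_1} \cdots \tilde e_{\Lambda_\ell} = e_\Lambda,
\end{align}
since the action of $\hat\omega$ on each generator $h_n, \sT h_n, \sD h_n, \sTD h_n$ produces the corresponding $e$-generator (with the same ordering convention for products, so no reordering sign is introduced). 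Thus $\hat\omega$ sends the set $\{h_\Lambda : \Lambda \in \text{SPar}\}$ bijectively onto $\{e_\Lambda : \Lambda \in \text{SPar}\}$.

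Since $\hat\omega$ is a $\mathbb{Q}$-linear bijection of $\mathscr{P}^{S_\infty}(\mathbb{Q})$, and $\{h_\Lambda\}$ is a $\mathbb{Q}$-basis by Proposition \ref{prop:homoisabasis}, its image $\{e_\Lambda\}$ is also a $\mathbb{Q}$-basis of $\mathscr{P}^{S_\infty}(\mathbb{Q})$, which establishes the proposition. I do not anticipate any real obstacle here: the only point requiring care is to confirm that the product order conventions are preserved under $\hat\omega$ so that $\hat\omega(h_\Lambda) = e_\Lambda$ exactly (and not merely up to sign), but this is immediate from the definition of $\hat\omega$ as a homomorphism and the fixed left-to-right multiplication convention on the constituent parts. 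As an alternative route, one could instead invoke relation A-1 of Table \ref{tab2} (and its $\sT{\phantom{h}}, \sD{\phantom{h}}, \sTD{\phantom{h}}$ companions) to write each $e_n^{\{T\}}$ as a polynomial in the $h_n^{\{T\}}$'s with an invertible transition, but the involution argument is more concise.
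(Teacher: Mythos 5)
Your proof is correct and follows exactly the paper's route: the paper likewise deduces the result as an immediate consequence of the involution $\hat{\omega}$ of \eqref{eq:involution} together with Proposition \ref{prop:homoisabasis}. Your additional verification that $\hat\omega(h_\Lambda)=e_\Lambda$ respects the product-order convention is a reasonable elaboration of what the paper leaves implicit.
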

	\begin{proof}
		The proposition is an immediate consequence of the homomorphism $\hat{\omega}$ defined in \eqref{eq:involution} and Proposition \ref{prop:homoisabasis}.
	\end{proof}


\end{document}